\documentclass[12pt]{article}
\usepackage[utf8]{inputenc}

\title{The termination of Nielsen transformations applied to word equations with length constraints}
\author{Benjamin Przybocki \and Clark Barrett}
\date{\today}

\usepackage{graphicx}
\usepackage{amsmath,amsthm,amssymb,mathrsfs}
\usepackage{fullpage}
\usepackage{mathtools}
\usepackage{hyperref}
\usepackage{comment}
\usepackage[english]{babel}
\usepackage{float}
\usepackage{enumitem}
\usepackage{tikz}
\usetikzlibrary{shapes.geometric, arrows}
\tikzstyle{arrow} = [very thick,->,>=latex]
\tikzset{vertex/.style = {shape=circle,draw,minimum size=1.5em}}
\tikzset{edge/.style = {->,> = latex'}}

\hypersetup{
colorlinks = true,
citecolor = blue,
linkcolor = black
}

\theoremstyle{definition}
\newtheorem{definition}{Definition}[section]

\theoremstyle{plain}
\newtheorem{theorem}[definition]{Theorem}

\newtheorem{proposition}[definition]{Proposition}
\newtheorem{lemma}[definition]{Lemma}

\newtheorem{example}[definition]{Example}

\newtheorem{conjecture}[definition]{Conjecture}

\theoremstyle{remark}

\newcommand{\ang}[1]{\langle #1 \rangle}

\allowdisplaybreaks

\begin{document}

\maketitle

\begin{abstract}
    Nielsen transformations form the basis of a simple and widely used procedure for solving word equations. We make progress on the problem of determining when this procedure terminates in the presence of length constraints. To do this, we introduce \emph{extended word equations}, a mathematical model of a word equation with partial information about length constraints. We then define \emph{extended Nielsen transformations}, which adapt Nielsen transformations to the setting of extended word equations. We provide a partial characterization of when repeatedly applying extended Nielsen transformations to an extended word equation is guaranteed to terminate.
\end{abstract}

\tableofcontents

\section{Introduction}

This paper concerns word equations and their solutions, which are defined as follows. Given a set $X$, let $X^*$ denote the free monoid generated by $X$, where we use juxtaposition to represent the binary operation and $\varepsilon$ to denote the identity element. We fix an \emph{alphabet} $\Gamma$, whose elements are called \emph{letters}, and a set of \emph{variables} $\mathcal{X}$. A \emph{word equation} is a pair $(U_1,U_2)$, where $U_1,U_2 \in (\mathcal{X} \cup \Gamma)^*$. A \emph{solution} is a homomorphism $S : (\mathcal{X} \cup \Gamma)^* \to \Gamma^*$ such that $S(U_1) = S(U_2)$ and $S(\mathtt{x}) = \mathtt{x}$ for all $\mathtt{x} \in \Gamma$.

\begin{example}
    Let $\Gamma = \{\mathtt{a}, \mathtt{b}, \mathtt{c}\}$ and $\mathcal{X} = \{X, Y\}$. Then, $(\mathtt{a}X\mathtt{ba}, \mathtt{ac}Y\mathtt{a})$ is a word equation. One solution is induced by $S(X) = \mathtt{caca}$ and $S(Y) = \mathtt{acab}$, since $S(\mathtt{a}X\mathtt{ba}) = S(\mathtt{ac}Y\mathtt{a}) = \mathtt{acacaba}$.
\end{example}

Understanding the solutions to word equations is not only a theoretically fruitful mathematical problem but also a practical one for computer science given the ubiquity of data structures with concatenation, such as strings and sequences (see \cite{amadini2021, day2022}). Automatically finding solutions for word equations plays an important part in automated reasoning algorithms, especially when using satisfiability modulo theories (SMT) solvers to reason about strings and sequences. Makanin~\cite{makanin1977} proved that the problem of whether a word equation has a solution is decidable, but his algorithm is remarkably complicated. Now, there are simpler algorithms available, such as the one given by Je\.{z}~\cite{jez2016, jez2020}.

In real-world applications, it is often necessary to solve word equations in the presence of other constraints, such as length constraints. Since the decidability of word equations with length constraints is an open problem \cite{decidability}, implementations of algorithms for reasoning about strings and sequences are based on incomplete procedures. Most SMT solvers for strings and sequences (e.g., \cite{z3str3,rulebased,dpllstrings,sequences}) use some variant of \emph{Nielsen transformations} to reason about word equations. This procedure, which is described below, is very simple but non-terminating in general. Understanding cases where it does terminate in the presence of length constraints has hitherto received limited attention, which we hope to rectify in this paper.\footnote{In \cite{dpllstrings}, the authors say they ``would like to identify fragments where our calculus is terminating'', which essentially boils down to understanding when the repeated application of Nielsen transformations terminates.}

One of the problems that arises in this context is how to mathematically model word equations when partial information is known about the relative lengths of variables. The simplest solution would be to forget about length constraints and just work with raw word equations. But this is unsatisfactory because the length constraints are often essential for termination, since they limit which transformations are applicable. Our solution is to define extended word equations, which include partial information about the relative lengths of variables in the word equations. We then define extended Nielsen transformations, which are Nielsen transformations applied to extended word equations.

The primary purpose of this paper is to determine when the procedure of repeatedly applying extended Nielsen transformations to an extended word equation is guaranteed to terminate. Theorem~\ref{thm-sufficient} gives a sufficient condition for an extended word equation to be terminating. This condition is not quite necessary, but Theorem~\ref{thm-necessary} proves that the condition is necessary for a large class of extended word equations. We additionally prove some miscellaneous propositions about extended word equations in Section~\ref{sec-misc}.

Besides our definitions of extended word equations and extended Nielsen transformations, our primary conceptual innovation is the definition of the cut graph of an extended word equation (see Section~\ref{sec-cut-graph}), which encodes information about dependencies between variable instances in an extended word equation. The proofs of our results will illustrate the importance of the cut graph to understanding extended word equations. We hope that this concept will help shine further light on problems regarding word equations with length constraints.

\section{Preliminaries} \label{sec-prelim}

For our purposes, it is simpler to replace each letter in a word equation by a variable. This yields a word equation of the form $(U_1,U_2)$, where $U_1,U_2 \in \mathcal{X}^*$. Hereafter, by ``word equation'', we will always mean a word equation in which any letters have been replaced by variables. Of course, this changes the solutions of the word equation (in particular, $S(X) = \varepsilon$ for all $X \in \mathcal{X}$ is always a solution), but it simplifies the description of some of the procedures that follow.

We begin by recalling the standard definition of Nielsen transformations as a way of motivating the definition of extended Nielsen transformations. First, we need the following bit of notation: if $U_i = U_{i,1} U_{i,2} \cdots U_{i,k} \in \mathcal{X}^*$, where $i\in\{1,2\}$ and $k \ge 1$, then let $U_i^+ = U_{i,2} \cdots U_{i,k}$. Now, given a word equation $(U_1,U_2)$, write
\begin{align*}
    U_1 &= U_{1,1} U_{1,2} \cdots U_{1,m} \quad\text{and} \\
    U_2 &= U_{2,1} U_{2,2} \cdots U_{2,n},
\end{align*}
where $U_{i,j} \in \mathcal{X}$ for all $i,j$. If $m,n \neq 0$, then a \emph{Nielsen transformation} of $(U_1,U_2)$ transforms the word equation into the word equation $(U'_1,U'_2)$, where $U'_1$ and $U'_2$ are given by one of the following three cases:
\begin{enumerate}[label=\Roman*.]
    \item $U'_1 = T(U_1)^+$ and $U'_2 = T(U_2)^+$, where $T$ is the endomorphism on $\mathcal{X}^*$ given by $T(U_{2,1}) = U_{1,1}$ and the identity function on other elements of $\mathcal{X}$.
    \item $U'_1 = T(U_1)^+$ and $U'_2 = T(U_2)^+$, where $T$ is the endomorphism on $\mathcal{X}^*$ given by $T(U_{1,1}) = U_{2,1} U_{1,1}$ and the identity function on other elements of $\mathcal{X}$.
    \item $U'_1 = T(U_1)^+$ and $U'_2 = T(U_2)^+$, where $T$ is the endomorphism on $\mathcal{X}^*$ given by $T(U_{2,1}) = U_{1,1} U_{2,1}$ and the identity function on other elements of $\mathcal{X}$.
\end{enumerate}
If $U_{1,1} = U_{2,1}$, then we must apply case I when applying a Nielsen transformation. Otherwise, we have a choice of which of the three cases to apply; thus, the process is non-deterministic. If $m = 0$ or $n = 0$, then we cannot apply a Nielsen transformation.

The definition of Nielsen transformations can be motivated as follows. Given $w \in \Gamma^*$, let $|w|$ be the length of $w$. Given a word equation $(U_1, U_2)$ and a solution $S$, we reason by cases. If $|S(U_{1,1})| = |S(U_{2,1})|$, then we must have $S(U_{1,1}) = S(U_{2,1})$, so we can replace the variable $U_{2,1}$ by $U_{1,1}$ throughout the word equation. We can then remove the first variable on both sides of the word equation, since they are now both $U_{1,1}$. This corresponds to case I of the Nielsen transformation. If $|S(U_{1,1})| > |S(U_{2,1})|$, then we must have that $S(U_{2,1})$ is a prefix of $S(U_{1,1})$. Thus, we can replace the variable $U_{1,1}$ by $U_{2,1} U_{1,1}$ throughout the word equation (after this replacement, the new $U_{1,1}$ represents a suffix of the old $U_{1,1}$). We can then remove the first variable on both sides of the word equation, since they are now both $U_{2,1}$. This corresponds to case II of the Nielsen transformation. Similarly, the case where $|S(U_{1,1})| < |S(U_{2,1})|$ corresponds to case III of the Nielsen transformation.

\begin{example}
    The three cases of the Nielsen transformation applied to $(\mathit{XY}, \mathit{ZX})$ yield, respectively,
    \begin{enumerate}[label=\Roman*.]
        \item $(Y,X)$,
        \item $(\mathit{XY},\mathit{ZX})$, and
        \item $(Y,\mathit{ZX})$.
    \end{enumerate}
\end{example}

To describe relative lengths in extended word equations, we will need the notion of a strict weak order.

\begin{definition}
    A \emph{strict partial order} is a binary relation that is irreflexive and transitive. If $<$ is a strict partial order, then we write $a \approx b$ if $a \nless b$ and $b \nless a$. We write $a \lesssim b$ if $a < b$ or $a \approx b$. A \emph{strict weak order} is a strict partial order $<$ such that its corresponding relation $\approx$ is an equivalence relation.
\end{definition}

\begin{definition}
    Let $<$ be a strict weak order. We write $a \lessdot b$ if $a < b$ and there is no $c \notin \{a,b\}$ such that $a < c < b$. If $a < b$ and there is no $c \notin \{a,b\}$ such that $a \lesssim c \lesssim b$, then we say that $a$ is \emph{adjacent} to $b$ (and vice versa).
\end{definition}

If $a$ is adjacent to $b$ with $a < b$, then there is a strict weak order $<'$ obtained by swapping the order of $a$ and $b$ in $<$; that is, $c <' d$ if and only if $(c,d) = (b,a)$ or else $c < d$ and $(c,d) \neq (a,b)$.

Given $m,n \in \mathbb{N}$, let
\begin{align*}
    [m,n] & = \{\ell \in \mathbb{N} : m \le \ell \le n\} \quad\text{and} \\
    [n] & = [1,n].
\end{align*}

Given a set $X$, let $\mathbb{N}[X]$ be the free commutative monoid generated by $X$ (written additively). Given $x \in X$ and $a \in \mathbb{N}[X]$, we write $x \in a$ if the coefficient of $x$ in $a$ is nonzero. Given $a,b \in \mathbb{N}[X]$, we write $a \subseteq b$ if there is some $c \in \mathbb{N}[X]$ such that $b = a+c$; we write $a \subsetneq b$ if $a \subseteq b$ and $a \neq b$. We consider sets of inequalities over $\mathbb{N}[X]$ of the form
\[
    \{a_i \le b_i \mid i \in [m]\} \cup \{c_i < d_i \mid i \in [n]\},
\]
where $a_i, b_i, c_i, d_i \in \mathbb{N}[X]$. We say that such a set of inequalities \emph{has a solution over $\mathbb{Z}_{>0}$} if there is a homomorphism $L : \mathbb{N}[X] \to (\mathbb{N}, +)$ such that $L(x) > 0$ for all $x \in X$ and $L(a_i) \le L(b_i)$ for all $i \in [m]$ and $L(c_i) < L(d_i)$ for all $i \in [n]$.

We use some terminology from graph theory. All graphs in this paper are directed with loops allowed. A \emph{walk} in a graph $G = (V,E)$ is a nonempty sequence of vertices $(v_1, v_2, \dots, v_n)$ such that $(v_i, v_{i+1}) \in E$ for all $i \in [n-1]$. The \emph{length} of a walk $(v_1, v_2, \dots, v_n)$ is $n-1$. A \emph{path} is a walk in which all vertices are distinct. A \emph{cycle} is a walk of nonzero length of the form $(v_1, v_2, \dots, v_{n-1}, v_1)$. A graph is \emph{acyclic} if it contains no cycles; otherwise, it is \emph{cyclic}.

\section{Extended word equations}

It's now time to introduce extended word equations. We start with some motivation before giving a formal definition. Consider the word equation $(\mathit{XXY}, \mathit{ZWZ})$. We want a representation of this word equation that encodes partial information about the relative lengths of the variables. One natural way to do this would be with a diagram like the one in Figure~\ref{fig-xxy-zwz}.

\begin{figure}[H]
        \centering
        \begin{tikzpicture}
            \draw (0,0) rectangle (2,1);
            \draw (2,0) rectangle (5,1);
            \draw (5,0) rectangle (7,1);
            \draw (0,1) rectangle (3,2);
            \draw (3,1) rectangle (6,2);
            \draw (6,1) rectangle (7,2);

            \node at (1.5,1.5) {$X$};
            \node at (4.5,1.5) {$X$};
            \node at (6.5,1.5) {$Y$};
            \node at (1,0.5) {$Z$};
            \node at (3.5,0.5) {$W$};
            \node at (6,0.5) {$Z$};
        \end{tikzpicture}
        \caption{A representation of the word equation $(\mathit{XXY}, \mathit{ZWZ})$}
        \label{fig-xxy-zwz}
\end{figure}
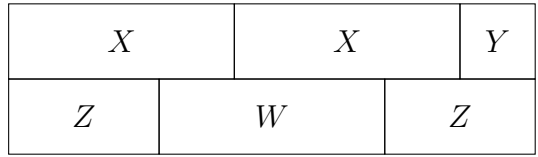

In this diagram, we have assumed that $Z$ is shorter than $X$, which is shorter than $\mathit{ZW}$, which is shorter than $\mathit{XX}$. Of course, it's necessary that $\mathit{XXY}$ be the same length as $\mathit{ZWZ}$, since these are supposed to be equal after applying a solution to both sides. Importantly, we do not assume that a diagram like the one in Figure~\ref{fig-xxy-zwz} is drawn to scale, so it should be considered equivalent to any other diagram making the same assumptions, such as the one in Figure~\ref{fig-equivalent}.

\begin{figure}[H]
        \centering
        \begin{tikzpicture}
            \draw (0,0) rectangle (2.75,1);
            \draw (2.75,0) rectangle (4.25,1);
            \draw (4.25,0) rectangle (7,1);
            \draw (0,1) rectangle (3.1,2);
            \draw (3.1,1) rectangle (6.2,2);
            \draw (6.2,1) rectangle (7,2);

            \node at (1.55,1.5) {$X$};
            \node at (4.65,1.5) {$X$};
            \node at (6.6,1.5) {$Y$};
            \node at (1.375,0.5) {$Z$};
            \node at (3.5,0.5) {$W$};
            \node at (5.625,0.5) {$Z$};
        \end{tikzpicture}
        \caption{An equivalent representation of the word equation $(\mathit{XXY}, \mathit{ZWZ})$}
        \label{fig-equivalent}
\end{figure}
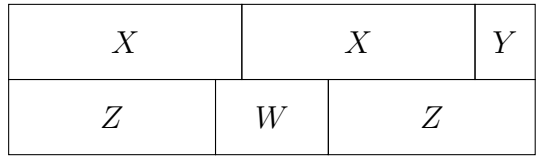

In order to mathematically reason about these diagrammatic representations of word equations, we need to define a data structure that encodes the combinatorial data the diagrams represent. This is what the definition of extended word equations accomplishes.

Given a word equation $(U_1,U_2)$, with
\begin{align*}
    U_1 &= U_{1,1} U_{1,2} \cdots U_{1,m} \quad\text{and} \\
    U_2 &= U_{2,1} U_{2,2} \cdots U_{2,n},
\end{align*}
where $U_{i,j} \in \mathcal{X}$ for all $i,j$, the set of \emph{boundaries} is the set $B = B_1 \cup B_2$, where $B_1 = \{(1,j) \mid 1 \le j \le m\}$ and $B_2 = \{(2,j) \mid 1 \le j \le n\}$.

Intuitively, boundaries represent the right-hand sides of variable instances in diagrams like the ones above. Thus, for the word equation $(\mathit{XXY}, \mathit{ZWZ})$, the boundary $(1,1)$ corresponds to the right border of the first instance of $X$ and the boundary $(2,1)$ corresponds to the right border of the first instance of $Z$. We will impose an order on the boundaries to represent the relative positions of the variable borders. Thus, for the diagrams above, we should have $(1,1) > (2,1)$, since the right-hand side of the first $X$ instance is to the right of the right-hand side of the first $Z$ instance.

\begin{definition}
    Let $X = \{X_1, \dots, X_m\}$ and $Y = \{Y_1, \dots, Y_n\}$ be disjoint totally ordered sets such that $X_i < X_{i+1}$ for $i\in[m-1]$ and $Y_i < Y_{i+1}$ for $i\in[n-1]$. Then, an \emph{interleaving} of $X$ and $Y$ is a strictly weakly ordered set $X \cup Y$ extending the orders on $X$ and $Y$ such that $X_m \approx Y_n$.
\end{definition}

We totally order $B_1$ and $B_2$ so that $(i,j) < (i,j+1)$ for each relevant $i,j$. A \emph{boundary order} $<$ on $B$ is an interleaving of $B_1$ and $B_2$. We will occasionally apply the boundary order to the tuples $(1,0)$ and $(2,0)$ with the convention that $(1,0) \approx (2,0)$ and $(i,0) < (i,1)$ for $i \in \{1,2\}$.

We can now define extended word equations as follows.
\begin{definition}
    An \emph{extended word equation} $(U_1, U_2, <)$ is a word equation $(U_1, U_2)$ together with a boundary order $<$.
\end{definition}

Before concluding this discussion, we need to consider a problematic type of boundary order. Consider the extended word equation represented by the diagram in Figure~\ref{fig-problematic}.

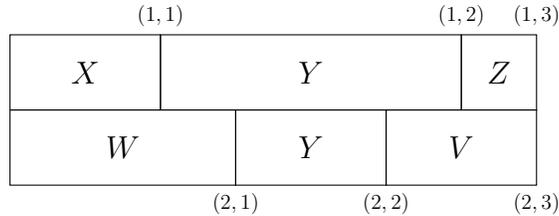
\begin{figure}[H]
        \centering
        \begin{tikzpicture}
            \draw (0,0) rectangle (3,1);
            \draw (3,0) rectangle (5,1);
            \draw (5,0) rectangle (7,1);
            \draw (0,1) rectangle (2,2);
            \draw (2,1) rectangle (6,2);
            \draw (6,1) rectangle (7,2);

            \node at (1,1.5) {$X$};
            \node at (4,1.5) {$Y$};
            \node at (6.5,1.5) {$Z$};
            \node at (1.5,0.5) {$W$};
            \node at (4,0.5) {$Y$};
            \node at (6,0.5) {$V$};

            \node[scale=0.67] at (2,2.25) {$(1,1)$};
            \node[scale=0.67] at (6,2.25) {$(1,2)$};
            \node[scale=0.67] at (7,2.25) {$(1,3)$};
            \node[scale=0.67] at (3,-0.25) {$(2,1)$};
            \node[scale=0.67] at (5,-0.25) {$(2,2)$};
            \node[scale=0.67] at (7,-0.25) {$(2,3)$};
        \end{tikzpicture}
        \caption{A problematic extended word equation}
        \label{fig-problematic}
\end{figure}

As the diagram makes clear, there is something problematic with this boundary order: it implies that the two instances of $Y$ are different lengths. We need some way of ruling out these problematic boundary orders. To that end, we make the following definition.
\begin{definition}
    Let $(U_1,U_2,<)$ be an extended word equation with boundaries $B$. Write
    \begin{align*}
        U_1 &= U_{1,1} U_{1,2} \cdots U_{1,m} \quad\text{and} \\
        U_2 &= U_{2,1} U_{2,2} \cdots U_{2,n},
    \end{align*}
    where $U_{i,j} \in \mathcal{X}$ for all $i,j$. We say that $(U_1,U_2,<)$ is a \emph{coherent extended word equation} if there is a homomorphism $L : \mathcal{X}^* \to (\mathbb{N}, +)$ such that $L(X) > 0$ for all $X \in \mathcal{X}$ and
    \[
        (i,j) < (i',j') \Longleftrightarrow L(U_{i,1} U_{i,2} \cdots U_{i,j}) < L(U_{i',1} U_{i',2} \cdots U_{i',j'}).
    \]
    for all $(i,j), (i',j') \in B$. Otherwise, $(U_1,U_2,<)$ is an \emph{incoherent extended word equation}.
\end{definition}
Intuitively, $L(X)$ is the length of $X$ in a hypothetical solution. One can check whether an extended word equation is coherent using a decision procedure for linear arithmetic (as is done in SMT solvers).

We also define the \emph{dual} of a word equation, which is the word equation obtained by swapping the two sides of the equation.

\begin{definition}
    The \emph{dual} of $(U_1, U_2, <)$ is $(U_2, U_1, <')$, where
    \[
        (i,j) <' (i',j') \Longleftrightarrow (3-i,j) < (3-i',j').
    \]
\end{definition}

\noindent
We say that $(U_1, U_2, <)$ is \emph{nontrivial} if $U_1,U_2 \neq \varepsilon$.

\section{Extended Nielsen transformations} \label{sec-ext-nielsen}

Now, we adapt Nielsen transformations from word equations to extended word equations. We start with an example. Consider the extended word equation from Figure~\ref{fig-xxy-zwz}. We know that $Z$ is shorter than $X$, so we want to split $X$ into $\mathit{ZX}$. This corresponds to a case II Nielsen transformation, which applied to $(\mathit{XXY}, \mathit{ZWZ})$ yields $(\mathit{XZXY},\mathit{WZ})$. The possible boundary orders on $(\mathit{XZXY},\mathit{WZ})$ should be constrained by the boundary order on $(\mathit{XXY}, \mathit{ZWZ})$. Thus, we should expect that an extended Nielsen transformation applied to the extended word equation from Figure~\ref{fig-xxy-zwz} should yield one of the three extended word equations in Figure~\ref{fig-three-word-eqns}.

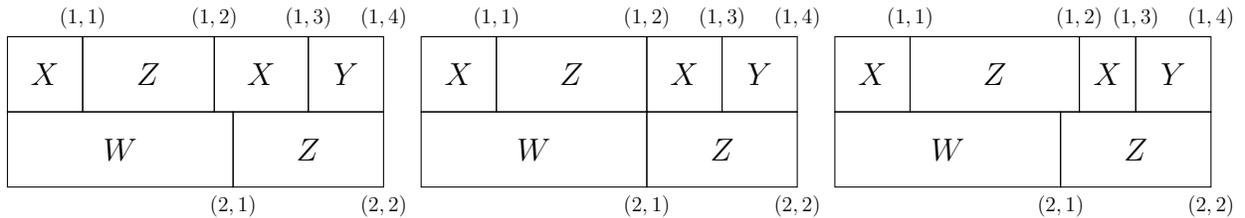
\begin{figure}[H]
        \centering
        \begin{tikzpicture}
            \draw (-5.5,0) rectangle (-2.5,1);
            \draw (-2.5,0) rectangle (-0.5,1);
            \draw (-5.5,1) rectangle (-4.5,2);
            \draw (-4.5,1) rectangle (-2.75,2);
            \draw (-2.75,1) rectangle (-1.5,2);
            \draw (-1.5,1) rectangle (-0.5,2);

            \node at (-5,1.5) {$X$};
            \node at (-3.625,1.5) {$Z$};
            \node at (-2.125,1.5) {$X$};
            \node at (-1,1.5) {$Y$};
            \node at (-4,0.5) {$W$};
            \node at (-1.5,0.5) {$Z$};

            \node[scale=0.67] at (-4.5,2.25) {$(1,1)$};
            \node[scale=0.67] at (-2.75,2.25) {$(1,2)$};
            \node[scale=0.67] at (-1.5,2.25) {$(1,3)$};
            \node[scale=0.67] at (-0.5,2.25) {$(1,4)$};
            \node[scale=0.67] at (-2.5,-0.25) {$(2,1)$};
            \node[scale=0.67] at (-0.5,-0.25) {$(2,2)$};
        
            \draw (0,0) rectangle (3,1);
            \draw (3,0) rectangle (5,1);
            \draw (0,1) rectangle (1,2);
            \draw (1,1) rectangle (3,2);
            \draw (3,1) rectangle (4,2);
            \draw (4,1) rectangle (5,2);

            \node at (0.5,1.5) {$X$};
            \node at (2,1.5) {$Z$};
            \node at (3.5,1.5) {$X$};
            \node at (4.5,1.5) {$Y$};
            \node at (1.5,0.5) {$W$};
            \node at (4,0.5) {$Z$};

            \node[scale=0.67] at (1,2.25) {$(1,1)$};
            \node[scale=0.67] at (3,2.25) {$(1,2)$};
            \node[scale=0.67] at (4,2.25) {$(1,3)$};
            \node[scale=0.67] at (5,2.25) {$(1,4)$};
            \node[scale=0.67] at (3,-0.25) {$(2,1)$};
            \node[scale=0.67] at (5,-0.25) {$(2,2)$};

            \draw (5.5,0) rectangle (8.5,1);
            \draw (8.5,0) rectangle (10.5,1);
            \draw (5.5,1) rectangle (6.5,2);
            \draw (6.5,1) rectangle (8.75,2);
            \draw (8.75,1) rectangle (9.5,2);
            \draw (9.5,1) rectangle (10.5,2);

            \node at (6,1.5) {$X$};
            \node at (7.625,1.5) {$Z$};
            \node at (9.125,1.5) {$X$};
            \node at (10,1.5) {$Y$};
            \node at (7,0.5) {$W$};
            \node at (9.5,0.5) {$Z$};

            \node[scale=0.67] at (6.5,2.25) {$(1,1)$};
            \node[scale=0.67] at (8.75,2.25) {$(1,2)$};
            \node[scale=0.67] at (9.5,2.25) {$(1,3)$};
            \node[scale=0.67] at (10.5,2.25) {$(1,4)$};
            \node[scale=0.67] at (8.5,-0.25) {$(2,1)$};
            \node[scale=0.67] at (10.5,-0.25) {$(2,2)$};
        \end{tikzpicture}
        \caption{Three extended word equations}
        \label{fig-three-word-eqns}
\end{figure}

Making this notion of an extended Nielsen transformation precise requires some work. Let $(U_1, U_2, <)$ be a coherent extended word equation. Our goal is to define what it means for some extended word equation $(U'_1, U'_2, <')$ to be an extended Nielsen transformation of $(U_1, U_2, <)$. Specifying $(U'_1, U'_2)$ is just a matter of applying the right kind of Nielsen transformation:
\begin{enumerate}[label=\Roman*.]
    \item If $(1,1) \approx (2,1)$, then $(U'_1, U'_2)$ needs to be the case I Nielsen transformation of $(U_1, U_2)$.
    \item If $(1,1) > (2,1)$, then $(U'_1, U'_2)$ needs to be the case II Nielsen transformation of $(U_1, U_2)$.
    \item If $(1,1) < (2,1)$, then $(U'_1, U'_2)$ needs to be the case III Nielsen transformation of $(U_1, U_2)$.
\end{enumerate}
It remains to specify the boundary order $<'$.

Suppose first that we are in case I. In this case, an extended Nielsen transformation should behave as in the example in Figure~\ref{fig-extended-case-1}.

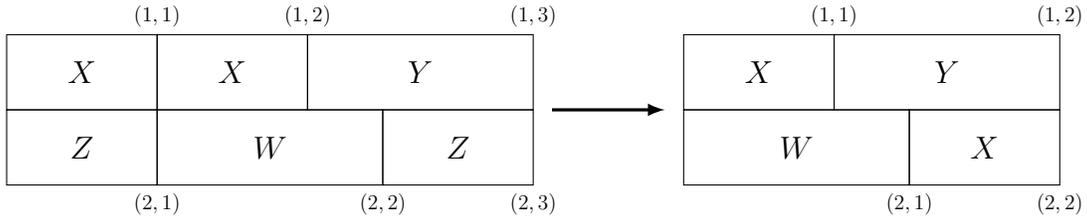
\begin{figure}[H]
        \centering
        \begin{tikzpicture}
            \draw (0,0) rectangle (2,1);
            \draw (2,0) rectangle (5,1);
            \draw (5,0) rectangle (7,1);
            \draw (0,1) rectangle (2,2);
            \draw (2,1) rectangle (4,2);
            \draw (4,1) rectangle (7,2);

            \node at (1,1.5) {$X$};
            \node at (3,1.5) {$X$};
            \node at (5.5,1.5) {$Y$};
            \node at (1,0.5) {$Z$};
            \node at (3.5,0.5) {$W$};
            \node at (6,0.5) {$Z$};

            \node[scale=0.67] at (2,2.25) {$(1,1)$};
            \node[scale=0.67] at (4,2.25) {$(1,2)$};
            \node[scale=0.67] at (7,2.25) {$(1,3)$};
            \node[scale=0.67] at (2,-0.25) {$(2,1)$};
            \node[scale=0.67] at (5,-0.25) {$(2,2)$};
            \node[scale=0.67] at (7,-0.25) {$(2,3)$};

            \draw (9,0) rectangle (12,1);
            \draw (12,0) rectangle (14,1);
            \draw (9,1) rectangle (11,2);
            \draw (11,1) rectangle (14,2);

            \node at (10,1.5) {$X$};
            \node at (12.5,1.5) {$Y$};
            \node at (10.5,0.5) {$W$};
            \node at (13,0.5) {$X$};

            \node[scale=0.67] at (11,2.25) {$(1,1)$};
            \node[scale=0.67] at (14,2.25) {$(1,2)$};
            \node[scale=0.67] at (12,-0.25) {$(2,1)$};
            \node[scale=0.67] at (14,-0.25) {$(2,2)$};

            \draw [arrow] (7.25,1) -- (8.75,1);
        \end{tikzpicture}
        \caption{How an extended Nielsen transformation should behave in case I}
        \label{fig-extended-case-1}
\end{figure}
We see that, in this case, the boundary $(i,j)$ in $(U'_1, U'_2)$ corresponds to the boundary $(i,j+1)$ in $(U_1, U_2)$. Thus, in case I, the boundary order $<'$ is given by
\[
    (i,j) <' (i',j') \Longleftrightarrow (i, j+1) < (i', j'+1).
\]

Next, suppose that we are in case II. In this case, an extended Nielsen transformation should behave as in the example in Figure~\ref{fig-extended-case-2}.

\begin{figure}[H]
        \centering
        \begin{tikzpicture}
            \draw (0,0) rectangle (2,1);
            \draw (2,0) rectangle (5,1);
            \draw (5,0) rectangle (7,1);
            \draw (0,1) rectangle (3,2);
            \draw (3,1) rectangle (6,2);
            \draw (6,1) rectangle (7,2);

            \node at (1.5,1.5) {$X$};
            \node at (4.5,1.5) {$X$};
            \node at (6.5,1.5) {$Y$};
            \node at (1,0.5) {$Z$};
            \node at (3.5,0.5) {$W$};
            \node at (6,0.5) {$Z$};

            \node[scale=0.67] at (3,2.25) {$(1,1)$};
            \node[scale=0.67] at (6,2.25) {$(1,2)$};
            \node[scale=0.67] at (7,2.25) {$(1,3)$};
            \node[scale=0.67] at (2,-0.25) {$(2,1)$};
            \node[scale=0.67] at (5,-0.25) {$(2,2)$};
            \node[scale=0.67] at (7,-0.25) {$(2,3)$};

            \draw (9,0) rectangle (12,1);
            \draw (12,0) rectangle (14,1);
            \draw (9,1) rectangle (10,2);
            \draw (10,1) rectangle (12.25,2);
            \draw (12.25,1) rectangle (13,2);
            \draw (13,1) rectangle (14,2);

            \node at (9.5,1.5) {$X$};
            \node at (11.125,1.5) {$Z$};
            \node at (12.625,1.5) {$X$};
            \node at (13.5,1.5) {$Y$};
            \node at (10.5,0.5) {$W$};
            \node at (13,0.5) {$Z$};

            \node[scale=0.67] at (10,2.25) {$(1,1)$};
            \node[scale=0.67] at (12.25,2.25) {$(1,2)$};
            \node[scale=0.67] at (13,2.25) {$(1,3)$};
            \node[scale=0.67] at (14,2.25) {$(1,4)$};
            \node[scale=0.67] at (12,-0.25) {$(2,1)$};
            \node[scale=0.67] at (14,-0.25) {$(2,2)$};

            \draw [arrow] (7.25,1) -- (8.75,1);
        \end{tikzpicture}
        \caption{How an extended Nielsen transformation should behave in case II}
        \label{fig-extended-case-2}
\end{figure}
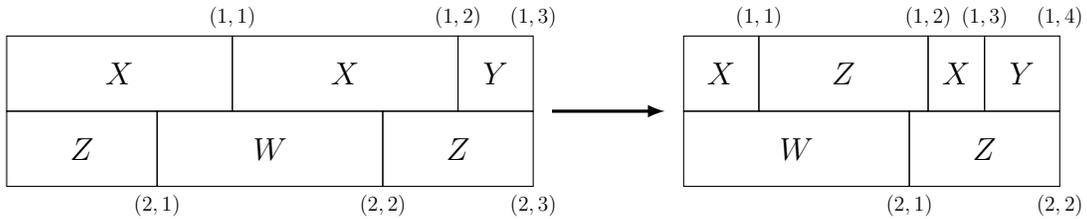

Here, it takes more work to describe how boundaries of $(U'_1, U'_2)$ correspond to boundaries of $(U_1, U_2)$. This is because new variable instances have been introduced in $(U'_1, U'_2)$, and this throws off the indexing of the boundaries. To fix this, we need a map $\mu$ which ensures that the boundary $(i,j)$ in $(U_1, U_2)$ corresponds to the boundary $(i, \mu(i,j))$ in $(U'_1, U'_2)$. The right definition is
\[
    \mu(i,j) = j+|\{(i,j') \mid j' \le j \ \text{and} \ U_{i,j'} = U_{1,1}\}|-1.
\]
We also define a map $\nu$ so that $(i,j) \mapsto (i, \nu(i,j))$ is a left inverse of $(i,j) \mapsto (i, \mu(i,j))$:\footnote{The inverse property is proven later in Lemma~\ref{lem-left-inverse}.}
\[
    \nu(i,j) = j-\left|\left\{(i,j') \mathrel{}\middle|\mathrel{} j' \le j \ \text{and} \ U'_{i,j'} = U_{1,1}\right\}\right| + 1.
\]

Let $B = B_1 \cup B_2$ be the set of boundaries of $(U_1, U_2)$ and let $B' = B'_1 \cup B'_2$ be the set of boundaries of $(U'_1, U'_2)$. Let $B^- = B \setminus \{(2,1)\}$, which is the set of boundaries of $(U_1, U_2)$ that correspond to a boundary of $(U'_1, U'_2)$. Let $B'^- = \{(i,\mu(i,j)) \mid (i,j) \in B^-\}$, which is the set of boundaries of $(U'_1, U'_2)$ that came from a boundary of $(U_1, U_2)$. Finally, let $B'^+ = B' \setminus B'^-$, which is the set of new boundaries of $(U'_1, U'_2)$.

\begin{example}
    For the pair of extended word equations in Figure~\ref{fig-extended-case-2}, we have
    \begin{align*}
        B^- &= \{(1,1), (1,2), (1,3), (2,2), (2,3)\} \\
        B'^- &= \{(1,1), (1,3), (1,4), (2,1), (2,2)\} \\
        B'^+ &= \{(1,2)\}.
    \end{align*}
\end{example}
\noindent
It is not hard to see that $\mu$ is at least 1 on $B^-$ and that $\nu$ is at least 1 on $B'$.

Now, with these definitions in hand, we can say that the boundary order $<'$ is any interleaving of $B'_1$ and $B'_2$ such that for all $(i,j), (i',j') \in B'^-$, we have
\[
    (i,j) <' (i',j') \Longleftrightarrow (i,\nu(i,j)) < (i',\nu(i',j')).
\]
The intuition here is that the order on the boundaries in $B'^-$ is fixed by the order on the corresponding boundaries of $B^-$. For the remaining boundaries (i.e., those in $B'^+$), we have some choice in how to order them.

Case III is handled similarly to case II. Formally, we reduce to case II using duality. This concludes the motivation for the definition of extended Nielsen transformations.
\begin{definition}
    An \emph{extended Nielsen transformation} of a nontrivial coherent extended word equation $(U_1, U_2, <)$ is an extended word equation $(U'_1, U'_2, <')$ given by one of the following three cases:
    \begin{enumerate}[label=\Roman*.]
        \item If $(1,1) \approx (2,1)$, then $(U'_1, U'_2)$ is the case I Nielsen transformation of $(U_1, U_2)$, and $<'$ is given by
        \[
            (i,j) <' (i',j') \Longleftrightarrow (i, j+1) < (i', j'+1).
        \]
        \item If $(1,1) > (2,1)$, then $(U'_1, U'_2)$ is the case II Nielsen transformation of $(U_1, U_2)$. Let $B'_1$, $B'_2$, $B'^-$, and $\nu$ be defined as in the above discussion. Then, $<'$ is an interleaving of $B'_1$ and $B'_2$ such that for all $(i,j), (i',j') \in B'^-$, we have
        \[
            (i,j) <' (i',j') \Longleftrightarrow (i,\nu(i,j)) < (i',\nu(i',j')).
        \]
        \item If $(1,1) < (2,1)$, then $(U'_1, U'_2, <')$ is obtained by applying a case II extended Nielsen transformation to the dual of $(U_1, U_2, <)$ and then taking the dual again.
    \end{enumerate}
    If $(U'_1, U'_2, <')$ is coherent, this transformation is a \emph{coherent extended Nielsen transformation}; otherwise, it is an \emph{incoherent extended Nielsen transformation}.
\end{definition}

\subsection{Termination of extended Nielsen transformations}

We can finally ask the central question of this paper: if we repeatedly apply coherent extended Nielsen transformations to $(U_1, U_2, <)$, will this process always terminate? There is an ambiguity in this question depending on how the procedure is implemented. The simplest way to implement it is to simply apply a coherent extended Nielsen transformation repeatedly, only keeping track of the current extended word equation. If this process always terminates, we say that $(U_1,U_2, <)$ is \emph{strongly terminating}. A more sophisticated procedure remembers which extended word equations have appeared so far, and only applies a coherent extended Nielsen transformation if it yields a \emph{new} extended word equation. If this process always terminates, we say that $(U_1,U_2, <)$ is \emph{weakly terminating}.

We now describe this more formally. Let $\mathcal{G}$ be the directed graph $(\mathcal{V}, \mathcal{E})$, where $\mathcal{V}$ is the set of all coherent extended word equations (over some fixed set of variables $\mathcal{X}$), and $\mathcal{E}$ is the relation that holds between $(U_1,U_2,<)$ and $(U'_1,U'_2,<')$ when the latter is the result of applying an extended Nielsen transformation to the former. Then, let $\mathcal{G}_{U_1,U_2,<}$ be the subgraph of $\mathcal{G}$ that can be reached from $(U_1,U_2,<)$.\footnote{An analogous graph has been studied in the setting of word equations without length constraints \cite{structure}.}

A coherent extended word equation $(U_1, U_2, <)$ is \emph{strongly terminating} if there are no infinite walks in $\mathcal{G}_{U_1,U_2,<}$. A coherent extended word equation $(U_1, U_2, <)$ is \emph{weakly terminating} if there are no infinite paths in $\mathcal{G}_{U_1,U_2,<}$. By K\H{o}nig's lemma, $(U_1,U_2,<)$ is strongly terminating if and only if $\mathcal{G}_{U_1,U_2,<}$ is finite and acyclic; $(U_1,U_2,<)$ is weakly terminating if and only if $\mathcal{G}_{U_1,U_2,<}$ is finite.

For the remainder of this paper, we will only be concerned with strong termination, both because it is simpler to study and because it more closely resembles the implementations in the SMT solvers that motivated this investigation. Thus, we hereafter write \emph{terminating} instead of \emph{strongly terminating}.

\section{The cut graph} \label{sec-cut-graph}

Given a single extended word equation $(U_1, U_2, <)$, we define a corresponding graph whose properties turn out to be the key to understanding whether $(U_1, U_2, <)$ is terminating.
In an extended word equation, some variable instances overlap each other. When this happens, we say that their boundaries \emph{cut} each other.
\begin{definition}
    Given an extended word equation $(U_1, U_2, <)$, we say that the boundary $(i,j)$ \emph{cuts} $(i',j')$ if
    \begin{align*}
        (i,j) &\neq (i',j'), \\
        (i,j-1) &< (i',j'), \quad\text{and} \\
        (i',j'-1) &< (i,j).
    \end{align*}
\end{definition}

\noindent
We also define a relation that holds between two instances of the same variable in different positions. When this happens, we say that their boundaries \emph{mirror} each other.
\begin{definition}
    Given an extended word equation $(U_1, U_2, <)$, we say that the boundary $(i,j)$ \emph{mirrors} $(i',j')$ if $U_{i,j} = U_{i',j'}$ and either $(i,j) \not\approx (i',j')$ or $(i,j-1) \not\approx (i',j'-1)$
\end{definition}

\noindent
Note that if $(U_1, U_2, <)$ is coherent, then $(i,j)$ mirrors $(i',j')$ if and only if $U_{i,j} = U_{i',j'}$ and $(i,j) \not\approx (i',j')$. We can now define the graph mentioned earlier.

\begin{definition}
    Given an extended word equation $(U_1, U_2, <)$ with boundaries $B$, we define its \emph{cut graph} $G_{U_1,U_2,<} = (B,E)$ as follows:
    \[
        E = \{((i,j), (i',j')) \mid (i,j) \ \text{cuts some} \ (i'',j'') \ \text{and} \ (i'',j'') \ \text{mirrors} \ (i',j')\}.
    \]
\end{definition}

\noindent
Figure~\ref{fig-cut-graph} shows the cut graph of the extended word equation from Figure~\ref{fig-xxy-zwz}.

\begin{figure}[H]
        \centering
        \begin{tikzpicture}
            \node[vertex] (x1) at (0,3) {$(1,1)$};
            \node[vertex] (x2) at (3,3) {$(1,2)$};
            \node[vertex] (y) at (6,3) {$(1,3)$};
            \node[vertex] (z1) at (0,0) {$(2,1)$};
            \node[vertex] (w) at (3,0) {$(2,2)$};
            \node[vertex] (z2) at (6,0) {$(2,3)$};

            \draw[edge] (x1) to[bend left = 10] (z2);
            \draw[edge] (x2) to[bend left = 10] (z1);
            \draw[edge] (y) to (z1);
            \draw[edge] (z1) to[bend left = 10] (x2);
            \draw[edge] (w) to (x1);
            \draw[edge] (w) to (x2);
            \draw[edge] (z2) to[bend left = 10] (x1);
        \end{tikzpicture}
        \caption{The cut graph of the extended word equation from Figure~\ref{fig-xxy-zwz}}
        \label{fig-cut-graph}
\end{figure}
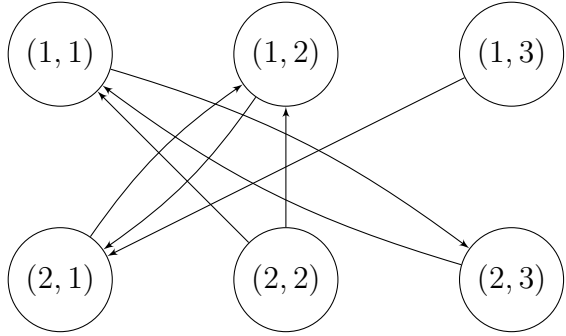

\section{A sufficient condition for termination}

The purpose of this section is to prove the following theorem.

\begin{theorem} \label{thm-sufficient}
    Let $(U_1,U_2,<)$ be a coherent extended word equation. If $G_{U_1,U_2,<}$ is acyclic, then $(U_1,U_2,<)$ is terminating. Furthermore, $(U_1,U_2,<)$ terminates after the application of at most $2^N$ coherent extended Nielsen transformations, where $N = |U_1| + |U_2|$.
\end{theorem}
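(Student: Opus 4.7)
My approach is to define a non-negative integer-valued potential $\Phi$ on coherent extended word equations with acyclic cut graphs, such that $\Phi$ strictly decreases under every coherent extended Nielsen transformation and $\Phi(U_1,U_2,<) < 2^N$. Since strict descent of a non-negative integer must terminate, this simultaneously yields termination and the quantitative bound.

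The first step is an acyclicity-preservation lemma: if $G_{U_1,U_2,<}$ is acyclic and $(U'_1,U'_2,<')$ is a coherent extended Nielsen transformation of $(U_1,U_2,<)$, then $G_{U'_1,U'_2,<'}$ is also acyclic. I would prove this by case analysis on the three cases, using the maps $\mu$ and $\nu$ to identify old and preserved boundaries. The key observation is that in Case II each new boundary $b^*$ is inserted just before a non-initial copy of $U_{1,1}$, so $b^*$ plays a role structurally similar to the removed boundary $(2,1)$; any cycle in the new cut graph visiting some $b^*$ can then be rerouted, via the correspondence induced by $T$, to a cycle through $(2,1)$ in the old cut graph, contradicting acyclicity. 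Case I only deletes vertices, and Case III is symmetric to Case II via duality.

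For each boundary $b \in B$, let $h(b)$ denote the length of the longest path in the cut graph starting at $b$; acyclicity makes this well-defined and gives $h(b) \le |B| - 1$. Define
\[
    \Phi(U_1,U_2,<) \;=\; \sum_{b \in B} 2^{h(b)}.
\]
A standard inductive argument, peeling off sinks one at a time, gives $\Phi \le 2^{|B|} - 1 \le 2^N - 1$. The real work is verifying strict descent. For Case I, the two removed boundaries contribute positively to the sum, and deleting them together with their incident edges can only weakly decrease the $h$-values of surviving boundaries, so $\Phi$ strictly decreases. For Case II, I aim to show that (i) each new boundary $b^*$ satisfies $h'(b^*) < h((2,1))$ in the new cut graph, and (ii) every surviving boundary in $B'^-$ has $h'$-value bounded above by its old $h$-value; combining these with the loss of the $2^{h((2,1))}$ term from $(2,1)$ yields a net decrease. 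Case III again follows by duality.

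I expect the main obstacle to be controlling $h$-values in Case II. The plan is to lift any walk in $G_{U'_1,U'_2,<'}$ beginning at a new boundary $b^*$ to a walk in $G_{U_1,U_2,<}$ beginning at $(2,1)$ that is strictly longer; acyclicity of the old cut graph then forces $h'(b^*) < h((2,1))$. A similar lifting for walks starting at a surviving boundary will give (ii). The lifting argument has to handle two subtleties: the substitution $T$ duplicates occurrences of $U_{1,1}$, so cuts and mirrors in the new graph must be matched back to old cuts and mirrors through $\mu$ and $\nu$; and the interleaving $<'$ is only partially determined on $B'^+$, so the bounds must hold uniformly across all coherent choices of $<'$. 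Pinning down this correspondence between walks in the two cut graphs is, I expect, the hardest part of the proof.
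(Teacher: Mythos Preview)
Your high-level strategy matches the paper's: define a nonnegative integer potential bounded by $2^N$ and show it strictly drops under every coherent extended Nielsen transformation. Your bound $\sum_b 2^{h(b)} \le 2^N-1$ is correct (using $h_i \le N-i$ after sorting). The problem is the strict-descent claim for Case~II, which fails for your potential.

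The gap is in step~(i). Knowing only that $h'(b^*) < h((2,1))$ for each new boundary $b^*\in B'^+$ does \emph{not} give $\sum_{b^*\in B'^+} 2^{h'(b^*)} < 2^{h((2,1))}$, because $|B'^+|$ equals the number of occurrences of $U_{1,1}$ minus one and can be large. Concretely, take $(U_1,U_2)=(XXX,\,ZW)$ with the boundary order $(2,1)<(1,1)<(1,2)<(1,3)\approx(2,2)$. The cut graph has $h((1,1))=h((1,2))=h((1,3))=0$ and $h((2,1))=h((2,2))=1$, so $\Phi=7$. The Case~II transformation yields $(XZXZX,\,W)$ with the forced order $(1,1)<'\cdots<'(1,5)\approx'(2,1)$; here $h'((1,j))=0$ for $j\in[5]$ and $h'((2,1))=1$, so $\Phi'=7$ as well. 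Your potential does not strictly decrease. (The paper's measure drops from $9$ to $8$ on this example.)

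What goes wrong is that $2^{h(b)}$ records only the \emph{longest} walk from $b$, so the contribution of $(2,1)$ carries no information about \emph{how many} mirrors of $(1,1)$ there are, while each such mirror spawns a separate new boundary. The paper fixes this by using a potential that is additive over those mirrors: the fecundity $f(2,1)$ is defined as $1$ plus the \emph{sum} of fecundities over all mirrors of $(1,1)$, so the inequality $\sum_{b^*\in B'^+} f'(b^*) < f((2,1))$ (Lemma~\ref{lem-less-21}) holds regardless of $|B'^+|$. This is exactly the missing ingredient; replacing $2^{h(b)}$ by a walk-counting quantity of this kind makes your outline go through, and indeed the paper then bounds $m$ by the total number of walks in an acyclic graph on $N$ vertices, which is at most $2^N$.
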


\noindent
We need the following definitions.
\begin{definition}
    Given an extended word equation $(U_1, U_2, <)$ such that $G_{U_1, U_2, <} = (B,E)$ is acyclic, the \emph{fecundity} of the boundary $(i,j)$ is recursively defined as
    \[
        f_{U_1,U_2,<}(i,j) = 1 + \max_{\substack{(i'',j'') \in B \\ (i,j) \ \text{cuts} \ (i'',j'')}} \sum_{\substack{(i',j') \in B \\ (i'',j'') \ \text{mirrors} \ (i',j')}} f_{U_1,U_2,<}(i',j').
    \]
\end{definition}
\noindent
Note that the definition of $f_{U_1,U_2,<}(i,j)$ only depends on the values of $f_{U_1,U_2,<}(i',j')$ where $((i,j),(i',j')) \in E$, so the recursion is well-founded if $G_{U_1, U_2, <}$ is acyclic.
\begin{definition}
    Given an extended word equation $(U_1, U_2, <)$ such that $G_{U_1, U_2, <} = (B,E)$ is acyclic, its \emph{measure} is
    \[
        m(U_1, U_2, <) = \sum_{(i,j) \in B} f_{U_1,U_2,<}(i,j).
    \]
\end{definition}
We will prove Theorem~\ref{thm-sufficient} by showing that with each application of a coherent extended Nielsen transformation, the measure decreases. To that end, let $(U'_1,U'_2,<')$ be the result of applying a coherent extended Nielsen transformation to $(U_1,U_2,<)$. Write $G_{U_1, U_2, <} = (B,E)$ and $G_{U'_1, U'_2, <'} = (B',E')$. Assume that $G_{U_1, U_2, <}$ is acyclic. We aim to show that $m(U'_1, U'_2, <') < m(U_1, U_2, <)$.

Before going into the details, a brief description of the proof strategy is in order. The difficult part is proving the measure decreases when a case II extended Nielsen transformation is applied. In this case, let $B^-$, $B'^-$, and $B'^+$ be defined as in Section~\ref{sec-ext-nielsen}. Lemma~\ref{lem-walk-case2} shows that $G_{U'_1, U'_2, <'}$ is acyclic, so $f_{U'_1,U'_2,<'}$ is well-defined. Lemma~\ref{lem-mu-decrease} implies that the fecundity of every boundary in $B'^-$ is at most the fecundity of its corresponding boundary in $B^-$. Lemma~\ref{lem-less-21} shows that the sum of the fecundities of the boundaries in $B'^+$ is strictly less than the fecundity of $(2,1)$ in $(U_1,U_2,<)$. The boundary $(2,1)$ in $(U_1,U_2,<)$ does not correspond to any boundary in $(U'_1,U'_2,<')$. Thus, in the process of going from $(U_1,U_2,<)$ to $(U'_1,U'_2,<')$, we lose the boundary $(2,1)$, add some boundaries whose fecundities sum to strictly less than that of $(2,1)$ in $(U_1,U_2,<)$, and retain some boundaries whose fecundities do not increase. Therefore, $m(U'_1, U'_2, <') < m(U_1, U_2, <)$.

First, we prove a few lemmas for when a case I extended Nielsen transformation is applied.

\begin{lemma} \label{lem-mirror-case1}
    Suppose $(U'_1,U'_2,<')$ is the result of applying a case I coherent extended Nielsen transformation to $(U_1,U_2,<)$. If $(i,j)$ mirrors $(i',j')$ with respect to $(U'_1,U'_2,<')$, then either $(i,j+1)$ mirrors $(i',j'+1)$ with respect to $(U_1,U_2,<)$ or $\{U_{i,j+1}, U_{i',j'+1}\} = \{U_{1,1}, U_{2,1}\}$.
\end{lemma}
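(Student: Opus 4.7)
The plan is to unfold all three pieces of the hypothesis: what case I does to the letters, what it does to the boundary order, and what coherence gives us for the mirror relation. Since we start from a nontrivial coherent extended word equation, in case I we have $U'_{i,j} = T(U_{i,j+1})$ where $T$ is the endomorphism sending $U_{2,1}$ to $U_{1,1}$ and fixing every other variable, and $<'$ is the shift $(i,j) <' (i',j') \iff (i,j+1) < (i',j'+1)$. Moreover, by coherence of $(U_1,U_2,<)$ and $(U'_1,U'_2,<')$, the mirror condition in either equation collapses to: $(a,b)$ mirrors $(a',b')$ iff the underlying variables agree and the two boundaries are not $\approx$-equivalent.

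First, I would translate the letter equality. The hypothesis that $(i,j)$ mirrors $(i',j')$ in $(U'_1,U'_2,<')$ gives in particular $U'_{i,j} = U'_{i',j'}$, i.e.\ $T(U_{i,j+1}) = T(U_{i',j'+1})$. Since $T$ is injective on $\mathcal{X}\setminus\{U_{2,1}\}$ and only identifies $U_{2,1}$ with $U_{1,1}$, exactly one of the following holds: either $U_{i,j+1} = U_{i',j'+1}$, or $\{U_{i,j+1}, U_{i',j'+1}\} = \{U_{1,1}, U_{2,1}\}$. In the latter case the conclusion of the lemma is immediate, so assume the former.

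Next, I would translate the order inequivalence. By coherence of $(U'_1,U'_2,<')$, the mirror hypothesis also forces $(i,j) \not\approx' (i',j')$. Applying the definition of $<'$, this is equivalent to $(i,j+1) \not\approx (i',j'+1)$ in the original order. Combined with $U_{i,j+1} = U_{i',j'+1}$ and coherence of $(U_1,U_2,<)$, this is exactly the statement that $(i,j+1)$ mirrors $(i',j'+1)$ with respect to $(U_1,U_2,<)$.

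The only real subtlety is the case analysis on whether $T$ identifies the two letters, and this is not so much an obstacle as a bookkeeping point: the exceptional case is precisely the escape hatch $\{U_{i,j+1}, U_{i',j'+1}\} = \{U_{1,1}, U_{2,1}\}$ that the lemma's disjunction is designed to absorb. Everything else is immediate from the explicit formulas defining case I of an extended Nielsen transformation.
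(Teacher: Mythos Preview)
Your proof is correct and is essentially the same as the paper's: both unfold the case I definitions to get $U'_{i,j} = T(U_{i,j+1})$ and the shifted order, deduce $(i,j+1) \not\approx (i',j'+1)$ from the mirror hypothesis via coherence, and then split on whether $T$ genuinely identifies the two letters or not. The only cosmetic difference is that the paper handles the order inequivalence first and the letter equality second, while you reverse the order; the content is identical.
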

\begin{proof}
    Suppose $(i,j)$ mirrors $(i',j')$ with respect to $(U'_1,U'_2,<')$. Then, $(i,j) \not\approx' (i',j')$, so $(i,j+1) \not\approx (i',j'+1)$. If $U_{i,j+1} = U_{i',j'+1}$, then $(i,j+1)$ mirrors $(i',j'+1)$ with respect to $(U_1,U_2,<)$, and we are done. So suppose $U_{i,j+1} \neq U_{i',j'+1}$.

    Now, we have $U'_{i,j} = T(U_{i,j+1})$ and $U'_{i',j'} = T(U_{i',j'+1})$, where $T$ is the endomorphism on $\mathcal{X}^*$ given by $T(U_{2,1}) = U_{1,1}$ and the identity function on other elements of $\mathcal{X}$. Since $(i,j)$ mirrors $(i',j')$ with respect to $(U'_1,U'_2,<')$, we have
    \[
        T(U_{i,j+1}) = U'_{i,j} = U'_{i',j'} = T(U_{i',j'+1}).
    \]
    Thus, $\{U_{i,j+1}, U_{i',j'+1}\} = \{U_{1,1}, U_{2,1}\}$, as desired.
\end{proof}

\begin{lemma} \label{lem-walk-case1}
    Suppose $(U'_1,U'_2,<')$ is the result of applying a case I coherent extended Nielsen transformation to $(U_1,U_2,<)$. If $G_{U_1,U_2,<}$ is acyclic, then so is $G_{U'_1,U'_2,<'}$.
\end{lemma}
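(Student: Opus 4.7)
The plan is to argue the contrapositive: given a cycle in $G_{U'_1,U'_2,<'}$, construct a cycle in $G_{U_1,U_2,<}$. The shift map $\sigma(i,j) = (i,j+1)$ sends boundaries of $(U'_1,U'_2,<')$ to boundaries of $(U_1,U_2,<)$, and the case~I ordering rule $(i,j) <' (i',j') \Longleftrightarrow (i,j+1) < (i',j'+1)$ ensures cuts are preserved: if $(i,j)$ cuts $(i'',j'')$ in $(U'_1,U'_2,<')$, then $(i,j+1)$ cuts $(i'',j''+1)$ in $(U_1,U_2,<)$.

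Next I will show that every edge $(i,j) \to (i',j')$ of $G_{U'_1,U'_2,<'}$, witnessed by some $(i'',j'')$, lifts via $\sigma$ to a walk of length one or two in $G_{U_1,U_2,<}$. Applying Lemma~\ref{lem-mirror-case1} to the mirror pair $(i'',j'')$ and $(i',j')$ yields two cases. If $(i'',j''+1)$ mirrors $(i',j'+1)$ in $(U_1,U_2,<)$, the edge lifts directly to $\sigma(i,j) \to \sigma(i',j')$. Otherwise, $\{U_{i'',j''+1}, U_{i',j'+1}\} = \{U_{1,1}, U_{2,1}\}$ with $U_{1,1} \ne U_{2,1}$; in the subcase $U_{i'',j''+1} = U_{1,1}$ and $U_{i',j'+1} = U_{2,1}$ (the other is symmetric), I will show the edge lifts to the length-two walk $\sigma(i,j) \to (1,1) \to \sigma(i',j')$. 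The key observations are that $j''+1, j'+1 \ge 2$, so $(i'',j''+1) \ne (1,1)$ and $(i',j'+1) \ne (2,1)$; that coherence of $(U_1,U_2,<)$ together with $L(X) > 0$ forces $(i'',j''+1) \not\approx (1,1)$ and $(i',j'+1) \not\approx (2,1)$, since either equivalence would equate a nonempty prefix length with zero; and that $(1,1)$ cuts $(2,1)$ directly from $(1,1) \approx (2,1)$. Combining these, $(i'',j''+1)$ mirrors $(1,1)$ and $(2,1)$ mirrors $(i',j'+1)$, so both edges of the claimed walk are present.

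Concatenating these per-edge lifts around the hypothesized cycle produces a closed walk of positive length in $G_{U_1,U_2,<}$, which must contain a cycle, contradicting acyclicity. The main obstacle is the second case above: verifying that the detour through $(1,1)$ (or $(2,1)$) really consists of bona fide cut-graph edges. This hinges essentially on coherence to rule out degenerate approximate equalities between later boundaries and the first-position boundaries $(1,1)$ or $(2,1)$.
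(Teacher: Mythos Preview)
Your proposal is correct and follows essentially the same approach as the paper: lift each edge of $G_{U'_1,U'_2,<'}$ via the shift $(i,j)\mapsto(i,j+1)$ to a walk of length one or two in $G_{U_1,U_2,<}$, using Lemma~\ref{lem-mirror-case1} to handle the mirror relation and routing through $(1,1)$ or $(2,1)$ in the exceptional case. The paper states the edge-to-walk claim directly (and leaves the verification that $(i'',j''+1)$ mirrors $(1,1)$ etc.\ implicit), whereas you phrase it as a contrapositive and spell out the coherence argument for $\not\approx$; these are the same proof.
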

\begin{proof}
    It suffices to show that if $((i,j),(i',j')) \in E'$, then there is a walk of length at least 1 from $(i,j+1)$ to $(i',j'+1)$ in $G_{U_1, U_2, <}$. So suppose $((i,j),(i',j')) \in E'$. Then, with respect to $(U'_1,U'_2,<')$, $(i,j)$ cuts some $(i'',j'') \in B'$ and $(i'',j'')$ mirrors $(i',j')$. It follows that with respect to $(U_1,U_2,<)$, $(i,j+1)$ cuts $(i'',j''+1)$. If $(i'',j''+1)$ mirrors $(i',j'+1)$ with respect to $(U_1,U_2,<)$, then $((i,j+1),(i',j'+1)) \in E$, and we are done. Otherwise, $\{U_{i'',j''+1}, U_{i',j'+1}\} = \{U_{1,1}, U_{2,1}\}$ by Lemma~\ref{lem-mirror-case1}.

    If $U_{i'',j''+1} = U_{1,1}$ and $U_{i',j'+1} = U_{2,1}$, then $(i'',j''+1)$ mirrors $(1,1)$ with respect to $(U_1,U_2,<)$. Thus, $((i,j+1),(1,1)) \in E$. Also, with respect to $(U_1,U_2,<)$, $(1,1)$ cuts $(2,1)$, which mirrors $(i',j'+1)$. Thus, $((1,1),(i',j'+1)) \in E$. The walk $((i,j+1),(1,1),(i',j'+1))$ is our desired walk in $G_{U_1, U_2, <}$.

    Similarly, if $U_{i'',j''+1} = U_{2,1}$ and $U_{i',j'+1} = U_{1,1}$, then $((i,j+1),(2,1),(i',j'+1))$ is our desired walk in $G_{U_1, U_2, <}$.
\end{proof}

\begin{lemma} \label{lem-case1}
    Suppose $(U'_1,U'_2,<')$ is the result of applying a case I coherent extended Nielsen transformation to $(U_1,U_2,<)$. Then, for every $(i,j) \in B'$, we have $f_{U'_1,U'_2,<'}(i,j) \le \break f_{U_1,U_2,<}(i,j+1)$.
\end{lemma}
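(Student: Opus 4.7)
The plan is structural induction on the well-founded partial order induced on $B'$ by $G_{U'_1,U'_2,<'}$, which is acyclic by Lemma~\ref{lem-walk-case1}, so $f_{U'_1,U'_2,<'}$ is well-defined. Define the shift map $\phi(i,j) = (i,j+1)$ from $B'$ to $B$; the defining equivalence $(i,j) <' (i',j') \Longleftrightarrow (i,j+1) < (i',j'+1)$ makes $\phi$ order-preserving and shows directly that if $(i,j)$ cuts $(i'',j'')$ in $(U'_1,U'_2,<')$, then $\phi(i,j)$ cuts $\phi(i'',j'')$ in $(U_1,U_2,<)$.

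For the inductive step, fix a cut $(i_*,j_*)$ of $(i,j)$ in $(U'_1,U'_2,<')$ achieving the max in $f_{U'_1,U'_2,<'}(i,j)$, and let $M'$ be its mirror class. In the easy subcase $U'_{i_*,j_*} \neq U_{1,1}$, the substitution $T$ acts as the identity on all relevant variables, so Lemma~\ref{lem-mirror-case1} forces $\phi(i',j')$ to mirror $\phi(i_*,j_*)$ in $(U_1,U_2,<)$ for every $(i',j') \in M'$; taking the cut $\phi(i_*,j_*)$ in the defining max for $f_{U_1,U_2,<}(i,j+1)$ and applying the induction hypothesis termwise gives the inequality.

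The main obstacle is the subcase $U'_{i_*,j_*} = U_{1,1}$ (with $U_{1,1} \neq U_{2,1}$), where $\phi$ may send $M'$ into two distinct mirror classes in $(U_1,U_2,<)$: one with variable $U_{1,1}$ and one with variable $U_{2,1}$. Assume without loss of generality $U_{i_*,j_*+1} = U_{1,1}$ (the other case is symmetric under the swap $(1,1) \leftrightarrow (2,1)$); partition $M' = M'_1 \sqcup M'_2$ by whether $U_{i',j'+1}$ is $U_{1,1}$ or $U_{2,1}$. Coherence is the crucial ingredient here: if $(i',j'+1) \approx (2,1)$ with $U_{i',j'+1} = U_{2,1}$, then $L(U_{i',1}\cdots U_{i',j'}) = 0$, forcing $j' = 0$ and contradicting $j' \ge 1$; analogous arguments rule out $(i',j'+1) \approx (1,1)$ when $U_{i',j'+1} = U_{1,1}$ and rule out $(1,1) \approx (i_*,j_*+1)$. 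Consequently, $\phi(M'_1) \cup \{(1,1)\}$ is a disjoint subset of the mirror class of $(i_*,j_*+1)$ in $(U_1,U_2,<)$, while $\phi(M'_2)$ is a subset of the mirror class of $(2,1)$. Since case I gives $(1,1) \approx (2,1)$, the boundary $(1,1)$ cuts $(2,1)$ in $(U_1,U_2,<)$, yielding $f_{U_1,U_2,<}(1,1) \ge 1 + \sum_{\phi(M'_2)} f_{U_1,U_2,<}$. Chaining this bound through the cut $\phi(i_*,j_*)$ in the defining max for $f_{U_1,U_2,<}(i,j+1)$ and invoking the induction hypothesis on each element of $M'$ produces
\[
    f_{U_1,U_2,<}(i,j+1) \ge 1 + f_{U_1,U_2,<}(1,1) + \sum_{\phi(M'_1)} f_{U_1,U_2,<} \ge 2 + \sum_{M'} f_{U'_1,U'_2,<'} = 1 + f_{U'_1,U'_2,<'}(i,j).
\]
Conceptually, the hardest step is recognizing that the cut of $(2,1)$ by $(1,1)$ in $(U_1,U_2,<)$, which has no counterpart in $(U'_1,U'_2,<')$ because case I removes the first variables of both sides, serves as a bridge absorbing the $U_{2,1}$-labeled mirrors that become merged with $U_{1,1}$-labeled ones under the substitution $T$.
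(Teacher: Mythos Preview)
Your proof is correct and follows essentially the same approach as the paper's: induction along the acyclic graph $G_{U'_1,U'_2,<'}$, a case split on the variable labeling the chosen cut, and in the merged-class case the key observation that $(1,1)$ cuts $(2,1)$ in $(U_1,U_2,<)$ so that $f_{U_1,U_2,<}(1,1)$ absorbs the contribution of the $U_{2,1}$-labeled mirrors. The only cosmetic differences are that you package the shift as an explicit map $\phi$, collapse the paper's two symmetric hard cases into one via WLOG, and invoke the coherence homomorphism $L$ directly to rule out $\phi(M'_2)\ni(2,1)$ and $(1,1)\approx(i_*,j_*+1)$, whereas the paper obtains the same exclusions purely from the boundary-order structure (since $j'+1\ge 2$ forces $(i',j'+1)>(1,1)\approx(2,1)$).
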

\begin{proof}
    First, $G_{U'_1,U'_2,<'}$ is acyclic by Lemma~\ref{lem-walk-case1}, so $f_{U'_1,U'_2,<'}$ is well-defined. We may inductively assume that the lemma is true for every $(i',j') \in B'$ such that $((i,j),(i',j')) \in E'$.

    We have
    \[
        f_{U'_1,U'_2,<'}(i,j) = 1 + \sum_{\substack{(i',j') \in B' \\ (i'',j'') \ \text{mirrors} \ (i',j') \\ \text{w.r.t.} \ (U'_1,U'_2,<')}} f_{U'_1,U'_2,<'}(i',j')
    \]
    for some $(i'',j'') \in B'$ such that $(i,j)$ cuts $(i'',j'')$ with respect to $(U'_1,U'_2,<')$. Then, $(i,j+1)$ cuts $(i'',j''+1)$ with respect to $(U_1,U_2,<)$.

    First, suppose $U_{i'',j''+1} \notin \{U_{1,1}, U_{2,1}\}$. Then,
    \begin{align*}
        f_{U'_1,U'_2,<'}(i,j) &= 1 + \sum_{\substack{(i',j') \in B' \\ (i'',j'') \ \text{mirrors} \ (i',j') \\ \text{w.r.t.} \ (U'_1,U'_2,<')}} f_{U'_1,U'_2,<'}(i',j') \\
        &\le 1 + \sum_{\substack{(i',j') \in B' \\ (i'',j'') \ \text{mirrors} \ (i',j') \\ \text{w.r.t.} \ (U'_1,U'_2,<')}} f_{U_1,U_2,<}(i',j'+1) \\
        &\le 1 + \sum_{\substack{(i',j') \in B' \\ (i'',j''+1) \ \text{mirrors} \ (i',j'+1) \\ \text{w.r.t.} \ (U_1,U_2,<)}} f_{U_1,U_2,<}(i',j'+1) \quad \text{by Lemma~\ref{lem-mirror-case1}} \\
        &= 1 + \sum_{\substack{(i',j') \in B \\ (i'',j''+1) \ \text{mirrors} \ (i',j') \\ \text{w.r.t.} \ (U_1,U_2,<)}} f_{U_1,U_2,<}(i',j') \le f_{U_1,U_2,<}(i,j+1).
    \end{align*}

    Next, suppose $U_{i'',j''+1} = U_{1,1}$. Then,
    \begin{align*}
        f_{U'_1,U'_2,<'}(i,j) = 1 &+ \sum_{\substack{(i',j') \in B' \\ (i'',j'') \ \text{mirrors} \ (i',j') \\ \text{w.r.t.} \ (U'_1,U'_2,<')}} f_{U'_1,U'_2,<'}(i',j') \\
        \le 1 &+ \sum_{\substack{(i',j') \in B' \\ (i'',j'') \ \text{mirrors} \ (i',j') \\ \text{w.r.t.} \ (U'_1,U'_2,<')}} f_{U_1,U_2,<}(i',j'+1) \\
        \le 1 &+ \sum_{\substack{(i',j') \in B' \\ (i'',j''+1) \ \text{mirrors} \ (i',j'+1) \\ \text{w.r.t.} \ (U_1,U_2,<)}} f_{U_1,U_2,<}(i',j'+1) \\
        &+ \sum_{\substack{(i',j') \in B' \\ U_{i',j'+1} = U_{2,1}}} f_{U_1,U_2,<}(i',j'+1) \quad \text{by Lemma~\ref{lem-mirror-case1}} \\
        = 1 &+ \sum_{\substack{(i',j') \in B \setminus \{(1,1),(2,1)\} \\ (i'',j''+1) \ \text{mirrors} \ (i',j') \\ \text{w.r.t.} \ (U_1,U_2,<)}} f_{U_1,U_2,<}(i',j') + \sum_{\substack{(i',j') \in B \\ (2,1) \ \text{mirrors} \ (i',j') \\ \text{w.r.t} \ (U_1,U_2,<)}} f_{U_1,U_2,<}(i',j') \\
        < 1 &+ \sum_{\substack{(i',j') \in B \setminus \{(1,1)\} \\ (i'',j''+1) \ \text{mirrors} \ (i',j') \\ \text{w.r.t.} \ (U_1,U_2,<)}} f_{U_1,U_2,<}(i',j') + f_{U_1,U_2,<}(1,1) \le f_{U_1,U_2,<}(i,j+1).
    \end{align*}

    Finally, if $U_{i'',j''+1} = U_{2,1}$, then we similarly have
    \[
        f_{U'_1,U'_2,<'}(i,j) < 1 + \sum_{\substack{(i',j') \in B \setminus \{(2,1)\} \\ (i'',j''+1) \ \text{mirrors} \ (i',j') \\ \text{w.r.t.} \ (U_1,U_2,<)}} f_{U_1,U_2,<}(i',j') + f_{U_1,U_2,<}(2,1) \le f_{U_1,U_2,<}(i,j+1). \qedhere
    \]
\end{proof}

If a case I extended Nielsen transformation is not applied, we may assume without loss of generality that a case II extended Nielsen transformation is applied, since cases II and III are symmetric. We need several lemmas for this case. Let $\mu$, $\nu$, $B^-$, $B'^-$, and $B'^+$ be defined as in Section~\ref{sec-ext-nielsen}.

\begin{lemma} \label{lem-mu-preserve}
    Suppose $(U'_1,U'_2,<')$ is the result of applying a case II coherent extended Nielsen transformation to $(U_1,U_2,<)$. Then, for every $(i,j) \in B^-$, we have $U_{i,j} = U'_{i,\mu(i,j)}$.
\end{lemma}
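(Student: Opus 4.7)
I plan to prove the lemma by unpacking the definition of the case II Nielsen transformation and then inducting on $j$ separately for $i = 1$ and $i = 2$. The key intuition is that $T$ expands each occurrence of $U_{1,1}$ into the two-variable word $U_{2,1} U_{1,1}$, shifting the positions of later variables by one, and this shift is exactly what $\mu$ records. A useful preliminary observation is that coherence together with $(1,1) > (2,1)$ forces $U_{2,1} \neq U_{1,1}$: otherwise, applying any witnessing length function $L$ would give $L(U_{1,1}) = L(U_{2,1})$, contradicting $(1,1) > (2,1) \Longleftrightarrow L(U_{1,1}) > L(U_{2,1})$. Consequently $T(U_{2,1}) = U_{2,1}$, and $U'_{i}$ admits the explicit descriptions $U'_1 = U_{1,1}\, T(U_{1,2})\cdots T(U_{1,m})$ and $U'_2 = T(U_{2,2})\cdots T(U_{2,n})$.

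\textbf{The inductive step.} The critical identity is
\[
    \mu(i,j) - \mu(i,j-1) = 1 + [U_{i,j} = U_{1,1}],
\]
which reads off immediately from the definition of $\mu$. Assuming $U_{i,j-1} = U'_{i,\mu(i,j-1)}$, I would show $U_{i,j} = U'_{i,\mu(i,j)}$ by cases. If $U_{i,j} \neq U_{1,1}$, then $T(U_{i,j}) = U_{i,j}$, which occupies position $\mu(i,j-1) + 1 = \mu(i,j)$ in $U'_i$. If $U_{i,j} = U_{1,1}$, then $T(U_{i,j}) = U_{2,1} U_{1,1}$, and the trailing $U_{1,1}$, which equals $U_{i,j}$, sits at position $\mu(i,j-1) + 2 = \mu(i,j)$.

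\textbf{Base cases.} For $i = 1$, $j = 1$: $\mu(1,1) = 1$ and $U'_{1,1} = U_{1,1}$ from the explicit form of $U'_1$. For $i = 2$, $j = 2$ (recall $(2,1) \notin B^-$): since $U_{2,1} \neq U_{1,1}$, the formula gives $\mu(2,2) = 1 + [U_{2,2} = U_{1,1}]$; the explicit form of $U'_2$ starts with $T(U_{2,2})$, so if $U_{2,2} \neq U_{1,1}$ then $U'_{2,1} = U_{2,2}$, while if $U_{2,2} = U_{1,1}$ then $T(U_{2,2}) = U_{2,1} U_{1,1}$ and $U'_{2,2} = U_{1,1} = U_{2,2}$.

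\textbf{Main obstacle.} There is no genuine obstacle here; the lemma is really a careful bookkeeping claim. The only subtle point is the offset at $i = 2$, where $\mu$ must absorb the $-1$ coming from dropping $U_{2,1}$ in $T(U_2)^+$; this is exactly where the coherence-driven fact $U_{2,1} \neq U_{1,1}$ is needed to keep the counts aligned.
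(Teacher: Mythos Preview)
Your proposal is correct and follows essentially the same idea as the paper: both arguments locate $U_{i,j}$ inside $T(U_i)^+$ by counting how many occurrences of $U_{1,1}$ precede position $j$, the paper doing this in closed form via $|T(U_{i,1}\cdots U_{i,j-1})| = j-1 + |\{j' \le j-1 : U_{i,j'} = U_{1,1}\}|$ and you unwinding the same count inductively through the increment $\mu(i,j)-\mu(i,j-1)=1+[U_{i,j}=U_{1,1}]$. One small point: the inductive hypothesis you actually need is the positional statement that the block $T(U_{i,j-1})$ ends at index $\mu(i,j-1)$ in $U'_i$, not merely the equality $U_{i,j-1}=U'_{i,\mu(i,j-1)}$, since the bare equality does not by itself tell you where $T(U_{i,j})$ begins; with that strengthened hypothesis your inductive step goes through exactly as written.
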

\begin{proof}
    We have $U'_i = T(U_i)^+$, where $T$ is the endomorphism on $\mathcal{X}^*$ given by $T(U_{1,1}) = U_{2,1} U_{1,1}$ and the identity function on other elements of $\mathcal{X}$. We have
    \[
        T(U_i) = T(U_{i,1} \cdots U_{i,j-1}) T(U_{i,j}) T(U_{i,j+1} \cdots).
    \]
    Also,
    \[
        |T(U_{i,1} \cdots U_{i,j-1})| = j-1 + |\{(i,j') \mid j' \le j-1 \ \text{and} \ U_{i,j'} = U_{1,1}\}|,
    \]
    where $\left|\cdot\right|$ is the length function on $\mathcal{X}^*$.
    
    Suppose first that $U_{i,j} \neq U_{1,1}$. Then,
    \[
        T(U_i) = T(U_{i,1} \cdots U_{i,j-1}) U_{i,j} T(U_{i,j+1} \cdots).
    \]
    Thus, $U_{i,j}$ is equal to the variable in position
    \[
        j + |\{(i,j') \mid j' \le j-1 \ \text{and} \ U_{i,j'} = U_{1,1}\}| = j + |\{(i,j') \mid j' \le j \ \text{and} \ U_{i,j'} = U_{1,1}\}| = \mu(i,j) + 1
    \]
    of $T(U_i)$. Hence, $U'_{i,\mu(i,j)} = U_{i,j}$.

    On the other hand, suppose that $U_{i,j} = U_{1,1}$. Then,
    \[
        T(U_i) = T(U_{i,1} \cdots U_{i,j-1}) U_{2,1} U_{1,1} T(U_{i,j+1} \cdots).
    \]
    Thus, $U_{i,j}$ is equal to the variable in position
    \[
        j + |\{(i,j') \mid j' \le j-1 \ \text{and} \ U_{i,j'} = U_{1,1}\}| + 1 = j + |\{(i,j') \mid j' \le j \ \text{and} \ U_{i,j'} = U_{1,1}\}| = \mu(i,j) + 1
    \]
    of $T(U_i)$. Hence, $U'_{i,\mu(i,j)} = U_{i,j}$.
\end{proof}

\begin{lemma} \label{lem-nu-plus-2}
    Suppose $(U'_1,U'_2,<')$ is the result of applying a case II coherent extended Nielsen transformation to $(U_1,U_2,<)$. If $(i,j) \in B'^+$, then $U'_{i,j} = U_{2,1}$ and $U'_{i,j+1} = U_{1,1}$.
\end{lemma}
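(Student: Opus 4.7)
The plan is to give an explicit description of the positions of $U'_i$ in terms of $U_i$ and the endomorphism $T$ from case II, and then to read off the identity of the new positions directly. Recall that $U'_i = T(U_i)^+$, where $T(U_{1,1}) = U_{2,1} U_{1,1}$ and $T$ is the identity on variables other than $U_{1,1}$. Coherence in case II rules out $U_{1,1} = U_{2,1}$ (otherwise $(1,1) \approx (2,1)$), so in particular $T(U_{2,1}) = U_{2,1}$. Consequently, each variable $U_{i,j'}$ contributes either a single-variable block (when $U_{i,j'} \neq U_{1,1}$) or the two-variable block $U_{2,1} U_{1,1}$ (when $U_{i,j'} = U_{1,1}$) to $T(U_i)$, which is then truncated by dropping its first variable.

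The main step is to show that every position $j \in [|U'_i|]$ falls into exactly one of two categories: either $j = \mu(i,j')$ for some $(i,j') \in B^-$, in which case $U'_{i,j} = U_{i,j'}$ by Lemma~\ref{lem-mu-preserve}; or $j = \mu(i,j') - 1$ for some $(i,j') \in B^-$ with $U_{i,j'} = U_{1,1}$ and $j' \ge 2$, in which case $U'_{i,j}$ is the inserted $U_{2,1}$ from the expansion $T(U_{i,j'}) = U_{2,1} U_{1,1}$ and $U'_{i,j+1} = U'_{i,\mu(i,j')} = U_{1,1}$. This is a direct bookkeeping exercise on the concatenation of blocks in $T(U_i)$: walking through the blocks one at a time and tracking the cumulative position, the rightmost position of the contribution of $U_{i,j'}$ to $U'_i$ is exactly $\mu(i,j')$, matching the formula in the definition of $\mu$. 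Positions of the first kind are by definition the elements of $B'^-$, so positions of the second kind are exactly the elements of $B'^+$, and the conclusion of the lemma follows immediately.

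The main obstacle is handling the dropped first variable asymmetrically on the two sides and ensuring $j' = 1$ does not spuriously produce an element of $B'^+$. On side 1, the contribution of $U_{1,1}$ at position $j' = 1$ is $U_{2,1} U_{1,1}$, and its leading $U_{2,1}$ is precisely the first variable of $T(U_1)$, which is dropped, so the would-be insertion at $\mu(1,1) - 1 = 0$ is erased. On side 2, $U_{2,1} \neq U_{1,1}$ implies $T(U_{2,1}) = U_{2,1}$ is a single variable that is wholly dropped, so $j' = 1$ on side 2 contributes nothing to $U'_2$ (consistent with $(2,1) \notin B^-$). After this case analysis, the identifications $U'_{i,j} = U_{2,1}$ and $U'_{i,j+1} = U_{1,1}$ for every $(i,j) \in B'^+$ are immediate.
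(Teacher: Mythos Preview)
Your proposal is correct and takes essentially the same approach as the paper's proof: both identify that a position $(i,j)\in B'^+$ must satisfy $j+1=\mu(i,k)$ for some $k$ with $U_{i,k}=U_{1,1}$, invoke Lemma~\ref{lem-mu-preserve} for $U'_{i,j+1}=U_{1,1}$, and then recognize $U'_{i,j}$ as the inserted $U_{2,1}$ in the block $T(U_{i,k})=U_{2,1}U_{1,1}$. Your version simply spells out the block-by-block structure of $T(U_i)^+$ and the edge cases at $j'=1$ more explicitly than the paper does.
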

\begin{proof}
    Since $(i,j) \in B'^+$, we have $j+1 = \mu(i,k)$ for some $k$ such that $U_{i,k} = U_{1,1}$. By Lemma~\ref{lem-mu-preserve},
    \[
        U'_{i,j+1} = U'_{i,\mu(i,k)} = U_{i,k} = U_{1,1}.
    \]
    Recall that $U'_i = T(U_i)^+$, where $T$ is the endomorphism on $\mathcal{X}^*$ given by $T(U_{1,1}) = U_{2,1} U_{1,1}$ and the identity function on other elements of $\mathcal{X}$. Since $U'_{i,j}$ immediately precedes $U'_{i,j+1} = U_{1,1}$, we have $U'_{i,j} = U_{2,1}$.
\end{proof}

\begin{lemma} \label{lem-nu-j-plus-1}
    Suppose $(U'_1,U'_2,<')$ is the result of applying a case II coherent extended Nielsen transformation to $(U_1,U_2,<)$. If $(i,j) \in B'^+$, then $\nu(i,j) = \nu(i,j+1)$.
\end{lemma}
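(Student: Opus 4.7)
The plan is a direct computation from the definition of $\nu$, leveraging Lemma~\ref{lem-nu-plus-2}. By definition,
\[
    \nu(i,j+1) - \nu(i,j) = 1 - \Bigl(\bigl|\{j' \le j+1 : U'_{i,j'} = U_{1,1}\}\bigr| - \bigl|\{j' \le j : U'_{i,j'} = U_{1,1}\}\bigr|\Bigr),
\]
so it suffices to show that exactly one index $j' \in \{1,\dots,j+1\}$ with $U'_{i,j'} = U_{1,1}$ is added when we go from $j$ to $j+1$; equivalently, that $U'_{i,j+1} = U_{1,1}$.

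Since $(i,j) \in B'^+$, Lemma~\ref{lem-nu-plus-2} immediately gives $U'_{i,j+1} = U_{1,1}$. The count therefore increases by exactly $1$ in passing from $j$ to $j+1$, which makes the bracketed expression above equal to $1$. We conclude $\nu(i,j+1) - \nu(i,j) = 0$, i.e., $\nu(i,j) = \nu(i,j+1)$, as required.

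There is essentially no obstacle here: the lemma is a one-line consequence of the formula for $\nu$ once Lemma~\ref{lem-nu-plus-2} has been established. The only thing to be careful about is making sure the index bookkeeping in the set-builder notation is stated cleanly, so that the ``new element'' counted in the difference is unambiguously the index $j' = j+1$.
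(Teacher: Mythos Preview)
Your proof is correct and follows essentially the same approach as the paper: both use Lemma~\ref{lem-nu-plus-2} to conclude $U'_{i,j+1} = U_{1,1}$, observe that this makes the relevant count increase by exactly one, and then read off $\nu(i,j) = \nu(i,j+1)$ from the defining formula.
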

\begin{proof}
    Lemma~\ref{lem-nu-plus-2} implies
    \[
        \left|\left\{(i,j') \mathrel{}\middle|\mathrel{} j' \le j+1 \ \text{and} \ U'_{i,j'} = U_{1,1}\right\}\right| - \left|\left\{(i,j') \mathrel{}\middle|\mathrel{} j' \le j \ \text{and} \ U'_{i,j'} = U_{1,1}\right\}\right| = 1.
    \]
    Hence,
    \begin{align*}
        \nu(i,j) &= j - \left|\left\{(i,j') \mathrel{}\middle|\mathrel{} j' \le j \ \text{and} \ U'_{i,j'} = U_{1,1}\right\}\right| + 1 \\
        &= j+1 - \left|\left\{(i,j') \mathrel{}\middle|\mathrel{} j' \le j+1 \ \text{and} \ U'_{i,j'} = U_{1,1}\right\}\right| + 1 = \nu(i,j+1). \qedhere
    \end{align*}
\end{proof}

\begin{lemma} \label{lem-boundary-order}
    Suppose $(U'_1,U'_2,<')$ is the result of applying a case II coherent extended Nielsen transformation to $(U_1,U_2,<)$. If $(i,j) \in B'^-$ and $(i,j) <' (i',j')$, then $(i,\nu(i,j)) < (i',\nu(i',j'))$.
\end{lemma}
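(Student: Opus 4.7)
The plan is to split on whether $(i', j') \in B'^-$ or $(i', j') \in B'^+$. If $(i', j') \in B'^-$, then the conclusion is immediate from the defining property of the interleaving $<'$ in the definition of a case II extended Nielsen transformation, since both $(i,j)$ and $(i',j')$ lie in $B'^-$. The substantive case is when $(i', j') \in B'^+$, since then the defining property of $<'$ does not directly apply to the pair $(i,j)$, $(i',j')$.

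For this case, I would first show that $(i', j'+1) \in B'^-$. By Lemma~\ref{lem-nu-plus-2}, $U'_{i', j'+1} = U_{1,1}$, and because case II simply rewrites each $U_{1,1}$ appearing in $U_{i'}$ as $U_{2,1} U_{1,1}$ (introducing no other occurrences of $U_{1,1}$), this $U_{1,1}$ arose from an original $U_{1,1}$ at some position $k$ in $U_{i'}$, so $(i', j'+1)$ is the image under $\mu$ of the original boundary $(i', k)$. One brief sanity check is needed: the corresponding boundary $(i', k)$ must lie in $B^- = B \setminus \{(2,1)\}$, which holds because in case II we have $(1,1) > (2,1)$, and coherence together with $U_{1,1} = U_{2,1}$ would force $(1,1) \approx (2,1)$, so $U_{1,1} \neq U_{2,1}$, ruling out $(i', k) = (2,1)$.

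Next, since $<'$ is an interleaving and thus restricts to the total order on $B'_{i'}$, we have $(i', j') <' (i', j'+1)$. Transitivity with the hypothesis $(i,j) <' (i',j')$ gives $(i,j) <' (i', j'+1)$. Both $(i,j)$ and $(i', j'+1)$ lie in $B'^-$, so the defining property of $<'$ yields $(i, \nu(i,j)) < (i', \nu(i', j'+1))$. Lemma~\ref{lem-nu-j-plus-1} then gives $\nu(i', j'+1) = \nu(i', j')$, which completes the argument.

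The main conceptual point, which would be the only part requiring care, is the observation that a new boundary $(i', j') \in B'^+$ is always immediately followed (in the total order on $B'_{i'}$) by the $B'^-$-boundary $(i', j'+1)$ whose $\nu$-value coincides with $\nu(i', j')$; once this is in hand the lemma reduces to a single application of transitivity together with the defining property of the case II interleaving.
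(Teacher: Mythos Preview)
Your proposal is correct and follows essentially the same route as the paper: split on whether $(i',j')\in B'^-$ or $(i',j')\in B'^+$, and in the latter case pass to $(i',j'+1)\in B'^-$ and invoke Lemma~\ref{lem-nu-j-plus-1}. The only difference is that the paper simply asserts $(i',j'+1)\in B'^-$ (this is the content of Lemma~\ref{lem-consec-plus}, stated just afterward), whereas you supply a direct argument for it via Lemma~\ref{lem-nu-plus-2}; both are fine.
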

\begin{proof}
    If $(i',j') \in B'^-$, then this is true by the definition of an extended Nielsen transformation. So suppose $(i',j') \in B'^+$. Then, $(i',j'+1) \in B'^-$, so $(i,\nu(i,j)) < (i',\nu(i',j'+1))$. By Lemma~\ref{lem-nu-j-plus-1}, $\nu(i',j') = \nu(i',j'+1)$, so $(i,\nu(i,j)) < (i',\nu(i',j'))$.
\end{proof}

\begin{lemma} \label{lem-consec-plus}
    Suppose $(U'_1,U'_2,<')$ is the result of applying a case II coherent extended Nielsen transformation to $(U_1,U_2,<)$. If $(i,j) \in B'^+$, then $(i,j+1) \notin B'^+$.
\end{lemma}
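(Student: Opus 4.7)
The plan is to argue by contradiction, assuming both $(i,j) \in B'^+$ and $(i,j+1) \in B'^+$ and deriving a clash between coherence and the case~II hypothesis. The key observation is that Lemma~\ref{lem-nu-plus-2} pins down both the variable just before and the variable just after a ``new'' boundary, so two consecutive new boundaries would over-determine the variable sitting between them.

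More concretely, I would first apply Lemma~\ref{lem-nu-plus-2} to the hypothesis $(i,j) \in B'^+$, reading off the conclusion that $U'_{i,j+1} = U_{1,1}$. Next, I would apply the same lemma to the hypothesis $(i,j+1) \in B'^+$, this time reading off the conclusion $U'_{i,j+1} = U_{2,1}$ (i.e.\ the ``$U'_{i,j} = U_{2,1}$'' part of the lemma, with the index shifted by one). Equating the two descriptions of the variable sitting at position $(i,j+1)$ of $U'_i$ yields $U_{1,1} = U_{2,1}$.

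To conclude, I invoke coherence of $(U_1,U_2,<)$. Because a case~II transformation is applied, $(1,1) > (2,1)$ under the boundary order, and so any coherence witness $L : \mathcal{X}^* \to (\mathbb{N},+)$ satisfies $L(U_{1,1}) > L(U_{2,1})$; this is incompatible with $U_{1,1} = U_{2,1}$, delivering the contradiction.

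There is no real obstacle: the argument is a two-line deduction from Lemma~\ref{lem-nu-plus-2} together with the coherence/case~II setup. The only point requiring mild care is keeping the notation straight, namely that $U_{1,1}$ and $U_{2,1}$ in the conclusion of Lemma~\ref{lem-nu-plus-2} always refer to variables in the \emph{original} equation $(U_1,U_2)$, so that the two applications of the lemma do produce genuinely comparable expressions for the single variable $U'_{i,j+1}$.
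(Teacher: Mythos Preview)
Your argument is correct, but it differs from the paper's. The paper argues directly from the definition of $\mu$: since $\mu(i',j'+1) \le \mu(i',j') + 2$ for all $i',j'$, consecutive images of $\mu(i,\cdot)$ are never more than~$2$ apart, so the complement $B'^+$ cannot contain two consecutive indices in the same row. Your route instead invokes Lemma~\ref{lem-nu-plus-2} twice to force $U'_{i,j+1}$ to equal both $U_{1,1}$ and $U_{2,1}$, and then uses coherence together with the case~II hypothesis $(1,1) > (2,1)$ to rule out $U_{1,1} = U_{2,1}$. The paper's approach is more self-contained (it does not cite the earlier lemma and keeps the argument at the level of index arithmetic), whereas yours is arguably cleaner once Lemma~\ref{lem-nu-plus-2} is available and makes the role of coherence explicit. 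Both ultimately rest on the same structural fact about $T(U_i)$, namely that every occurrence of $U_{1,1}$ is immediately preceded by $U_{2,1}$.
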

\begin{proof}
    Recall that $B'^- = \{(i',\mu(i',j')) \mid (i',j') \in B^-\}$ and $B'^+ = B' \setminus B'^-$. The lemma follows from the observation that $\mu(i',j'+1) \le \mu(i',j') + 2$ for all $i',j'$.
\end{proof}

\begin{lemma} \label{lem-cut-nu}
    Suppose $(U'_1,U'_2,<')$ is the result of applying a case II coherent extended Nielsen transformation to $(U_1,U_2,<)$. If $(i,j)$ cuts $(i',j')$ with respect to $(U'_1,U'_2,<')$, then $(i,\nu(i,j))$ cuts $(i',\nu(i',j'))$ with respect to $(U_1,U_2,<)$.
\end{lemma}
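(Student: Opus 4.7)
The plan is to verify the three conditions in the definition of ``cuts'' for $(i, \nu(i, j))$ and $(i', \nu(i', j'))$ directly in $(U_1, U_2, <)$. I first observe that the cut hypothesis forces $i \ne i'$: if $i = i'$, then $(i, j) \ne (i', j')$ would force $j \ne j'$, and whichever of $j, j'$ is smaller contradicts the natural total order on $B'_i$ via $(i, j-1) <' (i, j')$ or $(i, j'-1) <' (i, j)$. This immediately gives $(i, \nu(i, j)) \ne (i', \nu(i', j'))$.

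The main work is establishing $(i, \nu(i, j) - 1) < (i', \nu(i', j'))$; the third cut condition follows from the same argument with the roles of $(i, j)$ and $(i', j')$ swapped. My main tool is Lemma~\ref{lem-boundary-order}, which translates $<'$-inequalities into $<$-inequalities via $\nu$ whenever the left-hand boundary lies in $B'^-$. I case split on the status of $(i, j-1)$.

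When $(i, j-1) \in B'^-$, applying Lemma~\ref{lem-boundary-order} to $(i, j-1) <' (i', j')$ yields $(i, \nu(i, j-1)) < (i', \nu(i', j'))$. A direct computation from the definition of $\nu$ shows that $\nu(i, j-1) = \nu(i, j) - 1$ when $U'_{i, j} \ne U_{1, 1}$ (which finishes this subcase), while $\nu(i, j-1) = \nu(i, j)$ when $U'_{i, j} = U_{1, 1}$, in which case I chain through $(i, \nu(i, j) - 1) < (i, \nu(i, j)) < (i', \nu(i', j'))$. When $(i, j-1) \in B'^+$ with $j \ge 3$, Lemma~\ref{lem-consec-plus} puts $(i, j-2)$ in $B'^-$, and transitivity gives $(i, j-2) <' (i', j')$; applying Lemma~\ref{lem-boundary-order} and using Lemma~\ref{lem-nu-plus-2} to identify $U'_{i, j-1} = U_{2, 1}$ and $U'_{i, j} = U_{1, 1}$, a direct calculation yields $\nu(i, j-2) = \nu(i, j) - 1$ and the result follows.

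The remaining edge cases are those where the predecessor boundary is virtual, namely $j = 1$ and $j = 2$ with $(i, 1) \in B'^+$. In each of these, either $\nu(i, j) - 1 = 0$ (so $(i, 0) < (i', \nu(i', j'))$ holds by the virtual boundary convention), or $\nu(i, j) - 1 = 1$; in the latter case, the formula for $\nu$ combined with the fact that $U'_{1, 1} = U_{1, 1}$ in every case~II transformation forces $i = 2$, hence $i' = 1$, and the required inequality $(2, 1) < (1, \nu(1, j'))$ follows from the case~II hypothesis $(2, 1) < (1, 1)$ and monotonicity $(1, 1) \le (1, \nu(1, j'))$. I expect the main obstacle to be organizing the case analysis cleanly around the $B'^-$ versus $B'^+$ distinction and the shifts of $\nu$ across consecutive positions, together with the virtual-boundary edge cases where the case~II ordering hypothesis must be invoked directly in place of Lemma~\ref{lem-boundary-order}.
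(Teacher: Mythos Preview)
Your proposal is correct and follows essentially the same approach as the paper: both verify the three cut conditions directly, first observing that $i\neq i'$, and then proving $(i,\nu(i,j)-1)<(i',\nu(i',j'))$ by casing on whether $(i,j-1)$ lies in $B'^-$, in $B'^+$ (reducing to $(i,j-2)\in B'^-$ via Lemma~\ref{lem-consec-plus}), or outside $B'$, and handling the small-$j$ edge cases by explicitly computing $\nu$ and invoking the case~II hypothesis $(2,1)<(1,1)$. The only cosmetic difference is that the paper uses the uniform observation $\nu(i,j)-1\le\nu(i,j-1)$ in place of your finer split on whether $U'_{i,j}=U_{1,1}$, and likewise handles the $(i,j-1)\in B'^+$ case via $\nu(i,j-1)=\nu(i,j)$ (Lemma~\ref{lem-nu-j-plus-1}) together with $\nu(i,j-1)-1\le\nu(i,j-2)$ rather than your direct computation of $\nu(i,j-2)=\nu(i,j)-1$; these are equivalent bookkeeping choices.
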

\begin{proof}
    Suppose $(i,j)$ cuts $(i',j')$ with respect to $(U'_1,U'_2,<')$. This means
    \begin{align*}
        (i,j) &\neq (i',j'), \\
        (i,j-1) &<' (i',j'), \quad\text{and} \\
        (i',j'-1) &<' (i,j),
    \end{align*}
    which implies $\{i,i'\} = \{1,2\}$. Thus, $(i,\nu(i,j)) \neq (i',\nu(i',j'))$.

    We claim that $(i,\nu(i,j)-1) < (i',\nu(i',j'))$, which we prove by splitting into a few cases.
    
    If $(i,j-1) \in B'^-$, then $(i,\nu(i,j-1)) < (i',\nu(i',j'))$ by Lemma~\ref{lem-boundary-order}. We have $\nu(i,j) - 1 \le \nu(i,j-1)$, so $(i,\nu(i,j)-1) < (i',\nu(i',j'))$ in this case.
    
    If $(i,j-1) \in B'^+$ and $j \neq 2$, then $(i,j-2) \in B'^-$ by Lemma~\ref{lem-consec-plus}, so $(i,\nu(i,j-2)) < (i',\nu(i',j'))$ by Lemma~\ref{lem-boundary-order}. We have $\nu(i,j-1) - 1 \le \nu(i,j-2)$, and by Lemma~\ref{lem-nu-j-plus-1}, $\nu(i,j-1) = \nu(i,j)$. Hence, $\nu(i,j) - 1 \le \nu(i,j-2)$, so $(i,\nu(i,j)-1) < (i',\nu(i',j'))$ in this case.

    If $(i,j-1) \in B'^+$ and $j=2$, then we claim $i=2$. Indeed, $(1,\mu(1,1)) = (1,1)$, so $(1,1) \in B'^-$. Now, we have $(i,\nu(i,j-1)-1) = (2,1) < (i',\nu(i',j'))$, since $i \not=i'$ and $(2,1)$ is the least element in $B$. By Lemma~\ref{lem-nu-j-plus-1}, $\nu(i,j-1) = \nu(i,j)$, so $(i,\nu(i,j)-1) < (i',\nu(i',j'))$ in this case.

    Finally, if $(i,j-1) \notin B'^- \cup B'^+ = B'$, then $j=1$. If $i=1$, then $\nu(i,j) = 1$, so $(i,\nu(i,j)-1) = (i,0) < (i',\nu(i',j'))$. If $i=2$, then $\nu(i,j) = 2$, so $(i,\nu(i,j)-1) = (2,1) < (i',\nu(i',j'))$.

    We have proved that $(i,\nu(i,j)-1) < (i',\nu(i',j'))$. By symmetry, we can also conclude that $(i',\nu(i',j')-1) < (i,\nu(i,j))$. Therefore, $(i,\nu(i,j))$ cuts $(i',\nu(i',j'))$ with respect to $(U_1,U_2,<)$.
\end{proof}

\begin{lemma} \label{lem-left-inverse}
    Suppose $(U'_1,U'_2,<')$ is the result of applying a case II coherent extended Nielsen transformation to $(U_1,U_2,<)$. If $(i,j) \in B^-$, then $\nu(i,\mu(i,j)) = j$.
\end{lemma}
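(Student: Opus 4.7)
The plan is to expand both $\mu(i,j)$ and $\nu(i,\mu(i,j))$ from their definitions and reduce everything to a count of occurrences of $U_{1,1}$ in $U_i$ and in $U'_i$. Write $c = |\{j' \mid j' \le j \text{ and } U_{i,j'} = U_{1,1}\}|$, so that $\mu(i,j) = j + c - 1$. Unfolding $\nu$ gives $\nu(i,\mu(i,j)) = (j+c-1) - c' + 1$, where $c' = |\{k \mid k \le \mu(i,j) \text{ and } U'_{i,k} = U_{1,1}\}|$. So the whole lemma reduces to showing $c' = c$.

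The main step is characterizing where $U_{1,1}$ occurs in $U'_i$. First I would note that in case II we must have $U_{1,1} \neq U_{2,1}$: if they were equal, then for any candidate length map $L$ we would get $L(U_{1,1}) = L(U_{2,1})$, forcing $(1,1) \approx (2,1)$ by coherence, contradicting the case II hypothesis $(1,1) > (2,1)$. Then by Lemma~\ref{lem-nu-plus-2}, every $(i,k) \in B'^+$ satisfies $U'_{i,k} = U_{2,1} \neq U_{1,1}$, so any position $k$ with $U'_{i,k} = U_{1,1}$ lies in $B'^-$, i.e., $k = \mu(i,j'')$ for some $(i,j'') \in B^-$. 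Lemma~\ref{lem-mu-preserve} then gives $U'_{i,\mu(i,j'')} = U_{i,j''}$, so such $k$ are in bijection (via $\mu$) with those $(i,j'') \in B^-$ having $U_{i,j''} = U_{1,1}$.

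Since $\mu$ is strictly increasing in its second coordinate (the defining formula increases by $1$ or $2$ each step), the condition $\mu(i,j'') \le \mu(i,j)$ is equivalent to $j'' \le j$. Thus
\[
    c' = \bigl|\{j'' \mid (i,j'') \in B^-,\ j'' \le j,\ U_{i,j''} = U_{1,1}\}\bigr|.
\]
For $i=1$, every $(1,j')$ with $j' \in [m]$ lies in $B^-$, so $c' = c$. For $i=2$, $B^-$ differs from $B$ only at $(2,1)$, and since $U_{2,1} \neq U_{1,1}$, removing $(2,1)$ from the count does not drop any $U_{1,1}$'s, so again $c' = c$.

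Substituting back gives $\nu(i,\mu(i,j)) = (j+c-1) - c + 1 = j$, which is the desired identity. The only subtle point is justifying $U_{1,1} \neq U_{2,1}$ in case II (needed both to apply Lemma~\ref{lem-nu-plus-2} meaningfully and to handle the $i=2$ case), but this follows cleanly from coherence; the rest is essentially bookkeeping on top of Lemmas~\ref{lem-mu-preserve} and~\ref{lem-nu-plus-2}.
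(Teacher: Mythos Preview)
Your proof is correct. Both your argument and the paper's reduce to showing that the number of $U_{1,1}$'s among $U_{i,1},\dots,U_{i,j}$ equals the number of $U_{1,1}$'s among $U'_{i,1},\dots,U'_{i,\mu(i,j)}$, and then finish by direct substitution into the definitions of $\mu$ and $\nu$.

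The difference is in how this count equality is established. The paper argues directly from the identity $U'_{i,1}\cdots U'_{i,\mu(i,j)} = T(U_{i,1}\cdots U_{i,j})^+$: since $T$ preserves the number of occurrences of $U_{1,1}$ and the first symbol of $T(U_{i,1}\cdots U_{i,j})$ is never $U_{1,1}$, dropping it does not change the count. Your route instead leverages the structural lemmas already proved: Lemma~\ref{lem-nu-plus-2} pins the $U_{1,1}$'s of $U'_i$ to positions in $B'^-$, Lemma~\ref{lem-mu-preserve} matches them via $\mu$ with $U_{1,1}$'s in $U_i$, and monotonicity of $\mu$ converts the position constraint. The paper's version is slightly more self-contained; yours has the virtue of making the hypothesis $U_{1,1}\neq U_{2,1}$ explicit (the paper uses it tacitly in both the ``$T$ preserves the count'' and ``first variable is not $U_{1,1}$'' steps).
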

\begin{proof}
    We claim that
    \[
        |\{(i,j') \mid j' \le j \ \text{and} \ U_{i,j'} = U_{1,1}\}| = \left|\left\{(i,j') \mathrel{}\middle|\mathrel{} j' \le \mu(i,j) \ \text{and} \ U'_{i,j'} = U_{1,1}\right\}\right|.
    \]
    Observe that the left-hand side is the number of variables equal to $U_{1,1}$ in
    \[
        U_{i,1} U_{i,2} \cdots U_{i,j},
    \]
    and the right-hand side is the number of variables equal to $U_{1,1}$ in
    \[
        U'_{i,1} U'_{i,2} \cdots U'_{i,\mu(i,j)} = T(U_{i,1} U_{i,2} \cdots U_{i,j})^+,
    \]
    where $T$ is the endomorphism on $\mathcal{X}^*$ given by $T(U_{1,1}) = U_{2,1} U_{1,1}$ and the identity function on other elements of $\mathcal{X}$. Now, $T$ preserves the number of instances of $U_{1,1}$ in a word, and the first variable in $T(U_{i,1} U_{i,2} \cdots U_{i,j})$ is not $U_{1,1}$. It follows that $U_{i,1} U_{i,2} \cdots U_{i,j}$ and $T(U_{i,1} U_{i,2} \cdots U_{i,j})^+$ have the same number of variables equal to $U_{1,1}$, establishing the claim.

    Now, we have
    \begin{align*}
        \nu(i,\mu(i,j)) &= \mu(i,j)-\left|\left\{(i,j') \mathrel{}\middle|\mathrel{} j' \le \mu(i,j) \ \text{and} \ U'_{i,j'} = U_{1,1}\right\}\right| + 1 \\
        &= j+|\{(i,j') \mid j' \le j \ \text{and} \ U_{i,j'} = U_{1,1}\}|-\left|\left\{(i,j') \mathrel{}\middle|\mathrel{} j' \le \mu(i,j) \ \text{and} \ U'_{i,j'} = U_{1,1}\right\}\right| \\
        &= j. \qedhere
    \end{align*}
\end{proof}

\begin{lemma} \label{lem-nu-minus}
    Suppose $(U'_1,U'_2,<')$ is the result of applying a case II coherent extended Nielsen transformation to $(U_1,U_2,<)$. If $(i,j) \in B'^-$, then $U_{i,\nu(i,j)} = U'_{i,j}$.
\end{lemma}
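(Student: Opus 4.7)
The plan is to recognize this statement as essentially a corollary of the two previous lemmas, \ref{lem-mu-preserve} and \ref{lem-left-inverse}, which together say that $\mu$ preserves the variable in position $(i,j)$ and that $\nu$ is a genuine left inverse of $\mu$ on $B^-$. The strategy is to trace $(i,j) \in B'^-$ back to its preimage under $\mu$, apply $\nu$ to recover the preimage index, and then invoke Lemma~\ref{lem-mu-preserve} to conclude that the variables at the two indices agree.

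Concretely, I would start by unpacking the definition $B'^- = \{(i',\mu(i',j')) \mid (i',j') \in B^-\}$ from Section~\ref{sec-ext-nielsen}. Since $(i,j) \in B'^-$, there exists an index $k$ with $(i,k) \in B^-$ and $j = \mu(i,k)$. I would then apply Lemma~\ref{lem-left-inverse} to this pair to obtain $\nu(i,j) = \nu(i,\mu(i,k)) = k$, which pins down the value of $\nu(i,j)$ in terms of the preimage index $k$. Next, I would apply Lemma~\ref{lem-mu-preserve} to $(i,k) \in B^-$, yielding $U_{i,k} = U'_{i,\mu(i,k)} = U'_{i,j}$. Chaining these two equalities gives $U_{i,\nu(i,j)} = U_{i,k} = U'_{i,j}$, which is the desired conclusion.

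There is no real obstacle here; the whole argument is a short diagram chase through the definitions of $\mu$, $\nu$, $B^-$, and $B'^-$. The only subtlety worth mentioning is that the preimage $k$ above is unique: this is guaranteed precisely by Lemma~\ref{lem-left-inverse}, which tells us that $\nu \circ \mu$ is the identity on $B^-$, so the preimage-choosing step is unambiguous. Once this is noted, the proof reduces to one line combining the two cited lemmas.
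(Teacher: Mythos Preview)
Your proposal is correct and matches the paper's proof almost verbatim: the paper also picks $k$ with $\mu(i,k)=j$, then chains Lemmas~\ref{lem-left-inverse} and~\ref{lem-mu-preserve} to obtain $U_{i,\nu(i,j)} = U_{i,\nu(i,\mu(i,k))} = U_{i,k} = U'_{i,\mu(i,k)} = U'_{i,j}$. Your remark on uniqueness of $k$ is a nice extra observation but not needed for the argument.
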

\begin{proof}
    Let $k$ be such that $\mu(i,k) = j$. By Lemmas~\ref{lem-left-inverse} and \ref{lem-mu-preserve}, we have
    \[
        U_{i,\nu(i,j)} = U_{i,\nu(i,\mu(i,k))} = U_{i,k} = U'_{i,\mu(i,k)} = U'_{i,j}. \qedhere
    \]
\end{proof}

\begin{lemma} \label{lem-nu-plus-1}
    Suppose $(U'_1,U'_2,<')$ is the result of applying a case II coherent extended Nielsen transformation to $(U_1,U_2,<)$. If $(i,j) \in B'^+$, then $(i,\nu(i,j))$ mirrors $(1,1)$ with respect to $(U_1,U_2,<)$.
\end{lemma}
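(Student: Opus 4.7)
The plan is to first establish $U_{i, \nu(i,j)} = U_{1,1}$ by chaining together previously established lemmas, and then argue $(i, \nu(i,j)) \not\approx (1,1)$ using coherence, so that the two together give mirroring via the observation stated immediately after the definition of mirroring.

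For the first part, since $(i, j) \in B'^+$, Lemma~\ref{lem-consec-plus} implies $(i, j+1) \in B'^-$, so there is $(i, k) \in B^-$ with $\mu(i, k) = j + 1$. Lemma~\ref{lem-nu-plus-2} gives $U'_{i, j+1} = U_{1, 1}$, and Lemma~\ref{lem-mu-preserve} gives $U'_{i, \mu(i, k)} = U_{i, k}$; combining these, $U_{i, k} = U_{1, 1}$. Lemma~\ref{lem-left-inverse} gives $\nu(i, j+1) = k$, and Lemma~\ref{lem-nu-j-plus-1} gives $\nu(i, j) = \nu(i, j+1) = k$, so $U_{i, \nu(i, j)} = U_{1, 1}$.

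For the second part, I first show $k \geq 2$. Since $(i, k) \in B^- = B \setminus \{(2,1)\}$, we have $(i, k) \neq (2, 1)$. If $k = 1$, then $i = 1$, so $(i, k) = (1, 1)$; but a direct computation from the definition of $\mu$ gives $\mu(1, 1) = 1 + |\{(1,1)\}| - 1 = 1$, contradicting $\mu(i, k) = j + 1 \geq 2$. With $k \geq 2$ established, coherence of $(U_1, U_2, <)$ yields
\[
    L(U_{i, 1} \cdots U_{i, k}) = L(U_{i, 1} \cdots U_{i, k-1}) + L(U_{1, 1}) > L(U_{1, 1}),
\]
since $L$ is positive on every variable and $k - 1 \geq 1$. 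Hence $(i, k) > (1, 1)$, so $(i, \nu(i, j)) = (i, k) \not\approx (1, 1)$. Combined with $U_{i, \nu(i, j)} = U_{1, 1}$, coherence guarantees $(i, \nu(i, j))$ mirrors $(1, 1)$.

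The main obstacle is identifying the right chain of prior lemmas linking $(i, j) \in B'^+$ to a source boundary $(i, k) \in B^-$ via $\mu$; once $(i, k)$ is located, the rest is a routine calculation together with a short coherence argument to rule out $k = 1$ and to force $(i, k) \not\approx (1, 1)$.
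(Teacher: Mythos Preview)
Your proof is correct and follows essentially the same two-step plan as the paper: first obtain $U_{i,\nu(i,j)}=U_{1,1}$, then use coherence to rule out $(i,\nu(i,j))\approx(1,1)$. The only differences are cosmetic: for the first step the paper cites Lemma~\ref{lem-nu-minus} directly (your chain through $\mu$, Lemma~\ref{lem-mu-preserve}, and Lemma~\ref{lem-left-inverse} is exactly its proof), and for the second step the paper argues by contradiction that $\nu(i,j)=1$ forces $U'_{i,j}=U_{1,1}$ against Lemma~\ref{lem-nu-plus-2}, whereas you rule out $k=1$ via $\mu(1,1)=1$ and then use $L$ positively---same idea, slightly different bookkeeping.
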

\begin{proof}
    Suppose $(i,j) \in B'^+$. Then, $(i,j+1) \in B'^-$ by Lemma~\ref{lem-consec-plus}, so by Lemmas~\ref{lem-nu-j-plus-1}, \ref{lem-nu-minus}, and \ref{lem-nu-plus-2}, we have
    \[
        U_{i,\nu(i,j)} = U_{i,\nu(i,j+1)} = U'_{i,j+1} = U_{1,1}.
    \]
    It remains to show that $(i,\nu(i,j)) \not\approx (1,1)$. Suppose for contradiction that $(i,\nu(i,j)) \approx (1,1)$. By coherence, $(i,\nu(i,j)-1) \approx (1,0)$, so $\nu(i,j) = 1$. Hence,
    \[
        j = \left|\left\{(i,j') \mathrel{}\middle|\mathrel{} j' \le j \ \text{and} \ U'_{i,j'} = U_{1,1}\right\}\right|,
    \]
    which implies that $U'_{i,j'} = U_{1,1}$ for all $j' \le j$. In particular, $U'_{i,j} = U_{1,1}$, contradicting Lemma~\ref{lem-nu-plus-2}.
\end{proof}

\begin{lemma} \label{lem-nu-neq}
    Suppose $(U'_1,U'_2,<')$ is the result of applying a case II coherent extended Nielsen transformation to $(U_1,U_2,<)$. If $(i,j)$ mirrors $(i',j')$ with respect to $(U'_1,U'_2,<')$ and $U_{i,\nu(i,j)} = U_{i',\nu(i',j')}$, then $(i,\nu(i,j))$ mirrors $(i',\nu(i',j'))$ with respect to $(U_1,U_2,<)$.
\end{lemma}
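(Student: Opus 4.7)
The plan is to reduce the desired mirror condition to the single inequality $(i,\nu(i,j)) \not\approx (i',\nu(i',j'))$ with respect to $<$. The variable-equality half of the mirror definition is handed to us as a hypothesis, and for coherent extended word equations (as noted immediately after the definition of mirror), the disjunctive clause in the definition collapses to $(i,j) \not\approx (i',j')$. I will then case-split on whether $(i,j)$ and $(i',j')$ lie in $B'^-$ or $B'^+$.

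If both $(i,j)$ and $(i',j')$ lie in $B'^-$, the argument is short: since $(U'_1, U'_2, <')$ is coherent and $(i,j)$ mirrors $(i',j')$ with respect to $<'$, we have $(i,j) \not\approx' (i',j')$, so WLOG $(i,j) <' (i',j')$, and Lemma~\ref{lem-boundary-order} immediately yields $(i,\nu(i,j)) < (i',\nu(i',j'))$. If exactly one of the two boundaries lies in $B'^+$, say $(i,j) \in B'^+$ and $(i',j') \in B'^-$, I will show that the hypothesis itself fails, making the case vacuous: Lemma~\ref{lem-nu-plus-1} gives $U_{i,\nu(i,j)} = U_{1,1}$, while combining Lemma~\ref{lem-nu-minus}, the mirror condition $U'_{i,j} = U'_{i',j'}$ in $<'$, and Lemma~\ref{lem-nu-plus-2} gives $U_{i',\nu(i',j')} = U'_{i',j'} = U'_{i,j} = U_{2,1}$. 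The hypothesis then forces $U_{1,1} = U_{2,1}$, contradicting coherence of $(U_1,U_2,<)$ together with $(1,1) > (2,1)$.

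The main obstacle is the case that both $(i,j)$ and $(i',j')$ lie in $B'^+$. I will argue by contradiction, assuming $(i,\nu(i,j)) \approx (i',\nu(i',j'))$. Lemma~\ref{lem-consec-plus} ensures $(i,j+1), (i',j'+1) \in B'^-$, and Lemma~\ref{lem-nu-j-plus-1} gives $\nu(i,j+1) = \nu(i,j)$ and $\nu(i',j'+1) = \nu(i',j')$; the defining property of $<'$ on $B'^-$ then transports the equivalence to $(i,j+1) \approx' (i',j'+1)$. To push this equivalence back down one step, I will invoke coherence of $(U'_1, U'_2, <')$: by Lemma~\ref{lem-nu-plus-2}, $U'_{i,j+1} = U_{1,1} = U'_{i',j'+1}$, so the length-witness $L'$ for $<'$ satisfies $L'(U'_{i,1}\cdots U'_{i,j}) + L'(U_{1,1}) = L'(U'_{i',1}\cdots U'_{i',j'}) + L'(U_{1,1})$, from which $(i,j) \approx' (i',j')$ follows. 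This contradicts the assumption that $(i,j)$ mirrors $(i',j')$ in the coherent extended word equation $(U'_1, U'_2, <')$, completing the argument.
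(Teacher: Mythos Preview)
Your proof is correct and follows essentially the same approach as the paper's: reduce to showing $(i,\nu(i,j)) \not\approx (i',\nu(i',j'))$, then case-split on membership in $B'^-$ versus $B'^+$, using Lemma~\ref{lem-boundary-order} for the $B'^-$ case, deriving $U_{1,1}=U_{2,1}$ for the mixed case, and invoking coherence via the equal variable $U'_{i,j+1}=U'_{i',j'+1}=U_{1,1}$ for the $B'^+$ case. The only cosmetic differences are that the paper fixes $(i,j) <' (i',j')$ once at the outset (so its first case is just ``$(i,j)\in B'^-$'') and argues the $B'^+$ case directly rather than by contradiction.
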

\begin{proof}
    Suppose $(i,j)$ mirrors $(i',j')$ with respect to $(U'_1,U'_2,<')$ and $U_{i,\nu(i,j)} = U_{i',\nu(i',j')}$. It suffices to show that $(i,\nu(i,j)) \not\approx (i',\nu(i',j'))$. Without loss of generality, $(i,j) <' (i',j')$. If $(i,j) \in B'^-$, then by Lemma~\ref{lem-boundary-order}, $(i,\nu(i,j)) < (i',\nu(i',j'))$, as desired. If $(i,j) \in B'^+$ and $(i',j') \in B'^-$, then by Lemmas~\ref{lem-nu-plus-1}, \ref{lem-nu-minus}, and \ref{lem-nu-plus-2}, we have
    \[
        U_{1,1} = U_{i,\nu(i,j)} = U_{i',\nu(i',j')} = U'_{i',j'} = U'_{i,j} = U_{2,1},
    \]
    a contradiction. Finally, if $(i,j), (i',j') \in B'^+$, then by Lemma~\ref{lem-nu-plus-2}, $U'_{i,j+1} = U'_{i',j'+1} = U_{1,1}$. By coherence, $(i,j+1) <' (i',j'+1)$. Since $(i,j+1), (i',j'+1) \in B'^-$ by Lemma~\ref{lem-consec-plus}, we have $(i,\nu(i,j+1)) < (i',\nu(i',j'+1))$. By Lemma~\ref{lem-nu-j-plus-1}, $\nu(i,j) = \nu(i,j+1)$ and $\nu(i',j') = \nu(i',j'+1)$. Therefore, $(i,\nu(i,j)) < (i',\nu(i',j'))$, as desired.
\end{proof}

\begin{lemma} \label{lem-mirror-case2}
    Suppose $(U'_1,U'_2,<')$ is the result of applying a case II coherent extended Nielsen transformation to $(U_1,U_2,<)$. If $(i,j)$ mirrors $(i',j')$ with respect to $(U'_1,U'_2,<')$ and $(i,\nu(i,j))$ does not mirror $(i',\nu(i',j'))$ with respect to $(U_1,U_2,<)$, then either $(i,j) \in B'^+$ and $(i',j') \in \{(i'',j'') \in B'^- \mid U'_{i'',j''} = U_{2,1}\}$ or else $(i,j) \in \{(i'',j'') \in B'^- \mid U'_{i'',j''} = U_{2,1}\}$ and $(i',j') \in B'^+$.
\end{lemma}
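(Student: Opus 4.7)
The plan is to prove the statement by combining the contrapositive of Lemma~\ref{lem-nu-neq} with a four-way case split on whether each of $(i,j)$ and $(i',j')$ lies in $B'^-$ or $B'^+$. Observe first that since $(i,j)$ mirrors $(i',j')$ with respect to $(U'_1,U'_2,<')$, we have $U'_{i,j}=U'_{i',j'}$. Lemma~\ref{lem-nu-neq} says that whenever such mirroring holds and $U_{i,\nu(i,j)}=U_{i',\nu(i',j')}$, the mirroring descends to $(U_1,U_2,<)$. So our hypothesis that it does not descend forces
\[
    U_{i,\nu(i,j)} \neq U_{i',\nu(i',j')}.
\]

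I would then rule out the two ``pure'' cases. If both $(i,j)$ and $(i',j')$ lie in $B'^-$, Lemma~\ref{lem-nu-minus} yields $U_{i,\nu(i,j)} = U'_{i,j} = U'_{i',j'} = U_{i',\nu(i',j')}$, which contradicts the displayed inequality. If both lie in $B'^+$, Lemma~\ref{lem-nu-plus-1} says that both $(i,\nu(i,j))$ and $(i',\nu(i',j'))$ mirror $(1,1)$ with respect to $(U_1,U_2,<)$, forcing $U_{i,\nu(i,j)} = U_{i',\nu(i',j')} = U_{1,1}$, again contradicting the inequality.

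Therefore exactly one of $(i,j),(i',j')$ lies in $B'^+$ and the other in $B'^-$. Since the conclusion of the lemma is symmetric in the two boundaries, I may assume $(i,j) \in B'^-$ and $(i',j') \in B'^+$; the other subcase is identical. Lemma~\ref{lem-nu-plus-2} gives $U'_{i',j'} = U_{2,1}$, and then $U'_{i,j} = U'_{i',j'} = U_{2,1}$ by the mirroring hypothesis, so $(i,j) \in \{(i'',j'') \in B'^- \mid U'_{i'',j''} = U_{2,1}\}$, exactly matching the second alternative of the conclusion. I do not anticipate a real obstacle: this lemma is essentially a taxonomy of how the mirroring relation can degenerate under $\nu$, and all the substantive machinery (especially the characterization of $B'^+$ in Lemmas~\ref{lem-nu-plus-2} and \ref{lem-nu-plus-1}, and the preservation results in Lemmas~\ref{lem-nu-minus} and \ref{lem-nu-neq}) is already in place; the only care needed is to invoke the right lemma in each branch of the case split.
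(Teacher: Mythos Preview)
Your proposal is correct and follows essentially the same approach as the paper: invoke Lemma~\ref{lem-nu-neq} to obtain $U_{i,\nu(i,j)} \neq U_{i',\nu(i',j')}$, rule out the two ``pure'' cases using Lemmas~\ref{lem-nu-minus} and~\ref{lem-nu-plus-1}, and then finish the mixed case with Lemma~\ref{lem-nu-plus-2}. The paper's proof differs only in cosmetic ordering.
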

\begin{proof}
    Suppose $(i,j)$ mirrors $(i',j')$ with respect to $(U'_1,U'_2,<')$ and $(i,\nu(i,j))$ does not mirror $(i',\nu(i',j'))$ with respect to $(U_1,U_2,<)$. By Lemma~\ref{lem-nu-neq}, $U_{i,\nu(i,j)} \neq U_{i',\nu(i',j')}$. We first show that either $(i,j) \in B'^+$ and $(i',j') \in B'^-$ or else $(i,j) \in B'^-$ and $(i',j') \in B'^+$. If not, then either $(i,j), (i',j') \in B'^+$ or $(i,j), (i',j') \in B'^-$. In the former case, Lemma~\ref{lem-nu-plus-1} implies that $U_{i,\nu(i,j)} = U_{1,1} = U_{i',\nu(i',j')}$, a contradiction. In the latter case, Lemma~\ref{lem-nu-minus} implies that
    \[
        U_{i,\nu(i,j)} = U'_{i,j} = U'_{i',j'} = U_{i',\nu(i',j')},
    \]
    a contradiction.

    If $(i,j) \in B'^+$ and $(i',j') \in B'^-$, then by Lemma~\ref{lem-nu-plus-2}, $U'_{i',j'} = U'_{i,j} = U_{2,1}$, so $(i',j') \in \{(i'',j'') \in B'^- \mid U'_{i'',j''} = U_{2,1}\}$. Similarly, if $(i,j) \in B'^-$ and $(i',j') \in B'^+$, then $(i,j) \in \{(i'',j'') \in B'^- \mid U'_{i'',j''} = U_{2,1}\}$.
\end{proof}

\begin{lemma} \label{lem-nu-mirror}
    Suppose $(U'_1,U'_2,<')$ is the result of applying a case II coherent extended Nielsen transformation to $(U_1,U_2,<)$. If $(i,j) \in B'^-$ and $U'_{i,j} = U_{2,1}$, then $(i,\nu(i,j))$ mirrors $(2,1)$ with respect to $(U_1,U_2,<)$.
\end{lemma}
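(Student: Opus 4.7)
The plan is to combine Lemma~\ref{lem-nu-minus}, which identifies the variable at position $(i,\nu(i,j))$ in $(U_1,U_2,<)$, with a short casework argument on $i$ to rule out $(i,\nu(i,j)) \approx (2,1)$. Once both ingredients are in hand, the remark after the definition of mirroring (that in a coherent extended word equation $(i,j)$ mirrors $(i',j')$ if and only if $U_{i,j} = U_{i',j'}$ and $(i,j) \not\approx (i',j')$) finishes the proof immediately.

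First I would apply Lemma~\ref{lem-nu-minus} to the hypothesis $U'_{i,j} = U_{2,1}$ to obtain $U_{i,\nu(i,j)} = U'_{i,j} = U_{2,1}$, which gives the variable-equality clause of mirroring. For the inequivalence clause, I would argue by contradiction, assuming $(i,\nu(i,j)) \approx (2,1)$ and splitting on $i$. If $i=1$, then since $\nu$ is at least $1$ on $B'$, we have $(1,\nu(1,j)) \gtrsim (1,1)$; but we are in case II, so $(1,1) > (2,1)$, contradicting $(1,\nu(1,j)) \approx (2,1)$. If $i=2$, then since $(2,j) \in B'^-$, Lemma~\ref{lem-left-inverse} gives some $k$ with $\mu(2,k) = j$ and $\nu(2,j) = k$, and because $B^- = B \setminus \{(2,1)\}$ we must have $k \ge 2$, forcing $(2,\nu(2,j)) = (2,k) > (2,1)$, again a contradiction.

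The main obstacle is simply keeping the bookkeeping of $\mu$, $\nu$, $B^-$, and $B'^-$ straight. The two crucial facts are that Lemma~\ref{lem-left-inverse} lets us transport information about $\nu$ on $B'^-$ back to indices in $B^-$, and that $B^-$ was defined precisely to exclude $(2,1)$, which is what forces $\nu(2,j) \neq 1$ in the $i=2$ subcase. Once these observations are recorded, the argument is essentially immediate and does not require invoking coherence beyond the standard remark on mirroring.
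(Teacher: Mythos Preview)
Your proposal is correct and follows essentially the same route as the paper: both use Lemma~\ref{lem-nu-minus} for the variable-equality clause and then rule out $(i,\nu(i,j)) \approx (2,1)$ by a short case analysis on $i$. The only cosmetic difference is that the paper compresses your $i=1$ case into the observation that $(i,\nu(i,j)) \approx (2,1)$ forces $(i,\nu(i,j)) = (2,1)$, and then asserts $\nu(2,j) \ge 2$ directly rather than deriving it via Lemma~\ref{lem-left-inverse} and the definition of $B^-$ as you do.
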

\begin{proof}
    Suppose $(i,j) \in B'^-$ and $U'_{i,j} = U_{2,1}$. Then, by Lemma~\ref{lem-nu-minus}, $U_{i,\nu(i,j)} = U'_{i,j} = U_{2,1}$. It remains to show that $(i,\nu(i,j)) \not\approx (2,1)$. If not, then $(i,\nu(i,j)) = (2,1)$, so $i=2$. But $\nu(2,j) \ge 2$ for all $j \ge 1$, a contradiction. Hence, $(i,\nu(i,j))$ mirrors $(2,1)$ with respect to $(U_1,U_2,<)$.
\end{proof}

\begin{lemma} \label{lem-walk-case2}
    Suppose $(U'_1,U'_2,<')$ is the result of applying a case II coherent extended Nielsen transformation to $(U_1,U_2,<)$. If $G_{U_1,U_2,<}$ is acyclic, then so is $G_{U'_1,U'_2,<'}$.
\end{lemma}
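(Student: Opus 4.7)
The plan is to imitate the strategy of Lemma~\ref{lem-walk-case1}: I will show that to every edge $((i,j),(i',j')) \in E'$ there corresponds a walk of length at least $1$ from $(i,\nu(i,j))$ to $(i',\nu(i',j'))$ in $G_{U_1,U_2,<}$. Given this, a cycle in $G_{U'_1,U'_2,<'}$ (concatenating the associated walks) yields a closed walk of positive length in $G_{U_1,U_2,<}$ and hence a cycle, contradicting the hypothesis.

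To build the walk, fix an edge $((i,j),(i',j')) \in E'$ arising from $(i,j)$ cutting some $(i'',j'') \in B'$ and $(i'',j'')$ mirroring $(i',j')$, both with respect to $(U'_1,U'_2,<')$. By Lemma~\ref{lem-cut-nu}, $(i,\nu(i,j))$ cuts $(i'',\nu(i'',j''))$ with respect to $(U_1,U_2,<)$. If additionally $(i'',\nu(i'',j''))$ mirrors $(i',\nu(i',j'))$ with respect to $(U_1,U_2,<)$, then $((i,\nu(i,j)),(i',\nu(i',j'))) \in E$ and we are done with a walk of length $1$. Otherwise, Lemma~\ref{lem-mirror-case2} forces exactly one of two configurations on the pair $((i'',j''),(i',j'))$: either $(i'',j'') \in B'^+$ and $(i',j') \in B'^-$ with $U'_{i',j'} = U_{2,1}$, or the roles are swapped.

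In the first configuration, Lemma~\ref{lem-nu-plus-1} gives that $(i'',\nu(i'',j''))$ mirrors $(1,1)$ with respect to $(U_1,U_2,<)$, so $((i,\nu(i,j)),(1,1)) \in E$. Since we are in case II, $(1,1) > (2,1)$ and both exceed $(1,0) \approx (2,0)$, so $(1,1)$ cuts $(2,1)$ in $(U_1,U_2,<)$; meanwhile Lemma~\ref{lem-nu-mirror} implies $(i',\nu(i',j'))$ mirrors $(2,1)$, so $((1,1),(i',\nu(i',j'))) \in E$, producing a walk of length $2$ through $(1,1)$. The second configuration is symmetric and yields a walk through $(2,1)$, using Lemma~\ref{lem-nu-mirror} on $(i'',j'')$ to get an edge to $(2,1)$, the fact that $(2,1)$ cuts $(1,1)$, and Lemma~\ref{lem-nu-plus-1} on $(i',j')$ to produce the mirror step from $(1,1)$ to $(i',\nu(i',j'))$.

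The main obstacle is not any single step but the care needed in the case analysis that assembles these lifts: one has to track which endpoint lies in $B'^+$ versus $B'^-$, apply the right lemma on each side, and verify that the inserted intermediate vertex ($(1,1)$ or $(2,1)$) actually receives a cut edge from $(i,\nu(i,j))$ and emits a mirror edge to $(i',\nu(i',j'))$. Once this bookkeeping is done, acyclicity transfers as described.
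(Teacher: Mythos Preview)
Your proposal is correct and matches the paper's proof essentially line for line: reduce to showing each edge of $G_{U'_1,U'_2,<'}$ lifts to a walk from $(i,\nu(i,j))$ to $(i',\nu(i',j'))$ in $G_{U_1,U_2,<}$, use Lemma~\ref{lem-cut-nu} for the cut, and in the non-trivial cases use Lemma~\ref{lem-mirror-case2} together with Lemmas~\ref{lem-nu-plus-1} and~\ref{lem-nu-mirror} to route through $(1,1)$ or $(2,1)$. The paper carries out exactly this case analysis and obtains the same length-two walks in the two exceptional configurations.
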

\begin{proof}
    It suffices to show that if $((i,j),(i',j')) \in E'$, then there is a walk of length at least 1 from $(i,\nu(i,j))$ to $(i',\nu(i',j'))$ in $G_{U_1, U_2, <}$. So suppose $((i,j),(i',j')) \in E'$. Then, with respect to $(U'_1,U'_2,<')$, $(i,j)$ cuts some $(i'',j'') \in B'$ and $(i'',j'')$ mirrors $(i',j')$. By Lemma~\ref{lem-cut-nu}, $(i,\nu(i,j))$ cuts $(i'',\nu(i'',j''))$ with respect to $(U_1,U_2,<)$. If $(i'',\nu(i'',j''))$ mirrors $(i',\nu(i',j'))$ with respect to $(U_1,U_2,<)$, then $((i,\nu(i,j)),(i',\nu(i',j'))) \in E$, and we are done. Otherwise, by Lemma~\ref{lem-mirror-case2}, either $(i'',j'') \in B'^+$ and $(i',j') \in \{(i''',j''') \in B'^- \mid U'_{i''',j'''} = U_{2,1}\}$ or else $(i'',j'') \in \{(i''',j''') \in B'^- \mid U'_{i''',j'''} = U_{2,1}\}$ and $(i',j') \in B'^+$.

    First, suppose $(i'',j'') \in B'^+$ and $(i',j') \in \{(i''',j''') \in B'^- \mid U'_{i''',j'''} = U_{2,1}\}$. Then, by Lemma~\ref{lem-nu-plus-1}, $(i'',\nu(i'',j''))$ mirrors $(1,1)$ with respect to $(U_1,U_2,<)$, so $((i,\nu(i,j)), (1,1)) \in E$. Also, $(1,1)$ cuts $(2,1)$ with respect to $(U_1,U_2,<)$. By Lemma~\ref{lem-nu-mirror}, $(i',\nu(i',j'))$ mirrors $(2,1)$ with respect to $(U_1,U_2,<)$. Thus, $((1,1), (i',\nu(i',j'))) \in E$. The walk $((i,\nu(i,j)), (1,1), \break (i',\nu(i',j')))$ is our desired walk in $G_{U_1, U_2, <}$.

    Similarly, if $(i'',j'') \in \{(i''',j''') \in B'^- \mid U'_{i''',j'''} = U_{2,1}\}$ and $(i',j') \in B'^+$, then \break $((i,\nu(i,j)), (2,1), (i',\nu(i',j')))$ is our desired walk in $G_{U_1, U_2, <}$.
\end{proof}

\begin{lemma} \label{lem-mu-decrease}
    Suppose $(U'_1,U'_2,<')$ is the result of applying a case II coherent extended Nielsen transformation to $(U_1,U_2,<)$. Then, for every $(i,j) \in B'$, we have $f_{U'_1,U'_2,<'}(i,j) \le f_{U_1,U_2,<}(i,\nu(i,j))$.
\end{lemma}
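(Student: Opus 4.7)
The plan is to induct on $E'$, which is well-founded by Lemma~\ref{lem-walk-case2}. Fix $(i,j) \in B'$, pick $(i'',j'') \in B'$ attaining the maximum in the definition of $f_{U'_1,U'_2,<'}(i,j)$, and set $(k'',l'') := (i'',\nu(i'',j''))$. By Lemma~\ref{lem-cut-nu}, $(i,\nu(i,j))$ cuts $(k'',l'')$ with respect to $(U_1,U_2,<)$, so
\[
    f_{U_1,U_2,<}(i,\nu(i,j)) \ge 1 + \sum_{\substack{(k,l) \in B \\ (k'',l'') \text{ mirrors } (k,l)}} f_{U_1,U_2,<}(k,l).
\]
The induction hypothesis upper-bounds each $f_{U'_1,U'_2,<'}(i',j')$ appearing in the max-sum for $f_{U'_1,U'_2,<'}(i,j)$ by $f_{U_1,U_2,<}(i',\nu(i',j'))$, so the remaining task is to compare, via the map $(i',j') \mapsto (i',\nu(i',j'))$, the mirrors of $(i'',j'')$ in $(U'_1,U'_2,<')$ with the mirrors of $(k'',l'')$ in $(U_1,U_2,<)$.

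By Lemmas~\ref{lem-nu-neq} and~\ref{lem-mirror-case2}, the only way $\nu$ can fail to send mirrors to mirrors is when one endpoint lies in $B'^+$ and the other in $\{(p,q) \in B'^- \mid U'_{p,q} = U_{2,1}\}$. I split on $(i'',j'')$. In the easy case $(i'',j'') \in B'^-$ with $U'_{i'',j''} \ne U_{2,1}$, no such obstruction occurs: using Lemmas~\ref{lem-mu-preserve}, \ref{lem-left-inverse}, and~\ref{lem-nu-minus}, the mirrors of $(i'',j'')$ biject with the mirrors of $(k'',l'')$ in $(U_1,U_2,<)$, with the case-II fact $U_{1,1} \ne U_{2,1}$ (forced by $(1,1) > (2,1)$) ensuring $(2,1)$ is not a stray mirror of $(k'',l'')$. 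The bound follows immediately.

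The remaining cases, $(i'',j'') \in B'^+$ and $(i'',j'') \in B'^- \cap \{U'=U_{2,1}\}$, are symmetric, so I sketch only the first. Then $U_{k'',l''} = U_{1,1}$ and, by Lemma~\ref{lem-nu-plus-1}, $(1,1)$ is a mirror of $(k'',l'')$. The mirrors of $(i'',j'')$ partition into a $B'^+$ part (which, via Lemmas~\ref{lem-nu-plus-1} and~\ref{lem-nu-neq}, translate to mirrors of $(k'',l'')$ distinct from $(1,1)$) and a $B'^- \cap \{U'=U_{2,1}\}$ part (which, via Lemma~\ref{lem-nu-mirror}, translate to mirrors of $(2,1)$ in $(U_1,U_2,<)$). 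To absorb this second part I use that $(1,1) > (2,1)$ in case II, so $(1,1)$ cuts $(2,1)$ in $(U_1,U_2,<)$ and
\[
    f_{U_1,U_2,<}(1,1) \ge 1 + \sum_{\substack{(k,l) \in B \\ (k,l) \text{ mirrors } (2,1)}} f_{U_1,U_2,<}(k,l).
\]
Since $f_{U_1,U_2,<}(1,1)$ already appears inside the sum over mirrors of $(k'',l'')$ in the first displayed inequality, this extra budget precisely absorbs the second part, yielding the desired bound (in fact with strict slack). The main obstacle is the bookkeeping: verifying that the two injections from the mirror sets avoid double-counting $(1,1)$ or $(k'',l'')$ itself, and checking the $\approx$-conditions needed for the mirror relations — both rely on the case-II coherence fact that $L(U_{1,1}) > L(U_{2,1})$.
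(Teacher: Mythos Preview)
Your proposal is correct and follows essentially the same approach as the paper's proof: both induct along $E'$ (justified by Lemma~\ref{lem-walk-case2}), pick the maximizing $(i'',j'')$, push it through $\nu$ via Lemma~\ref{lem-cut-nu}, and then split into the same three cases on $(i'',j'')$ --- namely $B'^-$ with $U'\ne U_{2,1}$, $B'^+$, and $B'^-$ with $U'=U_{2,1}$ --- absorbing the ``bad'' mirrors in the latter two cases into the term $f_{U_1,U_2,<}(1,1)$ (resp.\ $f_{U_1,U_2,<}(2,1)$) via the cut between $(1,1)$ and $(2,1)$. The bookkeeping you flag (injectivity of $(i',j')\mapsto(i',\nu(i',j'))$ on the relevant pieces, and exclusion of $(1,1)$, $(2,1)$, and $(k'',l'')$ from the images) is exactly what the paper's chain of Lemmas~\ref{lem-nu-j-plus-1}, \ref{lem-consec-plus}, and~\ref{lem-left-inverse} carries out explicitly.
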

\begin{proof}
    First, $G_{U'_1,U'_2,<'}$ is acyclic by Lemma~\ref{lem-walk-case2}, so $f_{U'_1,U'_2,<'}$ is well-defined. We may inductively assume that the lemma is true for every $(i',j') \in B'$ such that $((i,j),(i',j')) \in E'$.

    We have
    \[
        f_{U'_1,U'_2,<'}(i,j) = 1 + \sum_{\substack{(i',j') \in B' \\ (i'',j'') \ \text{mirrors} \ (i',j') \\ \text{w.r.t.} \ (U'_1,U'_2,<')}} f_{U'_1,U'_2,<'}(i',j')
    \]
    for some $(i'',j'') \in B'$ such that $(i,j)$ cuts $(i'',j'')$ with respect to $(U'_1,U'_2,<')$. Then, by Lemma~\ref{lem-cut-nu}, $(i,\nu(i,j))$ cuts $(i'',\nu(i'',j''))$ with respect to $(U_1,U_2,<)$.

    First, suppose $(i'',j'') \notin B'^+ \cup \{(i',j') \in B'^- \mid U'_{i',j'} = U_{2,1}\}$. Then, $U'_{i'',j''} \neq U_{2,1}$, so by Lemma~\ref{lem-nu-plus-2}, $(i'',j'')$ does not mirror any boundary in $B'^+$ with respect to $(U'_1,U'_2,<')$. Hence,
    \begin{align*}
        f_{U'_1,U'_2,<'}(i,j) &= 1 + \sum_{\substack{(i',j') \in B' \\ (i'',j'') \ \text{mirrors} \ (i',j') \\ \text{w.r.t.} \ (U'_1,U'_2,<')}} f_{U'_1,U'_2,<'}(i',j') \\
        &= 1 + \sum_{\substack{(i',j') \in B'^- \\ (i'',j'') \ \text{mirrors} \ (i',j') \\ \text{w.r.t.} \ (U'_1,U'_2,<')}} f_{U'_1,U'_2,<'}(i',j') \\
        &\le 1 + \sum_{\substack{(i',j') \in B'^- \\ (i'',j'') \ \text{mirrors} \ (i',j') \\ \text{w.r.t.} \ (U'_1,U'_2,<')}} f_{U_1,U_2,<}(i',\nu(i',j')) \\
        &\le 1 + \sum_{\substack{(i',j') \in B'^- \\ (i'',\nu(i'',j'')) \ \text{mirrors} \ (i',\nu(i',j')) \\ \text{w.r.t.} \ (U_1,U_2,<)}} f_{U_1,U_2,<}(i',\nu(i',j')) \quad \text{by Lemma~\ref{lem-mirror-case2}} \\
        &= 1 + \sum_{\substack{(i',j') \in B^- \\ (i'',\nu(i'',j'')) \ \text{mirrors} \ (i',\nu(i',\mu(i',j'))) \\ \text{w.r.t.} \ (U_1,U_2,<)}} f_{U_1,U_2,<}(i',\nu(i',\mu(i',j'))) \\
        &= 1 + \sum_{\substack{(i',j') \in B^- \\ (i'',\nu(i'',j'')) \ \text{mirrors} \ (i',j') \\ \text{w.r.t.} \ (U_1,U_2,<)}} f_{U_1,U_2,<}(i',j') \quad \text{by Lemma~\ref{lem-left-inverse}} \\
        &\le f_{U_1,U_2,<}(i,\nu(i,j)).
    \end{align*}

    Next, suppose $(i'',j'') \in B'^+$. Then,
    \begin{align*}
        f_{U'_1,U'_2,<'}(i,j) &= 1 + \sum_{\substack{(i',j') \in B' \\ (i'',j'') \ \text{mirrors} \ (i',j') \\ \text{w.r.t.} \ (U'_1,U'_2,<')}} f_{U'_1,U'_2,<'}(i',j') \\
        &\le 1 + \sum_{\substack{(i',j') \in B' \\ (i'',j'') \ \text{mirrors} \ (i',j') \\ \text{w.r.t.} \ (U'_1,U'_2,<')}} f_{U_1,U_2,<}(i',\nu(i',j')) \\
        &= 1 + \smashoperator[r]{\sum_{\substack{(i',j') \in B'^+ \\ (i'',j'') \ \text{mirrors} \ (i',j') \\ \text{w.r.t.} \ (U'_1,U'_2,<')}}} \qquad f_{U_1,U_2,<}(i',\nu(i',j')) + \smashoperator[r]{\sum_{\substack{(i',j') \in B'^- \\ (i'',j'') \ \text{mirrors} \ (i',j') \\ \text{w.r.t.} \ (U'_1,U'_2,<')}}} \qquad f_{U_1,U_2,<}(i',\nu(i',j')).
    \end{align*}
    We bound each of these sums as follows. For the first sum, we have
    \begin{align*}
        \smashoperator[r]{\sum_{\substack{(i',j') \in B'^+ \\ (i'',j'') \ \text{mirrors} \ (i',j') \\ \text{w.r.t.} \ (U'_1,U'_2,<')}}} \qquad f_{U_1,U_2,<}(i',\nu(i',j')) &\le \smashoperator[r]{\sum_{\substack{(i',j') \in B'^+ \\ (i'',\nu(i'',j'')) \ \text{mirrors} \ (i',\nu(i',j')) \\ \text{w.r.t.} \ (U_1,U_2,<)}}} \qquad f_{U_1,U_2,<}(i',\nu(i',j')) \quad \text{by Lemma~\ref{lem-mirror-case2}} \\
        &= \qquad \smashoperator{\sum_{\substack{(i',j') \in B'^+ \\ (i'',\nu(i'',j'')) \ \text{mirrors} \ (i',\nu(i',j'+1)) \\ \text{w.r.t.} \ (U_1,U_2,<)}}} \qquad f_{U_1,U_2,<}(i',\nu(i',j'+1)) \quad \text{by Lemma~\ref{lem-nu-j-plus-1}} \\
        &\le \smashoperator[r]{\sum_{\substack{(i',j') \in B'^- \setminus \{(1,1),(2,1)\} \\ (i'',\nu(i'',j'')) \ \text{mirrors} \ (i',\nu(i',j')) \\ \text{w.r.t.} \ (U_1,U_2,<)}}} \qquad f_{U_1,U_2,<}(i',\nu(i',j')) \quad \text{by Lemma~\ref{lem-consec-plus}} \\
        &\le \sum_{\substack{(i',j') \in B^- \setminus \{(1,1)\} \\ (i'',\nu(i'',j'')) \ \text{mirrors} \ (i',\nu(i',\mu(i',j'))) \\ \text{w.r.t.} \ (U_1,U_2,<)}} f_{U_1,U_2,<}(i',\nu(i',\mu(i',j'))) \\
        &= \sum_{\substack{(i',j') \in B^- \setminus \{(1,1)\} \\ (i'',\nu(i'',j'')) \ \text{mirrors} \ (i',j') \\ \text{w.r.t.} \ (U_1,U_2,<)}} f_{U_1,U_2,<}(i',j') \quad \text{by Lemma~\ref{lem-left-inverse}}.
    \end{align*}
    If $(i',j') \in B'^-$ and $(i'',j'')$ mirrors $(i',j')$ with respect to $(U'_1,U'_2,<')$, then by Lemma~\ref{lem-nu-plus-2}, $U'_{i',j'} = U'_{i'',j''} = U_{2,1}$. Hence, for the second sum, we have
    \begin{align*}
        \smashoperator[r]{\sum_{\substack{(i',j') \in B'^- \\ (i'',j'') \ \text{mirrors} \ (i',j') \\ \text{w.r.t.} \ (U'_1,U'_2,<')}}} \qquad f_{U_1,U_2,<}(i',\nu(i',j')) &\le \sum_{\substack{(i',j') \in B'^- \\ (2,1) \ \text{mirrors} \ (i',\nu(i',j')) \\ \text{w.r.t.} \ (U_1,U_2,<)}} f_{U_1,U_2,<}(i',\nu(i',j')) \quad \text{by Lemma~\ref{lem-nu-mirror}} \\
        &= \sum_{\substack{(i',j') \in B^- \\ (2,1) \ \text{mirrors} \ (i',\nu(i',\mu(i',j'))) \\ \text{w.r.t.} \ (U_1,U_2,<)}} f_{U_1,U_2,<}(i',\nu(i',\mu(i',j'))) \\
        &= \sum_{\substack{(i',j') \in B^- \\ (2,1) \ \text{mirrors} \ (i',j') \\ \text{w.r.t.} \ (U_1,U_2,<)}} f_{U_1,U_2,<}(i',j') \quad \text{by Lemma~\ref{lem-left-inverse}}.
    \end{align*}
    By Lemma~\ref{lem-nu-plus-1}, $(i'',\nu(i'',j''))$ mirrors $(1,1)$ with respect to $(U_1,U_2,<)$. Hence,
    \begin{align*}
        f_{U'_1,U'_2,<'}(i,j) &\le 1 + \sum_{\substack{(i',j') \in B^- \setminus \{(1,1)\} \\ (i'',\nu(i'',j'')) \ \text{mirrors} \ (i',j') \\ \text{w.r.t.} \ (U_1,U_2,<)}} f_{U_1,U_2,<}(i',j') + \sum_{\substack{(i',j') \in B^- \\ (2,1) \ \text{mirrors} \ (i',j') \\ \text{w.r.t.} \ (U_1,U_2,<)}} f_{U_1,U_2,<}(i',j') \\
        &< 1 + \sum_{\substack{(i',j') \in B^- \setminus \{(1,1)\} \\ (i'',\nu(i'',j'')) \ \text{mirrors} \ (i',j') \\ \text{w.r.t.} \ (U_1,U_2,<)}} f_{U_1,U_2,<}(i',j') + f_{U_1,U_2,<}(1,1) \le f_{U_1,U_2,<}(i,\nu(i,j)).
    \end{align*}

    Finally, if $(i'',j'') \in \{(i',j') \in B'^- \mid U'_{i',j'} = U_{2,1}\}$, then we similarly have
    \[
        f_{U'_1,U'_2,<'}(i,j) < 1 + \sum_{\substack{(i',j') \in B^- \\ (i'',\nu(i'',j'')) \ \text{mirrors} \ (i',j') \\ \text{w.r.t.} \ (U_1,U_2,<)}} f_{U_1,U_2,<}(i',j') + f_{U_1,U_2,<}(2,1) \le f_{U_1,U_2,<}(i,\nu(i,j)). \qedhere
    \]
\end{proof}

\begin{lemma} \label{lem-less-21}
    Suppose $(U'_1,U'_2,<')$ is the result of applying a case II coherent extended Nielsen transformation to $(U_1,U_2,<)$. Then,
    \[
        \sum_{(i,j) \in B'^+} f_{U'_1,U'_2,<'}(i,j) < f_{U_1,U_2,<}(2,1).
    \]
\end{lemma}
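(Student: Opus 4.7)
The plan is to bound $\sum_{(i,j) \in B'^+} f_{U'_1,U'_2,<'}(i,j)$ by part of the expression defining $f_{U_1,U_2,<}(2,1)$. Because we are in case II we have $(1,1) > (2,1)$, and the boundary-order conventions $(2,0) \approx (1,0) < (2,1) < (1,1)$ make it immediate that $(2,1)$ cuts $(1,1)$. The definition of fecundity then gives
\[
    f_{U_1,U_2,<}(2,1) \ge 1 + \sum_{\substack{(i',j') \in B \\ (1,1) \ \text{mirrors} \ (i',j')}} f_{U_1,U_2,<}(i',j'),
\]
so it suffices to exhibit a fecundity-nonincreasing injection from $B'^+$ into the index set appearing on the right-hand side of this inequality.

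The natural candidate is $(i,j) \mapsto (i,\nu(i,j))$. Lemma~\ref{lem-mu-decrease} is exactly the statement that this map does not increase fecundity, and Lemma~\ref{lem-nu-plus-1} is exactly the statement that its image lies in $\{(i',j') \in B \mid (1,1) \ \text{mirrors} \ (i',j')\}$. Thus the only nontrivial part of the argument is to verify injectivity on $B'^+$.

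For injectivity, I plan the following argument. Suppose $(i,j), (i,k) \in B'^+$ with $j < k$ and $\nu(i,j) = \nu(i,k)$. Lemma~\ref{lem-nu-plus-2} guarantees that $(i,j+1)$ and $(i,k+1)$ both exist in $B'$, and then Lemma~\ref{lem-consec-plus} places both in $B'^-$. Lemma~\ref{lem-nu-j-plus-1} shifts $\nu$ along, giving $\nu(i,j+1) = \nu(i,j) = \nu(i,k) = \nu(i,k+1)$. Since $\mu(i,\cdot)$ is strictly increasing on each side (by inspection of its definition) and Lemma~\ref{lem-left-inverse} shows $\nu$ is a left inverse of $\mu$, the restriction of $\nu$ to each side of $B'^-$ is strictly increasing, hence injective. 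This forces $j+1 = k+1$, contradicting $j < k$.

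Chaining everything together then yields
\[
    \sum_{(i,j) \in B'^+} f_{U'_1,U'_2,<'}(i,j) \le \sum_{(i,j) \in B'^+} f_{U_1,U_2,<}(i,\nu(i,j)) \le \sum_{\substack{(i',j') \in B \\ (1,1) \ \text{mirrors} \ (i',j')}} f_{U_1,U_2,<}(i',j') < f_{U_1,U_2,<}(2,1),
\]
where the final strict inequality is supplied by the $+1$ in the definition of fecundity. The only genuine obstacle is the injectivity check above; the reduction to a subsum of $f_{U_1,U_2,<}(2,1)$ and the bookkeeping afterward are direct applications of the already established lemmas.
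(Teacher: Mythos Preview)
Your proposal is correct and follows essentially the same approach as the paper: both arguments observe that $(2,1)$ cuts $(1,1)$, use Lemma~\ref{lem-mu-decrease} to pass from $f_{U'_1,U'_2,<'}$ to $f_{U_1,U_2,<}(\cdot,\nu(\cdot,\cdot))$, invoke Lemma~\ref{lem-nu-plus-1} to land in the mirror set of $(1,1)$, and then appeal to the $+1$ in the fecundity definition for strictness. The only cosmetic difference is that the paper establishes the needed injectivity by explicitly re-indexing through the chain $B'^+ \xrightarrow{\,j\mapsto j+1\,} B'^- \setminus \{(1,1),(2,1)\} \xrightarrow{\,\nu\,} B^- \setminus \{(1,1)\}$ (via Lemmas~\ref{lem-nu-j-plus-1}, \ref{lem-consec-plus}, and \ref{lem-left-inverse}), whereas you package the same three lemmas into a direct injectivity check for $(i,j)\mapsto(i,\nu(i,j))$ on $B'^+$.
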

\begin{proof}
    First, $G_{U'_1,U'_2,<'}$ is acyclic by Lemma~\ref{lem-walk-case2}, so $f_{U'_1,U'_2,<'}$ is well-defined. Now, we have
    \begin{align*}
        \sum_{(i,j) \in B'^+} f_{U'_1,U'_2,<'}(i,j) &\le \sum_{(i,j) \in B'^+} f_{U_1,U_2,<}(i,\nu(i,j)) \quad \text{by Lemma~\ref{lem-mu-decrease}} \\
        &= \sum_{\substack{(i,j) \in B'^+ \\ (1,1) \ \text{mirrors} \ (i,\nu(i,j)) \\ \text{w.r.t.} \ (U_1,U_2,<)}} f_{U_1,U_2,<}(i,\nu(i,j)) \quad \text{by Lemma~\ref{lem-nu-plus-1}} \\
        &= \sum_{\substack{(i,j) \in B'^+ \\ (1,1) \ \text{mirrors} \ (i,\nu(i,j+1)) \\ \text{w.r.t.} \ (U_1,U_2,<)}} f_{U_1,U_2,<}(i,\nu(i,j+1)) \quad \text{by Lemma~\ref{lem-nu-j-plus-1}} \\
        &\le \sum_{\substack{(i,j) \in B'^- \setminus \{(1,1),(2,1)\} \\ (1,1) \ \text{mirrors} \ (i,\nu(i,j)) \\ \text{w.r.t.} \ (U_1,U_2,<)}} f_{U_1,U_2,<}(i,\nu(i,j)) \quad \text{by Lemma~\ref{lem-consec-plus}} \\
        &\le \sum_{\substack{(i,j) \in B^- \setminus \{(1,1)\} \\ (1,1) \ \text{mirrors} \ (i,\nu(i,\mu(i,j))) \\ \text{w.r.t.} \ (U_1,U_2,<)}} f_{U_1,U_2,<}(i,\nu(i,\mu(i,j))) \\
        &= \sum_{\substack{(i,j) \in B^- \setminus \{(1,1)\} \\ (1,1) \ \text{mirrors} \ (i,j) \\ \text{w.r.t.} \ (U_1,U_2,<)}} f_{U_1,U_2,<}(i,j) \quad \text{by Lemma~\ref{lem-left-inverse}} \\
        &< f_{U_1,U_2,<}(2,1). \qedhere
    \end{align*}
\end{proof}

We can now prove Theorem~\ref{thm-sufficient}.

\begin{proof}[Proof of Theorem~\ref{thm-sufficient}]
    If $(U'_1,U'_2,<')$ is the result of applying a case I coherent extended Nielsen transformation to $(U_1,U_2,<)$, then we have
    \begin{align*}
        m(U'_1,U'_2,<') &= \sum_{(i,j) \in B'} f_{U'_1,U'_2,<'}(i,j) \\
        &= \sum_{(i,j) \in B \setminus \{(1,1),(2,1)\}} f_{U'_1,U'_2,<'}(i,j-1) \\
        &\le \sum_{(i,j) \in B \setminus \{(1,1),(2,1)\}} f_{U_1,U_2,<}(i,j) \quad \text{by Lemma~\ref{lem-case1}} \\
        &< \sum_{(i,j) \in B} f_{U_1,U_2,<}(i,j) = m(U_1, U_2, <).
    \end{align*}

    Otherwise, we may assume without loss of generality that $(U'_1,U'_2,<')$ is the result of applying a case II coherent extended Nielsen transformation to $(U_1,U_2,<)$. In this case, we have
    \begin{align*}
        m(U'_1,U'_2,<') &= \sum_{(i,j) \in B'} f_{U'_1,U'_2,<'}(i,j) \\
        &= \sum_{(i,j) \in B'^-} f_{U'_1,U'_2,<'}(i,j) + \sum_{(i,j) \in B'^+} f_{U'_1,U'_2,<'}(i,j) \\
        &= \sum_{(i,j) \in B^-} f_{U'_1,U'_2,<'}(i,\mu(i,j)) + \sum_{(i,j) \in B'^+} f_{U'_1,U'_2,<'}(i,j) \\
        &\le \sum_{(i,j) \in B^-} f_{U_1,U_2,<}(i,\nu(i,\mu(i,j))) + \sum_{(i,j) \in B'^+} f_{U'_1,U'_2,<'}(i,j) \quad \text{by Lemma~\ref{lem-mu-decrease}} \\
        &= \sum_{(i,j) \in B^-} f_{U_1,U_2,<}(i,j) + \sum_{(i,j) \in B'^+} f_{U'_1,U'_2,<'}(i,j) \quad \text{by Lemma~\ref{lem-left-inverse}} \\
        &= \sum_{(i,j) \in B} f_{U_1,U_2,<}(i,j) - f_{U_1,U_2,<}(2,1) + \sum_{(i,j) \in B'^+} f_{U'_1,U'_2,<'}(i,j) \\
        &< \sum_{(i,j) \in B} f_{U_1,U_2,<}(i,j) \quad \text{by Lemma~\ref{lem-less-21}} \\
        &= m(U_1,U_2,<).
    \end{align*}

    It remains to show that $m(U_1,U_2,<) \le 2^N$, where $N = |U_1| + |U_2|$. Consider the function $g_{U_1,U_2,<}$ recursively defined as
    \[
        g_{U_1,U_2,<}(i,j) = 1 + \sum_{\substack{(i',j') \in B \\ ((i,j),(i',j')) \in E}} g_{U_1,U_2,<}(i',j').
    \]
    Then, $f_{U_1,U_2,<}(i,j) \le g_{U_1,U_2,<}(i,j)$, so
    \[
        m(U_1,U_2,<) = \sum_{(i,j) \in B} f_{U_1,U_2,<}(i,j) \le \sum_{(i,j) \in B} g_{U_1,U_2,<}(i,j).
    \]
    But $g_{U_1,U_2,<}(i,j)$ has a graph-theoretic interpretation: it is the number of walks in $G_{U_1, U_2, <}$ starting at $(i,j)$. We therefore conclude by observing that for an acyclic graph on $N$ vertices, the total number of walks is at most $2^N$.
\end{proof}

\section{A necessary condition for termination}

The converse of Theorem~\ref{thm-sufficient} is false, as the counterexample in Figure~\ref{fig-converse} demonstrates.
\begin{figure}[H]
        \centering
        \begin{tikzpicture}
            \draw (0,0) rectangle (3,1);
            \draw (3,0) rectangle (6,1);
            \draw (0,1) rectangle (3,2);
            \draw (3,1) rectangle (6,2);

            \node at (1.5,1.5) {$X$};
            \node at (4.5,1.5) {$X$};
            \node at (1.5,0.5) {$Y$};
            \node at (4.5,0.5) {$Y$};

            \node[scale=0.67] at (3,2.25) {$(1,1)$};
            \node[scale=0.67] at (6,2.25) {$(1,2)$};
            \node[scale=0.67] at (3,-0.25) {$(2,1)$};
            \node[scale=0.67] at (6,-0.25) {$(2,2)$};
        \end{tikzpicture}
        \caption{A terminating extended word equation whose cut graph is cyclic}
        \label{fig-converse}
\end{figure}
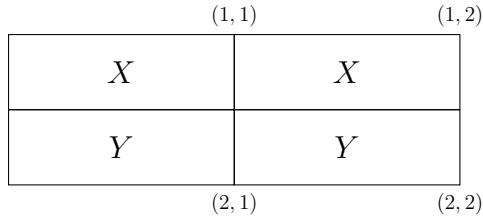

Notice that in the extended word equation shown in Figure~\ref{fig-converse}, we have $(1,1) \approx (2,1)$. Such equivalences at nontrivial (i.e., non-edge) boundaries contribute to the failure of the converse of Theorem~\ref{thm-sufficient}. We use the following definition to rule out nontrivial equivalences of boundaries.

\begin{definition}
    Let $(U_1, U_2, <)$ be an extended word equation, and write
    \begin{align*}
        U_1 &= U_{1,1} U_{1,2} \cdots U_{1,m} \quad\text{and} \\
        U_2 &= U_{2,1} U_{2,2} \cdots U_{2,n},
    \end{align*}
    where $U_{i,j} \in \mathcal{X}$ for all $i,j$. We say that $(U_1, U_2, <)$ is \emph{staggered} if $(1,j) \approx (2,j')$ only for $(j,j') = (0,0)$ and $(j,j') = (m,n)$.
\end{definition}

Even for staggered extended word equations, we sometimes end up in situations where the only coherent extended Nielsen transformations that can be applied to them yield non-staggered extended word equations. For example, in Figure~\ref{fig-converse-2}, if we apply a coherent extended Nielsen transformation, we get a non-staggered extended word equation that soon terminates. This motivates the following definition.

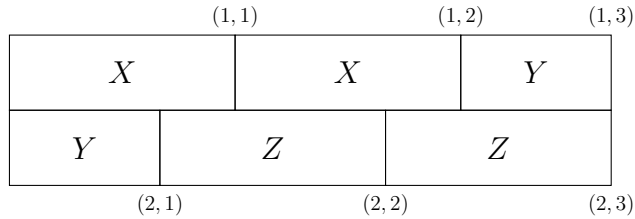
\begin{figure}[H]
        \centering
        \begin{tikzpicture}
            \draw (0,0) rectangle (2,1);
            \draw (2,0) rectangle (5,1);
            \draw (5,0) rectangle (8,1);
            \draw (0,1) rectangle (3,2);
            \draw (3,1) rectangle (6,2);
            \draw (6,1) rectangle (8,2);

            \node at (1.5,1.5) {$X$};
            \node at (4.5,1.5) {$X$};
            \node at (7,1.5) {$Y$};
            \node at (1,0.5) {$Y$};
            \node at (3.5,0.5) {$Z$};
            \node at (6.5,0.5) {$Z$};

            \node[scale=0.67] at (3,2.25) {$(1,1)$};
            \node[scale=0.67] at (6,2.25) {$(1,2)$};
            \node[scale=0.67] at (8,2.25) {$(1,3)$};
            \node[scale=0.67] at (2,-0.25) {$(2,1)$};
            \node[scale=0.67] at (5,-0.25) {$(2,2)$};
            \node[scale=0.67] at (8,-0.25) {$(2,3)$};
        \end{tikzpicture}
        \caption{A terminating staggered extended word equation whose cut graph is cyclic}
        \label{fig-converse-2}
\end{figure}

\begin{definition}
    The set of \emph{hereditarily staggered} extended word equations is the largest set of staggered coherent extended word equations such that
    \begin{itemize}
        \item if $(U_1, U_2, <)$ is hereditarily staggered and nontrivial, then it is possible to apply a coherent extended Nielsen transformation to $(U_1, U_2, <)$ that yields a staggered extended word equation;\footnote{This condition is perhaps less strong than it seems, because it is possible to apply a coherent extended Nielsen transformation to every nontrivial coherent extended word equation (see Proposition~\ref{prop-coherent-ext}).} and
        \item if $(U_1, U_2, <)$ is hereditarily staggered and $(U'_1, U'_2, <')$ is a staggered extended word equation obtained from $(U_1, U_2, <)$ by a coherent extended Nielsen transformation, then $(U'_1, U'_2, <')$ is hereditarily staggered.
    \end{itemize}
\end{definition}

The purpose of this section is to prove the following theorem, which is a partial converse of Theorem~\ref{thm-sufficient}.

\begin{theorem} \label{thm-necessary}
    If $(U_1,U_2,<)$ is hereditarily staggered and terminating, then $G_{U_1,U_2,<}$ is acyclic.
\end{theorem}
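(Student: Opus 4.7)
The plan is to prove the contrapositive: if $(U_1, U_2, <)$ is hereditarily staggered and $G_{U_1, U_2, <}$ is cyclic, then $(U_1, U_2, <)$ is non-terminating. I will construct an infinite walk in $\mathcal{G}_{U_1, U_2, <}$ inductively, maintaining the invariant that every equation along the walk is hereditarily staggered with cyclic cut graph. The crux is the following Key Lemma: for every nontrivial staggered coherent extended word equation with cyclic cut graph, there exists a coherent extended Nielsen transformation whose target is staggered and has cyclic cut graph. Given the Key Lemma, the inductive step is immediate: starting from a hereditarily staggered equation with cyclic cut graph (which is necessarily nontrivial, since a trivial equation has a cut graph with no edges), the Key Lemma supplies a staggered successor with cyclic cut graph, and condition (ii) in the definition of \emph{hereditarily staggered} promotes this successor to hereditarily staggered.

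To prove the Key Lemma, first note that for a nontrivial staggered equation case I is ruled out, since $(1,1) \approx (2,1)$ forces $(1,1)$ to be an endpoint pair, which only happens when $m = n = 1$; but in that degenerate case the cut graph has no mirror edges and is trivially acyclic. By duality we may therefore restrict to case II, in which $<'$ is fixed on $B'^-$ (via $\nu$) but has freedom on $B'^+$, subject to coherence and the staggered constraint. Fixing a cycle $\mathcal{C}$ in $G_{U_1, U_2, <}$, the idea is to use $\mathcal{C}$ to guide the choice of $<'$ so that a cycle survives in $G_{U'_1, U'_2, <'}$. For edges of $\mathcal{C}$ whose endpoints and intermediate mirror boundaries all lie in $B^-$, the map $\mu$ translates them directly: mirror relations are preserved by Lemma~\ref{lem-mu-preserve}, and cut relations are preserved by the compatibility of $<'$ with $<$ on $B^-$ built into the definition of case II. Thus a cycle that entirely avoids $(2,1)$ (as node or intermediate mirror) yields a cycle in the new graph under any valid $<'$.

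The main obstacle is rerouting segments of $\mathcal{C}$ that pass through $(2,1)$, which has no image under $\mu$. For this I will use Lemma~\ref{lem-nu-plus-1} (boundaries in $B'^+$ mirror $(1,1)$ in $(U_1,U_2,<)$), Lemma~\ref{lem-nu-mirror} (boundaries in $B'^-$ carrying the value $U_{2,1}$ correspond to mirrors of $(2,1)$), and the fact that $(1,1)$ cuts $(2,1)$ in $(U_1, U_2, <)$ (which is how $(1,1)$ acquires outgoing edges to mirrors of $(2,1)$). These ingredients allow me to replace each traversal ``$\cdots \to (2,1) \to \cdots$'' in $\mathcal{C}$ by a short detour through $(1, \mu(1,1)) = (1,1)$ in the new graph, and, in the cases where a node of $\mathcal{C}$ is $(2,1)$ itself, through an appropriately positioned boundary in $B'^+$ mirroring $(1,1)$. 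The delicate part, which I expect to be the technical heart of the argument, is to show that the required placements of $B'^+$ boundaries can be made simultaneously compatible with coherence (some length assignment witnesses the order) and with staggeredness (no spurious equivalences $(1,j) \approx (2,j')$ are introduced at non-endpoints). This reduces to a finite linear-arithmetic feasibility check driven by the combinatorics of $\mathcal{C}$, which I will verify by a careful case analysis on how $\mathcal{C}$ interacts with $(1,1)$ and $(2,1)$.
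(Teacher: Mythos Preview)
Your overall architecture matches the paper's: prove the contrapositive, reduce to a one-step ``Key Lemma,'' and iterate using the closure property of hereditarily staggered equations. Your plan for pushing a cycle forward through $\mu$, and for rerouting segments that touch $(2,1)$ via a new boundary in $B'^+$, is also essentially the content of the paper's Lemmas~\ref{lem-not-minimum} and~\ref{lem-minimum} (though note that the lemmas you cite, \ref{lem-nu-plus-1} and \ref{lem-nu-mirror}, go in the wrong direction; the forward-direction tools are Lemmas~\ref{lem-mu-order}--\ref{lem-mu-minus-1-mirror}).

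The genuine gap is your treatment of coherence. You assert that the required placement of the $B'^+$ boundaries ``reduces to a finite linear-arithmetic feasibility check\ldots which I will verify by a careful case analysis.'' The paper tries exactly the constructions you have in mind and explicitly observes that the resulting $<'$ \emph{may be incoherent}; there is no direct case analysis that certifies feasibility. Instead the paper introduces a separate mechanism, Lemma~\ref{lem-fallback}: if some staggered extended Nielsen transformation of $(U_1,U_2,<)$ is incoherent, then a swap-adjacent \emph{coherent} staggered transformation exists whose cut graph is cyclic. The proof of that lemma is not a feasibility check at all; it runs through a Farkas-type alternative (Lemmas~\ref{lem-alternative-z}--\ref{lem-incoherent-char}) and a structural result about incoherent cores with non-impinging covers (Lemma~\ref{lem-disjoint-incoherent-core}) to manufacture a cycle directly from the certificate of incoherence. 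This ``incoherence itself yields a cycle'' idea is the missing ingredient in your plan, and without it (or a genuinely new argument showing your chosen $<'$ is always coherent, which the paper's experience suggests is false) the Key Lemma does not go through. Relatedly, your Key Lemma should be stated for \emph{hereditarily} staggered equations, since Lemma~\ref{lem-fallback} needs the guaranteed existence of at least one coherent staggered successor to anchor the swap argument.
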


We prove Theorem~\ref{thm-necessary} by proving its contrapositive. That is, we start with a hereditarily staggered extended word equation $(U_1,U_2,<)$ such that $G_{U_1,U_2,<}$ is cyclic and prove that there is an infinite sequence of coherent extended Nielsen transformations that can be applied to $(U_1,U_2,<)$. Since $(U_1,U_2,<)$ is chosen arbitrarily and is hereditarily staggered, it suffices to show that it is possible to apply a coherent extended Nielsen transformation to $(U_1,U_2,<)$ that yields a staggered extended word equation $(U'_1,U'_2,<')$ such that $G_{U'_1,U'_2,<'}$ is cyclic. We will show this by proving several lemmas.  First, given a cycle in $G_{U_1,U_2,<}$, Lemmas~\ref{lem-not-minimum} and \ref{lem-minimum} describe how to apply an extended Nielsen transformation to $(U_1,U_2,<)$ that yields a staggered extended word equation $(U'_1,U'_2,<')$ such that $G_{U'_1,U'_2,<'}$ is cyclic. But an issue arises: the extended Nielsen transformations described by Lemmas~\ref{lem-not-minimum} and \ref{lem-minimum} may be incoherent. Fortunately, we have a fallback for that case. Lemma~\ref{lem-fallback} proves that if $(U'_1,U'_2,<')$ is incoherent, then it is possible to apply a coherent extended Nielsen transformation to $(U_1,U_2,<)$ that yields a staggered extended word equation $(U'_1,U'_2,<'')$ such that $G_{U'_1,U'_2,<''}$ is cyclic. This argument reveals the surprising fact that incoherence is related to the presence of cycles in the cut graph (see also Proposition~\ref{prop-acyclic-coherent}).

\subsection{The incoherent case}

We start with a fact from linear algebra, which is related to Farkas' lemma. Given vectors $\mathbf{x}$ and $\mathbf{y}$, the inequality $\mathbf{x} < \mathbf{y}$ should be interpreted coordinate-wise (and similarly with $\le$ in place of $<$). We write $\mathbf{x} \lneq \mathbf{y}$ as an abbreviation for $\mathbf{x} \le \mathbf{y}$ and $\mathbf{x} \neq \mathbf{y}$.
\begin{lemma} \label{lem-alternative-z}
    Let $A$ be an $m \times n$ matrix over $\mathbb{Z}$. Exactly one of the following holds:
    \begin{enumerate}
        \item For some $\mathbf{x} \in \mathbb{Z}^n$ satisfying $\mathbf{x} > \mathbf{0}$, we have $A\mathbf{x} \le \mathbf{0}$.
        \item For some $\mathbf{y} \in \mathbb{Z}^m$ satisfying $y \ge \mathbf{0}$, we have $A^T \mathbf{y} \gneq \mathbf{0}$.
    \end{enumerate}
\end{lemma}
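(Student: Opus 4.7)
The plan is to derive this as the integer-coefficient version of a Farkas-type alternative, by first proving it over $\mathbb{R}$ via Motzkin's transposition theorem and then scaling to $\mathbb{Z}$. For the easy half (the two conditions cannot both hold), if $\mathbf{x}$ and $\mathbf{y}$ simultaneously witnessed conditions 1 and 2, then the scalar $\mathbf{y}^T (A\mathbf{x})$ would be $\le 0$ (since $\mathbf{y} \ge \mathbf{0}$ and $A\mathbf{x} \le \mathbf{0}$), but it also equals $\mathbf{x}^T (A^T \mathbf{y}) = \sum_j x_j (A^T \mathbf{y})_j > 0$ (since $\mathbf{x} > \mathbf{0}$ and $A^T \mathbf{y} \gneq \mathbf{0}$ gives a non-negative sum with at least one positive term), a contradiction.

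For the other direction (at least one holds), I would invoke Motzkin's transposition theorem in the form: over $\mathbb{R}$, the system $A_1 \mathbf{x} > \mathbf{0},\ A_2 \mathbf{x} \ge \mathbf{0}$ is feasible iff there do not exist $\mathbf{y}_1 \gneq \mathbf{0},\ \mathbf{y}_2 \ge \mathbf{0}$ with $A_1^T \mathbf{y}_1 + A_2^T \mathbf{y}_2 = \mathbf{0}$. Taking $A_1 = I_n$ and $A_2 = -A$ turns the primal system into exactly condition 1 over $\mathbb{R}$, while the dual certificate collapses to $\mathbf{y}_1 = A^T \mathbf{y}_2 \gneq \mathbf{0}$ with $\mathbf{y}_2 \ge \mathbf{0}$, which is exactly condition 2 over $\mathbb{R}$. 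Thus exactly one of the two conditions holds over $\mathbb{R}$. To pass to $\mathbb{Z}$, I would use that both conditions are homogeneous and invariant under positive scalar multiplication; since $A$ has integer entries, standard rational polyhedral theory gives a rational witness whenever a real witness exists, and clearing denominators (i.e.\ multiplying by the product of denominators) yields an integer witness satisfying the same inequalities.

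The main obstacle is conceptual rather than technical: the trick is choosing $A_1 = I_n,\ A_2 = -A$ in Motzkin, so that the mixed strict/non-strict pair $\mathbf{x} > \mathbf{0},\ A\mathbf{x} \le \mathbf{0}$ is encoded directly and the dual certificate becomes precisely condition 2; the alternative of trying to apply Farkas or Stiemke to a single augmented matrix does not quite give the right form, since condition 1 requires strict positivity of $\mathbf{x}$ but only a non-strict inequality $A\mathbf{x} \le \mathbf{0}$. Everything else---the extraction of rational witnesses from real ones for systems with integer data, and the rational-to-integer passage by homogeneity---is routine.
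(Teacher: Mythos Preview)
Your proposal is correct and follows essentially the same route as the paper: establish the alternative over $\mathbb{R}$, pass to $\mathbb{Q}$, then clear denominators to reach $\mathbb{Z}$. The only difference is cosmetic: the paper simply cites the real-valued statement from a textbook, whereas you actually derive it from Motzkin's transposition theorem with $A_1=I_n$, $A_2=-A$ (and add the easy mutual-exclusion argument explicitly); both then finish identically.
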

\begin{proof}
    This is stated with $\mathbb{R}$ in place of $\mathbb{Z}$ in \cite[p.~95]{marlow}. Since $\mathbb{Q}$ is dense in $\mathbb{R}$, the result is true over $\mathbb{Q}$ too. Clearing denominators gives the desired result.
\end{proof}

We next prove a simple variant of the previous lemma that is more suited to our purposes.

\begin{lemma} \label{lem-alternative-n}
    Let $A$ and $B$ be $m \times n$ matrices over $\mathbb{N}$. Exactly one of the following holds:
    \begin{enumerate}
        \item For some $\mathbf{x} \in \mathbb{N}^n$ satisfying $\mathbf{x} > \mathbf{0}$, we have $A\mathbf{x} \le B\mathbf{x}$.
        \item For some $\mathbf{y} \in \mathbb{N}^m$, we have $A^T \mathbf{y} \gneq B^T \mathbf{y}$.
    \end{enumerate}
\end{lemma}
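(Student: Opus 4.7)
The natural approach is to reduce to Lemma~\ref{lem-alternative-z} by introducing the difference matrix $C = A - B$, which is an $m \times n$ matrix over $\mathbb{Z}$. The plan is to apply that lemma to $C$ and then translate the two alternatives back into statements about $A$ and $B$ separately.

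Concretely, Lemma~\ref{lem-alternative-z} applied to $C$ says that exactly one of the following holds: (i) there is $\mathbf{x} \in \mathbb{Z}^n$ with $\mathbf{x} > \mathbf{0}$ such that $C\mathbf{x} \le \mathbf{0}$, or (ii) there is $\mathbf{y} \in \mathbb{Z}^m$ with $\mathbf{y} \ge \mathbf{0}$ such that $C^T \mathbf{y} \gneq \mathbf{0}$. For (i), the condition $\mathbf{x} \in \mathbb{Z}^n$ with $\mathbf{x} > \mathbf{0}$ is equivalent to $\mathbf{x} \in \mathbb{N}^n$ with $\mathbf{x} > \mathbf{0}$, and the inequality $C\mathbf{x} \le \mathbf{0}$ rearranges to $A\mathbf{x} \le B\mathbf{x}$. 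Hence (i) is equivalent to alternative 1 of the present lemma. For (ii), $\mathbf{y} \in \mathbb{Z}^m$ with $\mathbf{y} \ge \mathbf{0}$ is the same as $\mathbf{y} \in \mathbb{N}^m$, and $C^T \mathbf{y} \gneq \mathbf{0}$ rearranges to $A^T \mathbf{y} \gneq B^T \mathbf{y}$. Hence (ii) is equivalent to alternative 2.

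Because both translations are genuine biconditionals (not just one-way implications), the ``exactly one'' dichotomy transfers unchanged from Lemma~\ref{lem-alternative-z}. There is no substantive obstacle in this argument: the real work is already packaged in Lemma~\ref{lem-alternative-z}, and the present lemma is merely a convenient reformulation in which the data naturally live in $\mathbb{N}$ rather than $\mathbb{Z}$. The only detail worth double-checking is the compatibility of the strict and weak inequalities on $\mathbb{Z}$ versus $\mathbb{N}$, which is immediate from the definitions.
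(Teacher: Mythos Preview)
Your proposal is correct and follows exactly the paper's own approach: the paper's entire proof is the single sentence ``Apply Lemma~\ref{lem-alternative-z} to $A-B$,'' and you have simply spelled out the straightforward translation of the two alternatives.
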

\begin{proof}
    Apply Lemma~\ref{lem-alternative-z} to $A-B$.
\end{proof}

Throughout the following discussion, let $(U_1, U_2, <)$ be an extended word equation, and write
 \begin{align*}
    U_1 &= U_{1,1} U_{1,2} \cdots U_{1,m} \quad\text{and} \\
    U_2 &= U_{2,1} U_{2,2} \cdots U_{2,n},
\end{align*}
where $U_{i,j} \in \mathcal{X}$ for all $i,j$. Let
\[
    \ang{i,[j,k]} = \{(i,\ell) \mid \ell \in [j,k]\}.
\]
We think of $\ang{i,[j,k]}$ as representing an interval of boundaries. Note that we may have $k<j$, in which case the interval is empty, although we assume $k \ge j-1$. Let
\[
    U_{\ang{i,[j,k]}} = \sum_{\ell=j}^k U_{i,\ell} \in \mathbb{N}[\mathcal{X}].
\]

\begin{definition}
    Given an extended word equation $(U_1, U_2, <)$, we say that $(\ang{i,[j,k]}, \break \ang{i',[j',k']})$ is a \emph{cover} if
    \begin{align*}
        i &\neq i', \\
        (i',j'-1) &\lesssim (i,j-1), \quad\text{and} \\
        (i,k) &\lesssim (i',k').
    \end{align*}
    If at least one of the above inequalities is strict, we say that $(\ang{i,[j,k]}, \ang{i',[j',k']})$ is a \emph{strict cover}. We say that a cover $(\ang{i,[j,k]}, \ang{i',[j',k']})$ is \emph{tight} if $(\ang{i,[j,k]}, \ang{i',[j'+1,k']})$ and $(\ang{i,[j,k]}, \ang{i',[j',k'-1]})$ are not covers.
\end{definition}
Based on the meaning of the boundary order, if $(\ang{i,[j,k]}, \ang{i',[j',k']})$ is a cover, then $U_{i',j'} U_{i',j'+1} \cdots U_{i',k'}$ is at least as long as $U_{i,j} U_{i,j+1} \cdots U_{i,k}$ in a hypothetical solution; if $(\ang{i,[j,k]}, \ang{i',[j',k']})$ is a strict cover, then $U_{i',j'} U_{i',j'+1} \cdots U_{i',k'}$ is longer than $U_{i,j} U_{i,j+1} \cdots U_{i,k}$ in a hypothetical solution.

Central to our argument is the notion of an \emph{incoherent core}, which isolates the parts of an incoherent extended word equation that are responsible for incoherence.

\begin{definition}
    Given an extended word equation $(U_1, U_2, <)$ and set of covers $C$, let $C' \subseteq C$ be the set of strict covers contained in $C$. An \emph{incoherent core} is a set of covers $C$ such that the following set of inequalities has no solutions over $\mathbb{Z}_{>0}$
    \[
        \{U_{A} \le U_{B} \mid (A, B) \in C\} \cup \{U_{A} < U_{B} \mid (A, B) \in C'\}.
    \]
\end{definition}

Using Lemma~\ref{lem-alternative-n}, we can give a characterization of incoherent cores.

\begin{lemma} \label{lem-incoherent-char}
    Let $C = \{(A_\ell, B_\ell) \mid \ell \in [N]\}$ be a set of covers with respect to some extended word equation $(U_1, U_2, <)$. Then, $C$ is an incoherent core if and only if for some $\{y_\ell \in \mathbb{N} \mid \ell \in [N]\}$, we have
    \[
        \sum_{\ell \in [N]} y_\ell \cdot U_{B_\ell} \subseteq \sum_{\ell \in [N]} y_\ell \cdot U_{A_\ell}
    \]
    such that either the inclusion is strict or $y_\ell > 0$ for some $\ell \in [N]$ such that $(A_\ell, B_\ell)$ is a strict cover.
\end{lemma}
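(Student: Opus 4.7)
The plan is to recast the mixed $\le$/$<$ system defining an incoherent core as a single matrix inequality and invoke Lemma~\ref{lem-alternative-n}. Let $n = |\mathcal{X}|$, and let $A, B \in \mathbb{N}^{N \times n}$ be the matrices whose $\ell$-th rows encode the coefficient vectors of $U_{A_\ell}$ and $U_{B_\ell}$ in $\mathbb{N}[\mathcal{X}] \cong \mathbb{N}^n$. I then augment these with one extra column: append to $A$ a column whose $\ell$-th entry is $1$ if $(A_\ell, B_\ell)$ is a strict cover and $0$ otherwise, and append a zero column to $B$, producing matrices $\tilde{A}, \tilde{B} \in \mathbb{N}^{N \times (n+1)}$.

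The first substantive step is to show that the defining inequalities of an incoherent core have a solution over $\mathbb{Z}_{>0}$ if and only if there exists $\tilde{\mathbf{x}} \in \mathbb{N}^{n+1}$ with $\tilde{\mathbf{x}} > \mathbf{0}$ and $\tilde{A}\tilde{\mathbf{x}} \le \tilde{B}\tilde{\mathbf{x}}$. Given a solution $L$, set $\tilde{\mathbf{x}} = (L(X_1), \dots, L(X_n), 1)$: for a strict row the extra $1$ delivers $L(U_{A_\ell}) + 1 \le L(U_{B_\ell})$, which over integers is equivalent to the strict inequality, while for a nonstrict row the extra column contributes $0$ and we simply get $L(U_{A_\ell}) \le L(U_{B_\ell})$. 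Conversely, given such $\tilde{\mathbf{x}}$, the first $n$ coordinates define an $L$ with $L(X_i) > 0$, and since $\tilde{x}_{n+1} \ge 1$, each strict-cover row yields $L(U_{A_\ell}) < L(U_{B_\ell})$.

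Now I apply Lemma~\ref{lem-alternative-n} to $\tilde{A}$ and $\tilde{B}$: the nonexistence of such $\tilde{\mathbf{x}}$ is equivalent to the existence of $\mathbf{y} = (y_\ell) \in \mathbb{N}^N$ with $\tilde{A}^T \mathbf{y} \gneq \tilde{B}^T \mathbf{y}$. Reading the first $n$ coordinates of $\tilde{A}^T \mathbf{y} \ge \tilde{B}^T \mathbf{y}$ and swapping the order of summation yields the coefficient-wise containment $\sum_\ell y_\ell U_{B_\ell} \subseteq \sum_\ell y_\ell U_{A_\ell}$ in $\mathbb{N}[\mathcal{X}]$; the $(n+1)$-st coordinate is the automatically satisfied $\sum_{\ell \text{ strict}} y_\ell \ge 0$. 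The requirement that some coordinate be strict ($\gneq$) then unpacks exactly as the stated disjunction: either the containment is proper (strictness arises in one of the first $n$ coordinates), or $y_\ell > 0$ for some strict cover (strictness arises in the last coordinate). Both directions of the biconditional then follow from the ``exactly one'' dichotomy of Lemma~\ref{lem-alternative-n}.

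The main obstacle is reconciling strict and non-strict inequalities within a uniform Farkas-type framework, since Lemma~\ref{lem-alternative-n} only speaks about pure $\le$ systems. The auxiliary column trick resolves this cleanly: the forced positivity of $\tilde{x}_{n+1}$ over $\mathbb{Z}$ supplies the strict gap in the primal, and the corresponding dual coordinate produces no new information beyond recording when the chosen combination of $y_\ell$'s uses a strict cover. Everything else is bookkeeping between column-sum notation in $\mathbb{N}[\mathcal{X}]$ and matrix-vector notation in $\mathbb{N}^n$.
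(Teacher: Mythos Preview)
Your proof is correct and follows essentially the same approach as the paper, which simply says to add slack variables to the strict inequalities and appeal to Lemma~\ref{lem-alternative-n}. Your version is a more detailed execution of that idea, with the minor variation that you introduce a single shared slack column rather than one slack variable per strict cover; since the $y_\ell$ are nonnegative, the resulting dual conditions coincide.
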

\begin{proof}
    Let $C' \subseteq C$ be the set of strict covers contained in $C$. Since $C$ is an incoherent core, the following set of inequalities has no solutions over $\mathbb{Z}_{>0}$
     \[
        \{U_{A} \le U_{B} \mid (A, B) \in C\} \cup \{U_{A} < U_{B} \mid (A, B) \in C'\}.
    \]
    The lemma follows by adding slack variables to the strict inequalities and appealing to Lemma~\ref{lem-alternative-n}.
\end{proof}

Our goal now is to prove Lemma~\ref{lem-disjoint-incoherent-core}, which says that any incoherent word equation has an incoherent core of a special form that will be useful for forming a cycle. We need a few lemmas on the way to this result.

\begin{lemma} \label{lem-cover-cut}
    Let $(U_1, U_2, <)$ be an extended word equation. If $(A,B)$ is a cover and $(i,j) \in A$ cuts $(i',j')$, then $(i',j') \in B$.
\end{lemma}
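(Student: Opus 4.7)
The plan is to verify the containment $(i',j') \in B$ by writing $A = \langle i, [a,b]\rangle$ and $B = \langle i', [a',b']\rangle$ and then chaining the cover inequalities with the cut inequalities. Concretely, since $(i,j) \in A$ means $a \le j \le b$, the total order on $B_i$ gives $(i, a-1) \lesssim (i, j-1)$ and $(i,j) \lesssim (i,b)$.

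Next, I would establish $a' \le j'$ and $j' \le b'$ separately. For the lower bound, combining the cover condition $(i', a'-1) \lesssim (i, a-1)$, the above observation $(i, a-1) \lesssim (i, j-1)$, and the cut condition $(i, j-1) < (i', j')$, transitivity of the strict weak order yields $(i', a'-1) < (i', j')$; since boundaries on the same side are ordered by their second coordinate, this forces $a' - 1 < j'$, hence $a' \le j'$. For the upper bound, combining the cut condition $(i', j'-1) < (i,j)$ with the observation $(i,j) \lesssim (i,b)$ and the cover condition $(i, b) \lesssim (i', b')$ gives $(i', j'-1) < (i', b')$, hence $j' - 1 < b'$, so $j' \le b'$. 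Therefore $j' \in [a', b']$ and $(i', j') \in B$.

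The only subtlety, rather than an obstacle, is keeping track of when strict inequality $<$ composes with non-strict inequality $\lesssim$ to produce a strict inequality. This is immediate because $<$ is a strict partial order (hence transitive) and $a \lesssim b < c$ and $a < b \lesssim c$ both imply $a < c$ for a strict weak order. The conventions $(i,0)$ from the preliminaries ensure that the case $a' = 1$ (where $a' - 1 = 0$) is handled with no extra work, since $(i', 0) < (i', 1)$ by fiat and the chain $(i', 0) < (i', j')$ still forces $j' \ge 1 = a'$. No additional lemmas are needed beyond the basic order-theoretic definitions set up in Section~\ref{sec-prelim}.
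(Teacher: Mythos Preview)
Your proposal is correct and follows essentially the same approach as the paper's proof: write $B = \langle i', [\ell, k]\rangle$ and chain the cover inequalities with the cut inequalities to obtain $(i', \ell-1) \lesssim (i, j-1) < (i', j')$ and $(i', j'-1) < (i, j) \lesssim (i', k)$, which forces $j' \in [\ell, k]$. The paper's version is slightly more compressed, absorbing your intermediate steps through $(i, a-1)$ and $(i, b)$ directly into the displayed chains, but the argument is the same.
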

\begin{proof}
    Write $B = \ang{i',[\ell,k]}$. We have
    \begin{align*}
        (i',\ell-1) &\lesssim (i,j-1) < (i',j') \quad\text{and} \\
        (i',j'-1) &< (i,j) \lesssim (i',k).
    \end{align*}
    Thus, $j' \in [\ell,k]$, so $(i',j') \in B$.
\end{proof}

\begin{lemma} \label{lem-tight-cut}
    Let $(U_1, U_2, <)$ be an extended word equation. If $(A,B)$ is a tight cover, then any $(i,j) \in B$ cuts some $(i',j') \in A$.
\end{lemma}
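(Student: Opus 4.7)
Write $A = \ang{i,[j,k]}$ and $B = \ang{i',[j',k']}$, so that $i \neq i'$, $(i',j'-1) \lesssim (i,j-1)$, and $(i,k) \lesssim (i',k')$. The first step is to cash in tightness: since $(A,\ang{i',[j'+1,k']})$ fails to be a cover while $(A,B)$ is one (and the only condition that can break is the one involving $j'$), we obtain $(i,j-1) < (i',j')$, and analogously $(i',k'-1) < (i,k)$. These two strict inequalities are the only ingredients from tightness that will be needed.

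Now fix an arbitrary $(i',j'') \in B$, so $j'' \in [j',k']$. The plan is to choose $\ell \in [j,k]$ to be the \emph{largest} index such that $(i,\ell-1) < (i',j'')$, and then verify that $(i',j'')$ cuts $(i,\ell)$. To see that such an $\ell$ exists, note that $j'' \ge j'$ gives $(i',j'') \gtrsim (i',j') > (i,j-1)$, so the candidate $\ell = j$ already satisfies the defining inequality; in particular $\ell \ge j$. Since $\ell \le k$ by construction, we have $(i,\ell) \in A$, and $(i',j'') \neq (i,\ell)$ is automatic because $i \neq i'$. One of the two cutting inequalities, namely $(i,\ell-1) < (i',j'')$, holds by the very definition of $\ell$.

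The remaining inequality $(i',j''-1) < (i,\ell)$ splits into the only two cases that can arise. If $\ell < k$, maximality of $\ell$ forces $(i,\ell) \not< (i',j'')$, i.e., $(i',j'') \lesssim (i,\ell)$, so $(i',j''-1) < (i',j'') \lesssim (i,\ell)$ and we are done. If instead $\ell = k$, maximality gives no information, and this is the step where the second inequality extracted from tightness does the work: from $j'' \le k'$ we get $(i',j''-1) \lesssim (i',k'-1) < (i,k) = (i,\ell)$. This case analysis is the one mildly delicate point; everywhere else the argument is a direct unpacking of definitions. After handling the degenerate possibility that $A$ is empty (in which case the statement is vacuous, as the cover condition combined with $j \le k$ becomes contradictory only when $B$ is nonempty, and the lemma has no content otherwise), the proof is complete.
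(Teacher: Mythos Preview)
Your argument is correct and follows the same route as the paper: extract the two strict inequalities $(i,j-1) < (i',j')$ and $(i',k'-1) < (i,k)$ from tightness, then for a given element of $B$ locate a matching index in $A$. You make the index selection explicit (take the largest $\ell$ with $(i,\ell-1) < (i',j'')$ and split on $\ell < k$ versus $\ell = k$), whereas the paper simply asserts the existence of such an index after recording the two endpoint inequalities; the underlying reasoning is identical. One small remark: your closing sentence about the empty-$A$ case is muddled (the statement is not vacuous when $A$ is empty and $B$ is not), but the paper's proof does not address this edge case either, so you are not missing anything the paper supplies.
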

\begin{proof}
    Write $(A,B) = (\ang{i',[\ell',k']}, \ang{i,[\ell,k]})$. Since $(A,B)$ is tight,
    \begin{align*}
        (i',\ell'-1) &< (i,\ell) \lesssim (i,j) \quad\text{and} \\
        (i,j-1) &\lesssim (i,k-1) < (i',k').
    \end{align*}
    Thus, there is some $j' \in [\ell',k']$ such that
    \begin{align*}
        (i',j'-1) &< (i,j) \quad\text{and} \\
        (i,j-1) &< (i',j'),
    \end{align*}
    so $(i,j)$ cuts $(i',j') \in A$.
\end{proof}

\begin{definition}
    Let $(A_1,B_1)$ and $(A_2,B_2)$ be covers with respect to some extended word equation. We say that $(A_1,B_1)$ \emph{impinges on} $(A_2,B_2)$ if $B_1 \cap A_2 \neq \emptyset$.
\end{definition}

\begin{lemma} \label{lem-tight}
    Suppose $(A_1,B_1)$ and $(A_2,B_2)$ are covers and $(A_1,B_1)$ is tight with respect to some extended word equation. If $(A_1,B_1)$ impinges on $(A_2,B_2)$, then $(A_2,B_2)$ impinges on $(A_1,B_1)$.
\end{lemma}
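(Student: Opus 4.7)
The plan is to argue by contradiction. I would write $A_1 = \ang{i_1, [\ell_1, k_1]}$, $B_1 = \ang{i_1', [\ell_1', k_1']}$, $A_2 = \ang{i_2, [\ell_2, k_2]}$, and $B_2 = \ang{i_2', [\ell_2', k_2']}$, where $i_1 \ne i_1'$ and $i_2 \ne i_2'$ by the definition of a cover. Since sides are two-valued, $B_1 \cap A_2 \ne \emptyset$ forces $i_1' = i_2$, and then $i_2' = i_1$. Thus the desired conclusion $B_2 \cap A_1 \ne \emptyset$ is equivalent to $[\ell_2', k_2'] \cap [\ell_1, k_1] \ne \emptyset$. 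If this index intersection is empty, then either $k_2' < \ell_1$ or $k_1 < \ell_2'$, and I will derive a contradiction in each case.

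Before splitting, I would unpack the two consequences of tightness of $(A_1,B_1)$. Because $(A_1, \ang{i_1', [\ell_1' + 1, k_1']})$ fails to be a cover, the only inequality in the cover definition that can fail is the left-end one, giving $(i_1, \ell_1 - 1) < (i_1', \ell_1')$. Similarly, failure of $(A_1, \ang{i_1', [\ell_1', k_1' - 1]})$ yields $(i_1', k_1' - 1) < (i_1, k_1)$. Finally, fix an impingement witness $j \in [\ell_1', k_1'] \cap [\ell_2, k_2]$, so that $(i_2, \ell_2) \lesssim (i_2, j) \lesssim (i_2, k_2)$ and $\ell_1' \le j \le k_1'$.

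In the case $k_2' < \ell_1$, the cover $(A_2,B_2)$ combined with this assumption gives the chain $(i_2, j) \lesssim (i_2, k_2) \lesssim (i_2', k_2') = (i_1, k_2') \lesssim (i_1, \ell_1 - 1)$, using $i_2' = i_1$ for the equality of pairs. The first tightness bound then extends this to $(i_2, j) < (i_2, \ell_1')$, i.e., $j < \ell_1'$, contradicting $j \in [\ell_1', k_1']$. In the case $k_1 < \ell_2'$, I use the other end of the cover $(A_2,B_2)$ to obtain $(i_1, k_1) \lesssim (i_1, \ell_2' - 1) = (i_2', \ell_2' - 1) \lesssim (i_2, \ell_2 - 1)$, and chain with the second tightness bound $(i_2, k_1' - 1) = (i_1', k_1' - 1) < (i_1, k_1)$ to conclude $(i_2, k_1' - 1) < (i_2, \ell_2 - 1)$, i.e., $k_1' < \ell_2$. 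This contradicts $\ell_2 \le j \le k_1'$ from impingement.

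The only real obstacle is bookkeeping: keeping straight which endpoint of $(A_2,B_2)$ and which of the two tightness inequalities is relevant in each case, and ensuring that a strict $<$ is harvested from tightness at the right moment so that the impingement index $j$ is squeezed out of its interval. Once those pieces are lined up, each case closes with a short chain of $\lesssim$ and $<$ relations.
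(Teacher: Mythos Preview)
Your proof is correct. The endpoint bookkeeping is right: from tightness you correctly extract $(i_1,\ell_1-1) < (i_1',\ell_1')$ and $(i_1',k_1'-1) < (i_1,k_1)$, and in each of the two cases the chain of $\lesssim$'s and the single strict $<$ from tightness squeezes the impingement index $j$ out of $[\ell_1',k_1']$ or $[\ell_2,k_2]$ as you claim.

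Your route, however, differs from the paper's. The paper argues constructively rather than by contradiction: it picks a witness $(i,j)\in B_1\cap A_2$, invokes Lemma~\ref{lem-tight-cut} (tightness implies every element of $B_1$ cuts some element of $A_1$) to produce $(i',j')\in A_1$ cut by $(i,j)$, and then invokes Lemma~\ref{lem-cover-cut} (an element of $A_2$ cutting something forces that something into $B_2$) to conclude $(i',j')\in B_2\cap A_1$. So the paper packages the two halves of your argument into two reusable lemmas about the ``cuts'' relation, which it also uses elsewhere. Your version is self-contained and avoids the ``cuts'' notion entirely, working directly with the interval endpoints; the cost is that the chains have to be re-derived here instead of being read off from the two auxiliary lemmas.
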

\begin{proof}
    Suppose $(A_1,B_1)$ impinges on $(A_2,B_2)$. Then, for some boundary $(i,j)$, we have $(i,j) \in B_1 \cap A_2$. Since $(A_1,B_1)$ is tight, $(i,j)$ cuts some $(i',j') \in A_1$ by Lemma~\ref{lem-tight-cut}. By Lemma~\ref{lem-cover-cut}, $(i',j') \in B_2$. Thus, $(i',j') \in B_2 \cap A_1$, so $(A_2,B_2)$ impinges on $(A_1,B_1)$.
\end{proof}

\begin{lemma} \label{lem-disjoint-incoherent-core}
    If $(U_1, U_2, <)$ is incoherent, then there is an incoherent core such that no cover impinges on another.
\end{lemma}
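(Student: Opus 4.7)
My plan is to first exhibit some incoherent core and then refine it to one with no impinging pair via a minimality argument. For existence, I would show that the set of \emph{all} covers of $(U_1,U_2,<)$ qualifies whenever $(U_1,U_2,<)$ is incoherent. The reason is that any homomorphism $L : \mathcal{X}^* \to (\mathbb{N},+)$ satisfying the full cover-inequality system would automatically witness coherence: for boundaries $(i,j)$ and $(i',j')$ with $i \ne i'$, the pair $(\ang{i,[1,j]},\ang{i',[1,j']})$ is a cover iff $(i,j) \lesssim (i',j')$, and strict iff $(i,j) < (i',j')$; for $(i,j) \approx (i',j')$ with $i \ne i'$, the two opposing non-strict covers force equality of the prefix lengths; and when $i = i'$, the required inequality follows from positivity of $L$. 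So if $(U_1,U_2,<)$ is incoherent, the system of all cover inequalities has no solution, and the set of all covers is an incoherent core.

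Given this core, by Lemma~\ref{lem-incoherent-char} there is a witness $\{y_\ell\}_\ell$ of non-negative integers satisfying $\sum y_\ell U_{B_\ell} \subseteq \sum y_\ell U_{A_\ell}$ with the strictness clause. Among all pairs (incoherent core, witness), I would choose one minimizing the measure $\sum_\ell y_\ell(|A_\ell|+|B_\ell|)$, and set $C^* = \{(A_\ell,B_\ell) : y_\ell > 0\}$. To prove $C^*$ has no impinging pair, suppose for contradiction that $(A_1,B_1),(A_2,B_2) \in C^*$ satisfy $(i,j) \in B_1 \cap A_2$. Using the boundary order I would locate the positions on the opposite side (call them $q_1^*$ inside $A_1$ and $q_2^*$ inside $B_2$) corresponding to where $(i,j)$ sits, then split both covers at $(i,j)$ into left and right sub-covers. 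Recombining the halves produces a collection of covers whose combined multiset sums $\sum U_A$ and $\sum U_B$ agree with the originals except that the doubly-counted $U_{i,j}$ contributions cancel. Decrementing $y_1,y_2$ by one and adding the new covers with unit coefficient then yields a witness that still certifies incoherence via Lemma~\ref{lem-incoherent-char} but has strictly smaller measure, contradicting minimality.

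The main obstacle will be the split-and-recombine step: I must verify that the corresponding split positions on the opposite side can always be chosen so that each of the resulting pieces is a valid cover (the boundary-order conditions $(i',j'-1) \lesssim (i,j-1)$ and $(i,k) \lesssim (i',k')$ both hold), that the measure strictly decreases even in degenerate cases (such as when $(i,j)$ lies at an endpoint of $A_1$ or $B_2$, or when $B_1 \cap A_2$ contains several consecutive boundaries), and that the strictness clause of Lemma~\ref{lem-incoherent-char} is preserved. A promising simplification is to first replace each cover in the initial witness by its tight version, which can only strengthen the implied inequality and thus preserve the witness; then Lemma~\ref{lem-tight-cut} locates the corresponding split boundaries cleanly and Lemma~\ref{lem-tight} symmetrizes impinging, so the cancellation argument becomes more uniform.
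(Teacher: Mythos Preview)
Your high-level plan is exactly the paper's: exhibit the prefix covers $\{(\ang{i,[1,j]},\ang{i',[1,j']}) : (i,j)\lesssim(i',j')\}$ as an initial incoherent core, choose a witness from Lemma~\ref{lem-incoherent-char} minimizing $\sum_\ell y_\ell(|A_\ell|+|B_\ell|)$, observe that minimality forces tightness, and use Lemma~\ref{lem-tight} to symmetrize impinging before recombining to contradict minimality.

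The gap is precisely in the recombination you flag as the obstacle. Your sketch splits $(A_1,B_1)$ at a single interior boundary $(i,j)\in B_1$ (removed) and a matching point $q_1^*\in A_1$, and similarly for $(A_2,B_2)$. But this cannot produce two covers from $(A_1,B_1)$: writing $B_1^L=\ang{i,[\cdot,j-1]}$, $B_1^R=\ang{i,[j+1,\cdot]}$ and $A_1^L=\ang{i^*,[\cdot,q]}$, $A_1^R=\ang{i^*,[q+1,\cdot]}$, the cover conditions for the two halves require simultaneously $(i^*,q)\lesssim(i,j-1)$ and $(i,j)\lesssim(i^*,q)$, which is impossible. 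No single split point $q$ works once the middle element of $B_1$ is deleted.

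The paper's recombination is different and avoids this. Using the mutual impinging $B_1\cap A_2\neq\emptyset$ and $B_2\cap A_1\neq\emptyset$, with $A_1=\ang{i,[j_1,k_1]}$, $B_1=\ang{i',[j_2,k_2]}$, $A_2=\ang{i',[j_3,k_3]}$, $B_2=\ang{i,[j_4,k_4]}$ and WLOG $j_1\le j_4$, it takes the \emph{non-overlapping ends}: $A'_1=\ang{i,[j_1,j_4{-}1]}$, $B'_1=\ang{i',[j_2,j_3{-}1]}$, and (depending on whether $k_1\le k_4$) either $A'_2=\ang{i',[k_2{+}1,k_3]}$, $B'_2=\ang{i,[k_1{+}1,k_4]}$ or $A'_2=\ang{i,[k_4{+}1,k_1]}$, $B'_2=\ang{i',[k_3{+}1,k_2]}$. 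These are covers directly from the original cover inequalities (no cut-based split points needed), and one checks the bookkeeping identity $U_{A_1}+U_{A_2}+U_{B'_1}+U_{B'_2}=U_{B_1}+U_{B_2}+U_{A'_1}+U_{A'_2}$ together with $|A'_1|+|B'_1|+|A'_2|+|B'_2|<|A_1|+|B_1|+|A_2|+|B_2|$, which yields the contradiction with minimality while preserving the strictness clause.
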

\begin{proof}
    First, we observe that there is an incoherent core, namely
    \[
        \{(\ang{i, [1,j]}, \ang{i', [1,j']}) \mid (i,j) \lesssim (i',j')\}.
    \]
    
    Now, let $C = \{(A_\ell, B_\ell) \mid \ell \in [N]\}$ be an incoherent core. By Lemma~\ref{lem-incoherent-char}, for some $\{y_\ell \in \mathbb{N} \mid \ell \in [N]\}$, we have
    \begin{equation} \label{subset-1}
        \sum_{\ell \in [N]} y_\ell \cdot U_{B_\ell} \subseteq \sum_{\ell \in [N]} y_\ell \cdot U_{A_\ell}\quad ,
    \end{equation}
    such that either the inclusion is strict or $y_\ell > 0$ for some $\ell \in [N]$ such that $(A_\ell, B_\ell)$ is a strict cover. We may choose $C$ and the coefficients $\{y_\ell \in \mathbb{N} \mid \ell \in [N]\}$ so that $y_\ell > 0$ for all $\ell \in [N]$ and
    \begin{equation} \label{minimizer-1}
        \sum_{\ell \in [N]} y_\ell \cdot (|A_\ell| + |B_\ell|)
    \end{equation}
    is minimal. It follows that all the covers in $C$ are tight.

    We claim that no cover of $C$ impinges on another. For the sake of contradiction, suppose otherwise. Without loss of generality, $B_1 \cap A_2 \neq \emptyset$. By Lemma~\ref{lem-tight}, we have $B_2 \cap A_1 \neq \emptyset$ too. We will contradict our assumption that \eqref{minimizer-1} is minimal.

    Write
    \begin{align*}
        A_1 &= \ang{i, [j_1,k_1]} \\
        B_1 &= \ang{i', [j_2,k_2]} \\
        A_2 &= \ang{i', [j_3,k_3]} \\
        B_2 &= \ang{i, [j_4,k_4]}.
    \end{align*}
    Without loss of generality, $j_1 \le j_4$. Let
    \begin{align*}
        A'_1 &= \ang{i, [j_1,j_4 - 1]} \\
        B'_1 &= \ang{i', [j_2,j_3 - 1]}.
    \end{align*}
    If $k_1 \le k_4$, let
    \begin{align*}
        A'_2 &= \ang{i', [k_2 + 1,k_3]} \\
        B'_2 &= \ang{i, [k_1 + 1,k_4]}.
    \end{align*}
    On the other hand, if $k_1 > k_4$, let
    \begin{align*}
        A'_2 &= \ang{i, [k_4+1,k_1]} \\
        B'_2 &= \ang{i', [k_3+1,k_2]}.
    \end{align*}
    In either case, $(A'_1,B'_1)$ and $(A'_2,B'_2)$ are covers.

    If $k_1 \le k_4$, we have
    \[
        U_{A_1} + U_{A_2} + U_{B'_1} + U_{B'_2} = U_{\ang{i,[j_1,k_4]}} + U_{\ang{i',[j_2,k_3]}} = U_{B_1} + U_{B_2} + U_{A'_1} + U_{A'_2}.
    \]
    Similarly, if $k_1 > k_4$, we have
    \[
        U_{A_1} + U_{A_2} + U_{B'_1} + U_{B'_2} = U_{\ang{i,[j_1,k_1]}} + U_{\ang{i',[j_2,k_2]}} = U_{B_1} + U_{B_2} + U_{A'_1} + U_{A'_2}.
    \]
    Hence,
    \[
        U_{A_1} + U_{A_2} + U_{B'_1} + U_{B'_2} + \sum_{\ell \in [N]} y_\ell \cdot U_{B_\ell} \subseteq U_{B_1} + U_{B_2} + U_{A'_1} + U_{A'_2} + \sum_{\ell \in [N]} y_\ell \cdot U_{A_\ell}.
    \]
    Canceling terms, we get
    \begin{equation} \label{subset-2}
    \begin{split}
        &U_{B'_1} + U_{B'_2} + (y_1 - 1) \cdot U_{B_1} + (y_2 - 1) \cdot U_{B_2} + \sum_{\ell \in [3,N]} y_\ell \cdot U_{B_\ell} \subseteq \\
        &U_{A'_1} + U_{A'_2} + (y_1 - 1) \cdot U_{A_1} + (y_2 - 1) \cdot U_{A_2} + \sum_{\ell \in [3,N]} y_\ell \cdot U_{A_\ell}.
    \end{split}
    \end{equation}
    If the inclusion \eqref{subset-1} is strict, then so is \eqref{subset-2}. If either $(A_1,B_1)$ or $(A_2,B_2)$ is a strict cover, then at least one of $(A'_1,B'_1)$ and $(A'_2,B'_2)$ is a strict cover. Thus, \eqref{subset-2} satisfies the condition from Lemma~\ref{lem-incoherent-char}.

    We claim
    \begin{align*}
        |A'_1| &< |A_1| \\
        |B'_1| &< |B_1|.
    \end{align*}
    Indeed, since $B_2 \cap A_1 \neq \emptyset$, we have $j_4 \le k_1$, so $|A'_1| < |A_1|$. And since $B_1 \cap A_2 \neq \emptyset$, we have $j_3 \le k_2$, so $|B'_1| < |B_1|$. We can similarly show that if $k_1 \le k_4$, then
    \begin{align*}
        |A'_2| &< |A_2| \\
        |B'_2| &< |B_2|,
    \end{align*}
    and if $k_1 > k_4$, then
    \begin{align*}
        |A'_1| + |A'_2| &< |A_1| \\
        |B'_1| + |B'_2| &< |B_1|.
    \end{align*}
    Hence,
    \[
        |A'_1| + |B'_1| + |A'_2| + |B'_2| < |A_1| + |B_1| + |A_2| + |B_2|.
    \]
    This inequality together with the inclusion \eqref{subset-2} contradicts our assumption that \eqref{minimizer-1} is minimal.
\end{proof}

Let $(U_1,U_2,<)$ be a staggered coherent extended word equation, and assume without loss of generality that $(1,1) > (2,1)$. Let $(U'_1,U'_2,<')$ be an extended word equation obtained from $(U_1,U_2,<)$ by an extended Nielsen transformation, and let $B'^+$ be defined as in Section~\ref{sec-ext-nielsen}. If $(i,j) \in B'^+$ with $(i,j)$ adjacent to $(3-i,j')$ with respect to $<'$ for some $j'$, then we have an extended word equation $(U'_1,U'_2,<'')$, where $<''$ is obtained by swapping the order of $(i,j)$ and $(3-i,j')$ in $<'$. We say that $(U'_1,U'_2,<'')$ is a \emph{swap} of $(U'_1,U'_2,<')$.

Given any two staggered extended word equations $(U'_1,U'_2,<')$ and $(U'_1,U'_2,<'')$ obtained from a staggered coherent extended word equation $(U_1,U_2,<)$ by an extended Nielsen transformation, we can reach one from the other by a sequence of swaps. This is because an extended Nielsen transformation is determined by the choice of how to order each boundary $(i,j) \in B'^+$ relative to every boundary of the form $(3-i,j') \in B'$.

\begin{lemma} \label{lem-fallback}
    Let $(U_1,U_2,<)$ be a hereditarily staggered extended word equation. Suppose it is possible to apply an incoherent extended Nielsen transformation to $(U_1,U_2,<)$ that yields a staggered extended word equation $(U'_1,U'_2,<')$. Then, it is possible to apply a coherent extended Nielsen transformation to $(U_1,U_2,<)$ that yields a staggered extended word equation $(U'_1,U'_2,<'')$ such that $G_{U'_1,U'_2,<''}$ is cyclic.
\end{lemma}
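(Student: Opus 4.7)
The plan is to combine the hereditarily staggered hypothesis with the structural information furnished by Lemma~\ref{lem-disjoint-incoherent-core}. First, I would use hereditary staggeredness to obtain some coherent extended Nielsen transformation from $(U_1, U_2, <)$ to a staggered extended word equation $(U'_1, U'_2, <^*)$. Since the case of the Nielsen transformation is determined by $(U_1, U_2, <)$ alone, both $<'$ and $<^*$ yield the same $U'_1, U'_2$ and extend the same restriction to $B'^-$ (as pinned down by $<$ via $\nu$). They can therefore differ only in how boundaries of $B'^+$ are interleaved with opposite-side boundaries, and hence are connected by a sequence of swaps; each swap preserves staggeredness since it does not create any new equivalences. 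Let $<''$ be the first coherent order reached along this swap sequence and $<^\dagger$ its immediate (incoherent) predecessor.

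Next, I would apply Lemma~\ref{lem-disjoint-incoherent-core} to $(U'_1, U'_2, <^\dagger)$ to obtain an incoherent core $C = \{(A_\ell, B_\ell) \mid \ell \in [N']\}$ with no cover impinging on another, together with nonnegative coefficients $y_\ell$ (from Lemma~\ref{lem-incoherent-char}) such that $\sum_\ell y_\ell \cdot U_{B_\ell} \subseteq \sum_\ell y_\ell \cdot U_{A_\ell}$ in the sense required by that lemma. Since the swap from $<^\dagger$ to $<''$ alters the order of exactly one pair of adjacent boundaries and $<''$ is coherent, at least one cover in $C$ must be broken (or have its strictness relaxed in a decisive way) by the swap; otherwise $C$ would remain an incoherent core for $<''$, contradicting coherence. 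This pins down the local geometry of the critical swap.

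The main step is to extract a cycle in $G_{U'_1, U'_2, <''}$ from $C$. The no-impinging condition makes $\bigcup_\ell A_\ell$ and $\bigcup_\ell B_\ell$ disjoint sets of boundaries. The inclusion $\sum_\ell y_\ell \cdot U_{B_\ell} \subseteq \sum_\ell y_\ell \cdot U_{A_\ell}$ then pairs every variable instance on the $B$-side with a matching variable instance on the $A$-side, and disjointness of the two unions guarantees the paired boundaries are distinct and inequivalent, yielding mirror edges. Because all covers in $C$ are tight (an outcome of the minimality argument inside the proof of Lemma~\ref{lem-disjoint-incoherent-core}), Lemma~\ref{lem-tight-cut} provides, for each boundary in any $B_\ell$, a boundary in $A_\ell$ that it cuts. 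Composing cuts with mirrors yields cut-graph edges that send $\bigcup_\ell B_\ell$ into $\bigcup_\ell A_\ell$ and back; every vertex in the union has positive out-degree, so this finite subgraph must contain a cycle.

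The main obstacle is ensuring the cycle lives in $G_{U'_1, U'_2, <''}$ rather than just in $G_{U'_1, U'_2, <^\dagger}$. Mirror edges transfer verbatim because swaps leave equivalence classes intact, but cut edges can change when they involve the two swapped boundaries or their immediate index-neighbors. The idea is to exploit the broken cover identified above: it precisely localizes which cut in the cycle is at risk, and the structural information of the swap (a boundary in $B'^+$ crossing a specific opposite-side boundary) lets us either reroute the cycle within the rich structure provided by Lemma~\ref{lem-disjoint-incoherent-core}, or, failing that, invoke the lemma again on the surviving residual cover structure to build a fresh cycle using only cuts untouched by the swap. Carefully carrying out this local bookkeeping is where the bulk of the technical effort concentrates.
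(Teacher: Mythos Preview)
Your high-level framework matches the paper's: reduce to a single swap separating an incoherent order $<^\dagger$ from a coherent one $<''$, apply Lemma~\ref{lem-disjoint-incoherent-core} to $<^\dagger$, and assemble a cycle in $G_{U'_1,U'_2,<''}$ from the resulting core. But you misidentify where the work lies. You treat the transfer of cut relations from $<^\dagger$ to $<''$ as the central obstacle and propose vague ``rerouting'' or ``residual cover'' fallbacks. In fact this transfer is nearly free: the two swapped boundaries $(i,j) \in B'^+$ and $(3-i,j')$ are \emph{adjacent} in both orders, so $(i,j-1) < (3-i,j')$ and $(3-i,j'-1) < (i,j)$ hold under both $<^\dagger$ and $<''$, and hence they cut each other in both. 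For any $(i_1,j_1)$ outside the four-element set $\{(i,j),(i,j+1),(3-i,j'),(3-i,j'+1)\}$ its cut relations are unchanged by the swap, and for the four exceptional boundaries a direct check using adjacency shows each still cuts the right partner under $<''$. No rerouting is needed; your proposed workaround is both unnecessary and too vague to constitute a proof.

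Your cycle construction is also stated in the wrong direction. You invoke Lemma~\ref{lem-tight-cut} to get that every boundary of $B_\ell$ cuts something in $A_\ell$, but the edge you actually need runs the other way: you need every $a \in A_\ell$ to cut some $b \in B_\ell$, so that (combined with $b$ mirroring some $a' \in A_{\ell'}$ via the inclusion) every vertex of $\bigcup_\ell A_\ell$ has positive out-degree in the cut graph. The paper obtains this via Lemma~\ref{lem-cover-cut} (no tightness required) together with the observation that in a staggered equation every boundary cuts something on the opposite side; whatever $a$ cuts then automatically lies in $B_\ell$. Your appeal to tightness, lifted from inside the proof of Lemma~\ref{lem-disjoint-incoherent-core}, is unnecessary and does not yield the direction you need.
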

\begin{proof}
    By the assumptions of the lemma, $(U_1,U_2,<)$ is nontrivial, so it is possible to apply a coherent extended Nielsen transformation to $(U_1, U_2, <)$ that yields a staggered extended word equation $(U'_1,U'_2,<'')$. For any choice of $(U'_1,U'_2,<')$ and $(U'_1,U'_2,<'')$, we can reach the latter from the former by a sequence of swaps, at least one of which must go from an incoherent extended word equation to a coherent one. Thus, we may choose $(U'_1,U'_2,<')$ and $(U'_1,U'_2,<'')$ so that each is a swap of the other. Let $B'^+$ be defined as in Section~\ref{sec-ext-nielsen}. Then, for some $(i,j) \in B'^+$ and $j'$, the boundary order $<''$ is obtained by swapping the order of adjacent elements $(i,j)$ and $(3-i,j')$ in $<'$. We claim that $G_{U'_1,U'_2,<''}$ is cyclic.

    By Lemma~\ref{lem-disjoint-incoherent-core}, $(U'_1,U'_2,<')$ has an incoherent core $C = \{(A_\ell, B_\ell) \mid \ell \in [N]\}$ such that no cover impinges on another. By Lemma~\ref{lem-incoherent-char}, there are some $\{y_\ell \in \mathbb{N} \mid \ell \in [N]\}$ such that
    \begin{equation} \label{subset-3}
        \sum_{\ell \in [N]} y_\ell \cdot U'_{B_\ell} \subseteq \sum_{\ell \in [N]} y_\ell \cdot U'_{A_\ell}.
    \end{equation}
    By Lemma~\ref{lem-cover-cut}, for any $\ell \in [N]$, if $(i_1,j_1) \in A_\ell$ cuts some $(i'_1,j'_1)$ with respect to $(U'_1,U'_2,<')$, then $(i'_1,j'_1) \in B_\ell$. We first claim that for any $\ell \in [N]$, every $(i_1,j_1) \in A_\ell$ cuts some $(i'_1,j'_1) \in B_\ell$ with respect to $(U'_1,U'_2,<'')$. Since $(U'_1,U'_2,<')$ and $(U'_1,U'_2,<'')$ are swaps of each other, this follows immediately from the previous assertion unless $(i_1,j_1) \in \{(i,j),(i,j+1),(3-i,j'),(3-i,j'+1)\}$. Since $(i,j)$ and $(3-i,j')$ are adjacent with respect to both $<'$ and $<''$, we have
    \begin{align*}
        (i,j-1) &<' (3-i,j') \\
        (3-i,j'-1) &<' (i,j),
    \end{align*}
    and similarly with $<''$ in place of $<'$. Thus, if $(i_1,j_1) = (i,j)$, then $(i_1,j_1)$ cuts $(3-i,j')$ with respect to both $(U'_1,U'_2,<')$ and $(U'_1,U'_2,<'')$. In particular, $(3-i,j') \in B_\ell$ by Lemma~\ref{lem-cover-cut}, which proves the claim when $(i_1,j_1) = (i,j)$. The remaining three cases can be handled similarly.

    Next, we claim that for any $\ell \in [N]$, every $(i'_1,j'_1) \in B_\ell$ mirrors some $(i_2,j_2) \in A_{\ell'}$ with respect to $(U'_1,U'_2,<'')$ for some $\ell' \in [N]$. Let $(i'_1,j'_1) \in B_\ell$. By the inclusion \eqref{subset-3}, there is some $(i_2,j_2) \in A_{\ell'}$ with $U'_{i'_1,j'_1} = U'_{i_2,j_2}$. Since no cover of $C$ impinges on another, $(i'_1,j'_1) \neq (i_2,j_2)$. And since $(U'_1,U'_2,<'')$ is staggered and coherent, $(i'_1,j'_1) \not\approx'' (i_2,j_2)$. Hence, $(i'_1,j'_1)$ mirrors $(i_2,j_2)$ with respect to $(U'_1,U'_2,<'')$.

    Therefore, for any $(i_1,j_1) \in A_\ell$, there is some $(i_2,j_2) \in A_{\ell'}$ such that $((i_1,j_1), (i_2,j_2)) \in E''$, where $G_{U'_1,U'_2,<''} = (B',E'')$. Hence, we have an infinite walk in $G_{U'_1,U'_2,<''}$. It follows that $G_{U'_1,U'_2,<''}$ is cyclic.
\end{proof}

\subsection{The coherent case}

In light of Lemma~\ref{lem-fallback}, we can now ignore coherence issues. Given a hereditarily staggered extended word equation $(U_1,U_2,<)$ such that $G_{U_1,U_2,<}$ is cyclic, we now aim to describe how to apply an extended Nielsen transformation to $(U_1,U_2,<)$ that yields a staggered extended word equation $(U'_1,U'_2,<')$ such that $G_{U'_1,U'_2,<'}$ is cyclic.

\begin{definition}
    Given an extended word equation $(U_1,U_2,<)$ and a cycle
    \[
        C = ((i_1,j_1), (i_2,j_2), \dots, (i_{n-1},j_{n-1}), (i_1,j_1))
    \]
    in $G_{U_1,U_2,<}$, there is a set of boundaries
    \[
        S = \left\{(i_1,j_1), (i'_1,j'_1), (i_2,j_2), (i'_2,j'_2), \dots, (i_{n-1},j_{n-1}), (i'_{n-1},j'_{n-1})\right\}
    \]
    such that for each $k \in [n-1]$, we have $(i_k,j_k)$ cuts $(i'_k,j'_k)$ and $(i'_k,j'_k)$ mirrors $(i_{k+1},j_{k+1})$ (where $(i_n,j_n) = (i_1,j_1)$). We say that $S$ is an \emph{auxiliary set of $C$}. We say that $C$ is \emph{pointed with respect to $S$} if the boundary $(i_1,j_1)$ is a minimal element of $S$ with respect to $<$.
\end{definition}

\begin{lemma} \label{lem-pointed}
    If $G_{U_1,U_2,<}$ is cyclic, then there is some cycle of minimum length in $G_{U_1,U_2,<}$ that is pointed with respect to some auxiliary set.
\end{lemma}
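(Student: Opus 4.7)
The plan is to take any minimum-length cycle together with any auxiliary set, locate a minimal element $(a,b)$ of the auxiliary set, and then (if necessary) modify the cycle so that $(a,b)$ becomes its starting vertex. The key observation is that both the cut relation and the mirror relation are symmetric: swapping the two boundaries in either definition leaves the condition unchanged. This lets me view a cycle together with an auxiliary set as a single ``alternating cycle'' whose edges alternate between cut and mirror, and then reverse-and-shift it by one step to swap the roles of cycle vertices and auxiliary witnesses.

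First I would pick any minimum-length cycle $C = ((i_1, j_1), \ldots, (i_{n-1}, j_{n-1}), (i_1, j_1))$ with an auxiliary set $S = \{(i_k, j_k), (i'_k, j'_k) : k \in [n-1]\}$ (one exists by the definition of edges in $G_{U_1, U_2, <}$), and let $(a,b)$ be a minimal element of the finite set $S$ under $<$. If $(a,b) = (i_\ell, j_\ell)$ for some $\ell$, I would cyclically rotate $C$ so that it begins at $(i_\ell, j_\ell)$; the auxiliary set is preserved (up to relabeling), and the rotated cycle is pointed with respect to it.

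The interesting case is when $(a,b)$ appears in $S$ only as an auxiliary witness, say $(a,b) = (i'_\ell, j'_\ell)$. I would exhibit the cycle
\[
C' = ((i'_\ell, j'_\ell), (i'_{\ell-1}, j'_{\ell-1}), \ldots, (i'_{\ell+1}, j'_{\ell+1}), (i'_\ell, j'_\ell))
\]
(with subscripts taken modulo $n-1$), together with the choice of $(i_k, j_k)$ as auxiliary witness for the edge from $(i'_k, j'_k)$ to $(i'_{k-1}, j'_{k-1})$. To justify this: by symmetry of the cut relation, since $(i_k, j_k)$ cuts $(i'_k, j'_k)$, also $(i'_k, j'_k)$ cuts $(i_k, j_k)$; and by symmetry of the mirror relation, since $(i'_{k-1}, j'_{k-1})$ mirrors $(i_k, j_k)$, also $(i_k, j_k)$ mirrors $(i'_{k-1}, j'_{k-1})$. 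Hence each consecutive pair in $C'$ is an edge of $G_{U_1,U_2,<}$. The cycle $C'$ has the same length as $C$, so it is also of minimum length, and its auxiliary set is $\{(i'_k, j'_k) : k\} \cup \{(i_k, j_k) : k\} = S$, in which $(a,b)$ is a minimal element and the starting vertex.

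I expect the main obstacle is simply identifying the right construction. A naive attempt to substitute $(a,b)$ for a single cycle vertex fails, because the mirror relation alone does not supply the cut to the preceding vertex required by the edge definition in $G_{U_1, U_2, <}$. Once one thinks of the cycle-plus-auxiliary as an alternating cycle of cut and mirror edges, the symmetry of both relations makes the reverse-and-shift construction essentially forced, and the verification reduces to unpacking those symmetries.
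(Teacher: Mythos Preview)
Your proposal is correct and follows essentially the same approach as the paper's proof: take a minimum-length cycle with an auxiliary set, and either rotate (if a minimal element is already a cycle vertex) or reverse the cycle using the symmetry of the cut and mirror relations (if a minimal element is an auxiliary witness), so that in either case the auxiliary set is preserved and the minimal element becomes the starting vertex. Your explicit identification of the cut/mirror symmetries and the ``alternating cycle'' viewpoint is exactly the idea underlying the paper's reversed cycle $((i'_k,j'_k),(i'_{k-1},j'_{k-1}),\dots,(i'_{k+1},j'_{k+1}),(i'_k,j'_k))$.
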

\begin{proof}
    Let
    \[
        C = ((i_1,j_1), (i_2,j_2), \dots, (i_{n-1},j_{n-1}), (i_1,j_1))
    \]
    be a cycle of minimum length in $G_{U_1,U_2,<}$, and let
    \[
        S = \left\{(i_1,j_1), (i'_1,j'_1), (i_2,j_2), (i'_2,j'_2), \dots, (i_{n-1},j_{n-1}), (i'_{n-1},j'_{n-1})\right\}
    \]
    be an auxiliary set of $C$, where for each $k \in [n-1]$, we have $(i_k,j_k)$ cuts $(i'_k,j'_k)$ and $(i'_k,j'_k)$ mirrors $(i_{k+1},j_{k+1})$ (where $(i_n,j_n) = (i_1,j_1)$). If $(i_k,j_k)$ is a minimal element of $S$ with respect to $<$ for some $k \in [n-1]$, then we have the cycle
    \[
        ((i_k,j_k), (i_{k+1},j_{k+1}), \dots, (i_{n-1},j_{n-1}), (i_1,j_1), \dots, (i_{k-1},j_{k-1}), (i_k,j_k)),
    \]
    which is of minimum length in $G_{U_1,U_2,<}$ and pointed with respect to $S$. If $(i'_k,j'_k)$ is a minimal element of $S$ with respect to $<$ for some $k \in [n-1]$, then we have the cycle
    \[
        \left((i'_k,j'_k), (i'_{k-1},j'_{k-1}), \dots, (i'_1,j'_1), (i'_{n-1},j'_{n-1}), \dots, (i'_{k+1},j'_{k+1}), (i'_k,j'_k)\right),
    \]
    which is of minimum length in $G_{U_1,U_2,<}$ and pointed with respect to $S$.
\end{proof}

Throughout the remainder of this section, if $(U_1,U_2,<)$ and $(U'_1,U'_2,<')$ are extended word equations, let their boundaries be $B$ and $B'$ respectively. If $(U'_1,U'_2,<')$ is the result of applying a case II extended Nielsen transformation to $(U_1,U_2,<)$, let $\mu$, $\nu$, $B^-$, $B'^-$, and $B'^+$ be defined as in Section~\ref{sec-ext-nielsen}.

\begin{lemma} \label{lem-mu-order}
    Suppose $(U'_1,U'_2,<')$ is the result of applying a case II coherent extended Nielsen transformation to $(U_1,U_2,<)$. If $(i,j),(i',j') \in B$ and $(i,j) < (i',j')$, then \break $(i,\mu(i,j)) <' (i',\mu(i',j'))$.
\end{lemma}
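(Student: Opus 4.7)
The plan is to apply the definition of $<'$ directly, using Lemma~\ref{lem-left-inverse} to convert between boundaries in $B^-$ and $B'^-$. The main case is $(i,j), (i',j') \in B^-$; the edge case where one of them equals $(2,1)$ is handled separately.

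Recall that in a case II extended Nielsen transformation, $<'$ is required to satisfy, for all $(a,b), (a',b') \in B'^-$,
\[
    (a,b) <' (a',b') \iff (a, \nu(a,b)) < (a', \nu(a',b')).
\]
Given $(i,j), (i',j') \in B^-$, both $(i,\mu(i,j))$ and $(i',\mu(i',j'))$ lie in $B'^- = \{(a,\mu(a,b)) \mid (a,b) \in B^-\}$ by its very definition. I would substitute these two boundaries into the biconditional above, and then invoke Lemma~\ref{lem-left-inverse} to replace $\nu(i,\mu(i,j))$ by $j$ and $\nu(i',\mu(i',j'))$ by $j'$. This yields
\[
    (i, \mu(i,j)) <' (i', \mu(i',j')) \iff (i,j) < (i',j'),
\]
which gives the lemma's conclusion (in fact as an equivalence) whenever both boundaries lie in $B^-$.

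It remains to address the boundary $(2,1)$, which is the only element of $B \setminus B^-$. Since we are in case II, $(1,1) > (2,1)$, so $(2,1)$ is the minimum element of $B$. This renders the case $(i',j') = (2,1)$ incompatible with the hypothesis $(i,j) < (i',j')$, so it is vacuous. If instead $(i,j) = (2,1)$, then $U_{2,1} \ne U_{1,1}$ (otherwise case I would apply), so plugging into the formula for $\mu$ gives $\mu(2,1) = 0$. The desired inequality $(2,0) <' (i',\mu(i',j'))$ then follows from the convention $(i,0) < (i,1)$, since $\mu(i',j') \ge 1$ for every $(i',j') \in B^-$.

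I do not expect any genuine obstacle: the lemma is essentially a direct translation of the defining constraint on $<'$ across the correspondence between $B^-$ and $B'^-$ induced by $\mu$ and its left inverse $\nu$. The only mild subtlety is book-keeping for the boundary $(2,1) \notin B^-$, which is absorbed by the convention on $(i,0)$ together with the fact that in case II $(2,1)$ is the minimum of $B$.
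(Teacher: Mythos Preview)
Your proposal is correct and follows essentially the same approach as the paper's proof: both split into the main case $(i,j),(i',j')\in B^-$ (handled via the defining biconditional for $<'$ together with Lemma~\ref{lem-left-inverse}) and the edge case $(i,j)=(2,1)$ (handled via $\mu(2,1)=0$ and the convention on $(i,0)$). The paper's write-up is terser but the argument is the same.
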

\begin{proof}
    If $(i,j),(i',j') \in B^-$, then by Lemma~\ref{lem-left-inverse}, $(i,\nu(i,\mu(i,j))) < (i',\nu(i',\mu(i',j')))$, which implies $(i,\mu(i,j)) <' (i',\mu(i',j'))$. Otherwise, $(i,j) = (2,1)$ and $(i',j') \in B^-$, in which case $(i,\mu(i,j)) = (i,0) <' (i',\mu(i',j'))$.
\end{proof}

\begin{lemma} \label{lem-mu-minus-1-order}
    Suppose $(U'_1,U'_2,<')$ is a staggered extended word equation resulting from applying a case II coherent extended Nielsen transformation to $(U_1,U_2,<)$. Let $(i,j), (i',j') \in B^-$, and assume that if $(i,\mu(i,j)-1) \in B'^+$, then $(i,\mu(i,j)-2) \lessdot' (i,\mu(i,j)-1)$. Then, if $(i,j-1) < (i', j')$, we have $(i,\mu(i,j)-1) <' (i',\mu(i',j'))$.
\end{lemma}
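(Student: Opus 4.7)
The plan is to split into cases based on whether $U_{i, j} = U_{1, 1}$, locate $(i, \mu(i, j) - 1)$ either in $B'^-$, in $B'^+$, or at a convention endpoint, and then convert the given $(i, j-1) < (i', j')$ into the desired $<'$-inequality. The only real work is needed in the subcase where $(i, \mu(i, j) - 1)$ is a new boundary in $B'^+$; there the adjacency assumption combined with staggeredness of $(U'_1, U'_2, <')$ is what yields a strict inequality.

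First suppose $U_{i, j} \neq U_{1, 1}$. The formula for $\mu$ gives $\mu(i, j) - 1 = \mu(i, j-1)$ for $j \ge 2$, with the reading $\mu(2, 1) = 0$ (which is forced because coherence together with the case~II hypothesis $(1, 1) > (2, 1)$ implies $U_{2, 1} \neq U_{1, 1}$). When $(i, j-1) \in B^-$, the boundary $(i, \mu(i, j) - 1) = (i, \mu(i, j-1))$ lies in $B'^-$; Lemma~\ref{lem-left-inverse} gives $\nu(i, \mu(i, j-1)) = j - 1$, so the definition of $<'$ on $B'^-$ converts $(i, j-1) < (i', j')$ directly into $(i, \mu(i, j-1)) <' (i', \mu(i', j'))$. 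Otherwise either $(i, j-1) = (2, 1)$ (i.e.\ $i = 2$, $j = 2$) or $j = 1$ with $(i, j) = (1, 1)$, and in both subcases $(i, \mu(i, j) - 1)$ degenerates to the convention endpoint $(i, 0)$, whose strict $<'$-precedence over any $(i', \mu(i', j')) \in B'^-$ is immediate because $\mu \ge 1$ on $B^-$.

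Now suppose $U_{i, j} = U_{1, 1}$ with $j \ge 2$. A direct computation yields $\mu(i, j) - 1 = \mu(i, j-1) + 1$, and the two symbols $U_{2, 1} U_{1, 1}$ replacing $U_{i, j}$ in $U'_i$ occupy positions $\mu(i, j) - 1$ and $\mu(i, j)$ respectively; since $(i, \mu(i, j)) \in B'^-$ already absorbs the image of $(i, j)$, the neighbouring boundary $(i, \mu(i, j) - 1)$ lies in $B'^+$. The hypothesis of the lemma therefore yields $(i, \mu(i, j) - 2) \lessdot' (i, \mu(i, j) - 1)$, and $(i, \mu(i, j) - 2) = (i, \mu(i, j-1))$; the argument of the previous paragraph, applied with $(i, j-1)$ in place of $(i, j)$, already gives $(i, \mu(i, j-1)) <' (i', \mu(i', j'))$. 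If in addition $(i', \mu(i', j')) <' (i, \mu(i, j) - 1)$, then $(i', \mu(i', j'))$ would fall strictly between the two boundaries in the adjacency, contradicting $\lessdot'$: distinctness from the left endpoint uses Lemma~\ref{lem-left-inverse} (which would otherwise force $(i', j') = (i, j-1)$, violating irreflexivity of $<$), while distinctness from the right endpoint uses $B'^- \cap B'^+ = \emptyset$. Hence $(i, \mu(i, j) - 1) \lesssim' (i', \mu(i', j'))$, and it remains to rule out $\approx'$: if $i = i'$ then $\approx'$ within $B'_i$ is equality, impossible because one boundary is in $B'^+$ and the other in $B'^-$; if $i \ne i'$ then staggeredness restricts $\approx'$ to the two extreme pairs, but $(i, \mu(i, j) - 1)$ is genuinely interior since $\mu(i, j) - 1 \ge 1$ and $(i, \mu(i, j))$ is a strictly later boundary of $U'_i$. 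This interiority check is the subtle step that makes the staggered hypothesis bite; everything else is bookkeeping with the formula for $\mu$.
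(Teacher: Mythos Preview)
Your proof is correct and follows essentially the same approach as the paper's. Your case split on whether $U_{i,j}=U_{1,1}$ is equivalent to the paper's split on whether $(i,\mu(i,j)-1)$ lies in $B'^-$, $B'^+$, or outside $B'$; the paper reaches $(i,\mu(i,j-1))<'(i',\mu(i',j'))$ via Lemma~\ref{lem-mu-order} rather than reproving it, and then uses $\lessdot'$ just as you do. One cosmetic point: the subcase ``$j=1$ with $(i,j)=(1,1)$'' that you list under $U_{i,j}\neq U_{1,1}$ actually has $U_{i,j}=U_{1,1}$, so it belongs with your second paragraph; the argument you give for it (endpoint $(i,0)$) is nonetheless correct. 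Your explicit verification that $\approx'$ cannot occur, invoking staggeredness and $B'^-\cap B'^+=\emptyset$, spells out a step the paper leaves implicit.
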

\begin{proof}
    Suppose $(i,j-1) < (i', j')$. If $j = 1$, then $(i,j) = (1,1)$, so $(i,\mu(i,j)-1) = (i,0) <' (i',\mu(i',j'))$. Otherwise, $(i,\mu(i,j-1)) <' (i',\mu(i',j'))$ by Lemma~\ref{lem-mu-order}. If $(i,\mu(i,j)-1) \in B'^-$, then $\mu(i,j-1) = \mu(i,j)-1$, so $(i,\mu(i,j)-1) <' (i',\mu(i',j'))$. If $(i,\mu(i,j)-1) \in B'^+$, then $\mu(i,j-1) = \mu(i,j)-2$. Hence, $(i,\mu(i,j-1)) \lessdot' (i,\mu(i,j)-1)$, so $(i,\mu(i,j)-1) <' (i',\mu(i',j'))$ in this case too. Finally, if $(i,\mu(i,j)-1) \notin B'^- \cup B'^+ = B'$, then $\mu(i,j) = 1$, so $(i,\mu(i,j)-1) = (i,0) <' (i',\mu(i',j'))$. In any case, $(i,\mu(i,j)-1) <' (i',\mu(i',j'))$.
\end{proof}

\begin{lemma} \label{lem-mu-mirror}
    Suppose $(U'_1,U'_2,<')$ is the result of applying a case II coherent extended Nielsen transformation to $(U_1,U_2,<)$. If $(i,j), (i',j') \in B^-$ and $(i,j)$ mirrors $(i',j')$ with respect to $(U_1,U_2,<)$, then $(i,\mu(i,j))$ mirrors $(i',\mu(i',j'))$ with respect to $(U'_1,U'_2,<')$.
\end{lemma}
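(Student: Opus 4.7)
The plan is to unpack the definition of ``mirrors'' into its two components, establish each in $(U'_1,U'_2,<')$ using the earlier lemmas, and note that coherence gives us the simplified mirror criterion to close the argument.

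First I would verify the variable-equality component. By Lemma~\ref{lem-mu-preserve} applied at $(i,j)$ and at $(i',j')$, we have $U'_{i,\mu(i,j)} = U_{i,j}$ and $U'_{i',\mu(i',j')} = U_{i',j'}$. Since $(i,j)$ mirrors $(i',j')$ with respect to $(U_1,U_2,<)$, we have $U_{i,j} = U_{i',j'}$, and the two equalities chain to give $U'_{i,\mu(i,j)} = U'_{i',\mu(i',j')}$.

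Next I would verify that the image boundaries are not equivalent under $<'$. Since $(U_1,U_2,<)$ is coherent and $(i,j)$ mirrors $(i',j')$, we have $(i,j) \not\approx (i',j')$; without loss of generality, $(i,j) < (i',j')$. Apply Lemma~\ref{lem-mu-order} to transport this strict comparison to the transformed extended word equation, obtaining $(i,\mu(i,j)) <' (i',\mu(i',j'))$. In particular, $(i,\mu(i,j)) \not\approx' (i',\mu(i',j'))$.

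Finally, because the extended Nielsen transformation is coherent, $(U'_1,U'_2,<')$ is a coherent extended word equation, so the simplified mirror criterion from the remark after the definition of ``mirrors'' applies: equality of the variables together with non-equivalence of the boundaries is sufficient. Combining the two bullets above, $(i,\mu(i,j))$ mirrors $(i',\mu(i',j'))$ with respect to $(U'_1,U'_2,<')$. There is no real obstacle here; the entire proof is a bookkeeping argument built on Lemmas~\ref{lem-mu-preserve} and \ref{lem-mu-order}, with coherence of the target extended word equation doing the final bit of work.
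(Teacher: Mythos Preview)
Your proposal is correct and follows essentially the same approach as the paper: both arguments use Lemma~\ref{lem-mu-preserve} for the variable-equality component and Lemma~\ref{lem-mu-order} to transport the strict order (hence $\not\approx'$). One small remark: your final appeal to coherence of $(U'_1,U'_2,<')$ is unnecessary, since the bare definition of ``mirrors'' already has $(i,\mu(i,j)) \not\approx' (i',\mu(i',j'))$ as one disjunct, so variable equality together with non-equivalence suffices directly.
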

\begin{proof}
    Suppose $(i,j)$ mirrors $(i',j')$ with respect to $(U_1,U_2,<)$. Then, $U_{i,j} = U_{i', j'}$ and $(i,j) \not\approx (i', j')$. By Lemmas~\ref{lem-mu-preserve} and \ref{lem-mu-order}, $U'_{i,\mu(i,j)} = U'_{i',\mu(i',j')}$ and $(i,\mu(i,j)) \not\approx' (i', \mu(i', j'))$, so $(i,\mu(i,j))$ mirrors $(i',\mu(i',j'))$ with respect to $(U'_1,U'_2,<')$.
\end{proof}

\begin{lemma} \label{lem-not-minimum}
    Let $(U_1,U_2,<)$ be a hereditarily staggered extended word equation. Suppose $G_{U_1,U_2,<}$ contains a cycle
    \[
        C = ((i_1,j_1), (i_2,j_2), \dots, (i_{n-1},j_{n-1}), (i_1,j_1))
    \]
    that is pointed with respect to some auxiliary set. If $(i_1,j_1)$ is not the minimum boundary of $B$, then it is possible to apply a coherent extended Nielsen transformation to $(U_1,U_2,<)$ that yields a staggered extended word equation $(U'_1,U'_2,<')$ such that $G_{U'_1,U'_2,<'}$ is cyclic.
\end{lemma}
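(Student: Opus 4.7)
The plan is to apply an extended Nielsen transformation that lifts the cycle $C$ to a cycle in the cut graph of the resulting extended word equation. Without loss of generality, assume we are in case II, so $(1,1) > (2,1)$; case III is symmetric via duality, and case I is analogous with the simpler map $(i,j) \mapsto (i,j-1)$ replacing $\mu$ and Lemma~\ref{lem-mirror-case1} replacing Lemma~\ref{lem-mu-mirror}. Since $(i_1,j_1)$ is not the minimum boundary of $B$ (which is $(2,1)$), and since $(i_1,j_1)$ is a minimum of $S$, every boundary of $S$ is either $>(2,1)$ or $\approx (i_1,j_1)$, so all of $S$ lies in $B^-$, and $\mu$ is defined on every boundary we care about.

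Next, I construct a boundary order $<'$ on $B'$ by placing each new boundary $(i,k) \in B'^+$ so that $(i,k-1) \lessdot' (i,k)$ (using the convention $(i,0)$ when $k=1$); that is, no opposite-side boundary sits strictly between $(i,k-1)$ and $(i,k)$ with respect to $<'$, and $(i,k)$ is not $\approx'$-equivalent to any non-edge opposite-side boundary. The relative order on $B'^-$ is forced by the definition of a case II extended Nielsen transformation. This choice has two benefits. First, the hypothesis of Lemma~\ref{lem-mu-minus-1-order} holds for every image boundary, so applying Lemma~\ref{lem-mu-minus-1-order} to each of the two inequalities defining $(i_k,j_k)$ cuts $(i'_k,j'_k)$ yields that $(i_k,\mu(i_k,j_k))$ cuts $(i'_k,\mu(i'_k,j'_k))$ with respect to $<'$; combined with Lemma~\ref{lem-mu-mirror} applied to the mirror steps, this shows that the image of $C$ under $\mu$ is a cycle in $G_{U'_1,U'_2,<'}$. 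Second, since each $(i,k) \in B'^+$ is strictly separated from its same-side neighbors and avoids $\approx'$-equivalence with opposite-side non-edge boundaries, no new non-edge equivalence $(1,j') \approx' (2,j'')$ is introduced, and staggered-ness is preserved.

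If the resulting $(U'_1,U'_2,<')$ is coherent, we are done. Otherwise, since $(U'_1,U'_2,<')$ is a staggered extended Nielsen transformation result of $(U_1,U_2,<)$, we apply Lemma~\ref{lem-fallback} to obtain a coherent staggered extended word equation $(U'_1,U'_2,<'')$ whose cut graph is cyclic, as required. The main obstacle is the technical verification that the constructed $<'$ is a well-defined interleaving of $B'_1$ and $B'_2$ (in particular, that the $\lessdot'$-adjacency constraints can all be satisfied simultaneously with the forced order on $B'^-$ and the required last-element equivalence) and that staggered-ness is genuinely preserved by our placement choices; handling the case I variant in parallel is a minor additional bookkeeping task.
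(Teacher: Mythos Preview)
Your proposal is correct and follows essentially the same approach as the paper: choose the boundary order $<'$ so that each new boundary $(i,j)\in B'^+$ satisfies $(i,j-1)\lessdot'(i,j)$, then use Lemma~\ref{lem-mu-minus-1-order} twice to show cuts are preserved and Lemma~\ref{lem-mu-mirror} to show mirrors are preserved, with Lemma~\ref{lem-fallback} handling the incoherent case. One small remark: since $(U_1,U_2,<)$ is staggered and has a cycle in its cut graph, case~I cannot occur (as $(1,1)\approx(2,1)$ would force $m=n=1$, which admits no mirror), so your parallel treatment of case~I is unnecessary.
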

\begin{proof}
    By Lemma~\ref{lem-fallback}, we may assume that every extended Nielsen transformation applied to $(U_1,U_2,<)$ yielding a staggered extended word equation is in fact a coherent extended Nielsen transformation.

    Without loss of generality, assume that $(1,1) > (2,1)$. Let $(U'_1,U'_2,<')$ be the unique staggered extended word equation resulting from applying a case II extended Nielsen transformation to $(U_1,U_2,<)$ such that for each $(i,j) \in B'^+$, we have $(i,j-1) \lessdot' (i,j)$.

    The cycle $C$ is pointed with respect to some auxiliary set
    \[
        S = \left\{(i_1,j_1), (i'_1,j'_1), (i_2,j_2), (i'_2,j'_2), \dots, (i_{n-1},j_{n-1}), (i'_{n-1},j'_{n-1})\right\},
    \]
    where for each $k \in [n-1]$, we have $(i_k,j_k)$ cuts $(i'_k,j'_k)$ and $(i'_k,j'_k)$ mirrors $(i_{k+1},j_{k+1})$ with respect to $(U_1,U_2,<)$ (where $(i_n,j_n) = (i_1,j_1)$). Since $(i_1,j_1) \in B^-$ and $C$ is pointed with respect to $S$, we have $(i_k,j_k), (i'_k,j'_k) \in B^-$ for each $k \in [n-1]$. It follows that $(i_k,\mu(i_k,j_k)), (i'_k,\mu(i'_k,j'_k)) \in B'$ for each $k \in [n-1]$. We claim that
    \[
        C' = ((i_1,\mu(i_1,j_1)), (i_2,\mu(i_2,j_2)), \dots, (i_{n-1},\mu(i_{n-1},j_{n-1})), (i_1,\mu(i_1,j_1)))
    \]
    is a cycle in $G_{U'_1,U'_2,<'}$.

    First, we show that for each $k \in [n-1]$, we have $(i_k,\mu(i_k,j_k))$ cuts $(i'_k,\mu(i'_k,j'_k))$ with respect to $(U'_1,U'_2,<')$. Since $(i_k,j_k)$ cuts $(i'_k,j'_k)$ with respect to $(U_1,U_2,<)$, we have $(i_k,j_k-1) < (i'_k, j'_k)$ and $(i'_k,j'_k-1) < (i_k,j_k)$. Hence, by Lemma~\ref{lem-mu-minus-1-order}, we have $(i_k,\mu(i_k,j_k)-1) <' (i'_k,\mu(i'_k,j'_k))$ and $(i'_k,\mu(i'_k,j'_k)-1) <' (i_k, \mu(i_k,j_k))$. Therefore, $(i_k,\mu(i_k,j_k))$ cuts $(i'_k,\mu(i'_k,j'_k))$ with respect to $(U'_1,U'_2,<')$.

    Next, for each $k \in [n-1]$, we have $(i'_k,\mu(i'_k,j'_k))$ mirrors $(i_{k+1},\mu(i_{k+1},j_{k+1}))$ with respect to $(U'_1,U'_2,<')$ by Lemma~\ref{lem-mu-mirror}. Therefore, $C'$ is a cycle in $G_{U'_1,U'_2,<'}$.
\end{proof}

\begin{lemma} \label{lem-mu-minus-1-cut}
    Suppose $(U'_1,U'_2,<')$ is a staggered extended word equation resulting from applying a case II coherent extended Nielsen transformation to $(U_1,U_2,<)$. Let $(i,j), (i',j') \in B^-$. Assume that $(i,\mu(i,j)-1) \in B'^+$, $(i,\mu(i,j)-1) \lessdot' (i,\mu(i,j))$, and that if $(i',\mu(i',j')-1) \in B'^+$, then $(i',\mu(i',j')-2) \lessdot' (i',\mu(i',j')-1)$. Then, if $(i,j)$ cuts $(i',j')$ with respect to $(U_1,U_2,<)$, we have $(i,\mu(i,j)-1)$ cuts $(i',\mu(i',j'))$ with respect to $(U'_1,U'_2,<')$.
\end{lemma}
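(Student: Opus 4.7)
The plan is to verify the three defining conditions of cutting for $(i,\mu(i,j)-1)$ and $(i',\mu(i',j'))$ with respect to $<'$: that the two boundaries are distinct, that $(i,\mu(i,j)-2) <' (i',\mu(i',j'))$, and that $(i',\mu(i',j')-1) <' (i,\mu(i,j)-1)$. Distinctness is immediate because $(i,j)$ cutting $(i',j')$ forces $i \neq i'$.

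To establish $(i,\mu(i,j)-2) <' (i',\mu(i',j'))$, I would exploit the hypothesis $(i,\mu(i,j)-1) \in B'^+$. By Lemma~\ref{lem-nu-plus-2} together with Lemma~\ref{lem-mu-preserve}, $U'_{i,\mu(i,j)} = U_{1,1} = U_{i,j}$. The definition of $\mu$ then gives $\mu(i,j-1) = \mu(i,j)-2$, and $\mu(i,j) \ge 2$ forces $j \ge 2$. The cut hypothesis supplies $(i,j-1) < (i',j')$, so Lemma~\ref{lem-mu-order} yields $(i,\mu(i,j-1)) = (i,\mu(i,j)-2) <' (i',\mu(i',j'))$, with the edge case $(i,j-1) = (2,1)$ absorbed by the convention $\mu(2,1) = 0$ already used in Lemma~\ref{lem-mu-order}.

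For $(i',\mu(i',j')-1) <' (i,\mu(i,j)-1)$, I would apply Lemma~\ref{lem-mu-minus-1-order} with the roles of $(i,j)$ and $(i',j')$ exchanged; its hypothesis becomes exactly the assumption on $(i',\mu(i',j')-1)$ that is given to us, and $(i',j'-1) < (i,j)$ is part of the cut. This yields $(i',\mu(i',j')-1) <' (i,\mu(i,j))$. Combined with the assumed $(i,\mu(i,j)-1) \lessdot' (i,\mu(i,j))$, the definition of $\lessdot'$ forces $(i',\mu(i',j')-1) \lesssim' (i,\mu(i,j)-1)$, so the only remaining task is to rule out the equivalence case.

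The main obstacle is this equivalence-exclusion step, and staggeredness is what resolves it. The key observation is that $(i,\mu(i,j)-1) \in B'^+$ is strictly interior on its side: $\mu(i,j)-1 \ge 1$ follows from $\mu(i,j) \ge 2$, and $\mu(i,j)-1 < |U'_i|$ follows because by Lemma~\ref{lem-nu-plus-2} the boundary $(i,\mu(i,j))$ lies in $B'$. Since $(U'_1,U'_2,<')$ is staggered, cross-side equivalences occur only at the extreme pairs $(0,0)$ and $(|U'_1|,|U'_2|)$, so a strictly interior boundary on one side cannot be equivalent to any boundary on the opposite side. Hence $(i',\mu(i',j')-1) \not\approx' (i,\mu(i,j)-1)$, and the desired strict inequality follows.
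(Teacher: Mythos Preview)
Your proof is correct and follows essentially the same approach as the paper's. The paper's proof is terser: it derives $(i,\mu(i,j)-2) <' (i',\mu(i',j'))$ and $(i',\mu(i',j')-1) <' (i,\mu(i,j))$ exactly as you do, and then simply asserts that the $\lessdot'$ hypothesis ``implies'' the strict inequality $(i',\mu(i',j')-1) <' (i,\mu(i,j)-1)$, leaving the exclusion of the equivalence case (via staggeredness) implicit, whereas you spell it out explicitly.
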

\begin{proof}
    Suppose $(i,j)$ cuts $(i',j')$ with respect to $(U_1,U_2,<)$. Then, $(i,j - 1) < (i',j')$. We have $j \neq 1$, so by Lemma~\ref{lem-mu-order}, $(i,\mu(i,j - 1)) <' (i',\mu(i',j'))$. Since $(i,\mu(i,j)-1) \in B'^+$, we have $\mu(i,j - 1) = \mu(i,j)-2$, so $(i,\mu(i,j)-2) <' (i',\mu(i',j'))$. By Lemma~\ref{lem-mu-minus-1-order}, we have $(i',\mu(i',j') - 1) <' (i,\mu(i,j))$. Together with the assumption that $(i,\mu(i,j)-1) \lessdot' (i,\mu(i,j))$, this implies $(i',\mu(i',j') - 1) <' (i,\mu(i,j)-1)$. Therefore, $(i,\mu(i,j)-1)$ cuts $(i',\mu(i',j'))$ with respect to $(U'_1,U'_2,<')$.
\end{proof}

\begin{lemma} \label{lem-mu-minus-1-mirror}
    Suppose $(U'_1,U'_2,<')$ is a staggered extended word equation resulting from applying a case II coherent extended Nielsen transformation to $(U_1,U_2,<)$. Let $(i,j),(i',j') \in B^-$, and assume that $(i',\mu(i',j')-1) \in B'^+$. Then, if $(i,j)$ mirrors $(2,1)$ with respect to $(U_1,U_2,<)$, we have $(i,\mu(i,j))$ mirrors $(i',\mu(i',j')-1)$ with respect to $(U'_1,U'_2,<')$.
\end{lemma}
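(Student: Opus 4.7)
The plan is to verify separately the two defining conditions for $(i,\mu(i,j))$ to mirror $(i',\mu(i',j')-1)$ in $(U'_1,U'_2,<')$: equality of the variable labels, and non-equivalence of the two boundaries under $\approx'$. Since $(U'_1,U'_2,<')$ is coherent, the second condition alone (rather than the disjunction in the general definition of ``mirrors'') suffices.

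For the label condition, the mirroring hypothesis $(i,j) \ \text{mirrors} \ (2,1)$ in $(U_1,U_2,<)$ immediately gives $U_{i,j}=U_{2,1}$. Since $(i,j)\in B^-$, Lemma~\ref{lem-mu-preserve} yields $U'_{i,\mu(i,j)}=U_{i,j}=U_{2,1}$. On the other hand, the hypothesis $(i',\mu(i',j')-1)\in B'^+$ is precisely what is needed to apply Lemma~\ref{lem-nu-plus-2} (with its ``$j$'' taken to be $\mu(i',j')-1$), giving $U'_{i',\mu(i',j')-1}=U_{2,1}$. The two labels therefore coincide.

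For the non-equivalence, I first note that $(i,\mu(i,j))\in B'^-$ (because $(i,j)\in B^-$ is sent into $B'^-$ by $\mu$), while $(i',\mu(i',j')-1)\in B'^+$ by hypothesis; since $B'^-\cap B'^+=\emptyset$, the two boundaries are distinct tuples. I would then split into cases on whether $i=i'$. If $i=i'$, both boundaries lie in the single totally ordered set $B'_i$, so being distinct they are strictly ordered and hence not $\approx'$-equivalent. If $i\ne i'$, I invoke the staggered hypothesis on $(U'_1,U'_2,<')$: the only cross-side $\approx'$-equivalences occur at the trivial positions $(0,0)$ and $(m',n')$, where $m',n'$ are the lengths of $U'_1,U'_2$. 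Neither of our indices is $0$, which rules out the first case. The second case is excluded because $(1,m')=(1,\mu(1,m))$ and $(2,n')=(2,\mu(2,n))$ both lie in $B'^-$, whereas $(i',\mu(i',j')-1)\in B'^+$.

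The main obstacle is the $(m',n')$ exclusion in the $i\ne i'$ case. It rests on the observation that $\mu$ sends the maximal boundaries $(1,m),(2,n)$ of $B$ onto the maximal boundaries of $B'$ on each side, so a $B'^+$ boundary is never a maximal endpoint. This is an arithmetic verification from the definition of $\mu$ (the image $\mu(i,j)$ equals $j$ plus the number of $U_{1,1}$ occurrences in $U_{i,1}\cdots U_{i,j}$, minus $1$, which at $j=m$ or $j=n$ recovers the full length of $U'_1$ or $U'_2$), and after it is in place the rest of the proof is bookkeeping.
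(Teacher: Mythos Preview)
Your proof is correct and follows the same route as the paper: establish the label equality via Lemma~\ref{lem-mu-preserve} and Lemma~\ref{lem-nu-plus-2}, then use staggeredness for the non-equivalence. You are simply more explicit than the paper about the non-equivalence step---the paper compresses your case split on $i=i'$ versus $i\ne i'$ and your endpoint verification into the single sentence ``Since $(U'_1,U'_2,<')$ is staggered, $(i,\mu(i, j)) \not\approx' (i',\mu(i',j')-1)$.''
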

\begin{proof}
    Suppose $(i,j)$ mirrors $(2,1)$ with respect to $(U_1,U_2,<)$. Then, $U_{i,j} = U_{2,1}$. By Lemmas~\ref{lem-mu-preserve} and \ref{lem-nu-plus-2},
    \[
        U'_{i,\mu(i, j)} = U_{i,j} = U_{2,1} = U'_{i',\mu(i',j')-1}.
    \]
    Since $(U'_1,U'_2,<')$ is staggered, $(i,\mu(i, j)) \not\approx' (i',\mu(i',j')-1)$. Therefore, $(i,\mu(i, j))$ mirrors $(i',\mu(i',j')-1)$ with respect to $(U'_1,U'_2,<')$.
\end{proof}

\begin{lemma} \label{lem-minimum}
    Let $(U_1,U_2,<)$ be a hereditarily staggered extended word equation. Suppose $G_{U_1,U_2,<}$ contains a cycle
    \[
        C = ((i_1,j_1), (i_2,j_2), \dots, (i_{n-1},j_{n-1}), (i_1,j_1))
    \]
    that is pointed with respect to some auxiliary set. If $(i_1,j_1)$ is the minimum boundary of $B$, then it is possible to apply a coherent extended Nielsen transformation to $(U_1,U_2,<)$ that yields a staggered extended word equation $(U'_1,U'_2,<')$ such that $G_{U'_1,U'_2,<'}$ is cyclic.
\end{lemma}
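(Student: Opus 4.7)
The approach mirrors the proof of Lemma~\ref{lem-not-minimum}, adapted to the key difficulty that $(i_1,j_1) = (2,1)$ (WLOG in case II) does not lie in $B^-$ and so admits no direct $\mu$-translate. Because $(2,1)$ is the minimum boundary of $B$ in case II and $(U_1,U_2,<)$ is staggered, pointedness forces every other element of the auxiliary set $S$ to lie in $B^-$. A direct check of the cut relation shows that $(2,1)$ can cut only $(1,1)$, so $(i'_1,j'_1) = (1,1)$, and then the mirror $(1,1) \text{ mirrors } (i_2,j_2)$ forces $U_{i_2,j_2} = U_{1,1}$. Consequently $(i_2,\mu(i_2,j_2)-1) \in B'^+$ is a newly inserted instance of $U_{2,1}$ placed just before the $U_{1,1}$ at $(i_2,\mu(i_2,j_2))$, and this boundary will play the role of $(2,1)$ in the new cycle.

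WLOG assume $(1,1) > (2,1)$. By Lemma~\ref{lem-fallback} we may assume every extended Nielsen transformation yielding a staggered extended word equation is coherent. Let $(U'_1,U'_2,<')$ be the staggered case II transformation with $(i,j-1) \lessdot' (i,j)$ for each $(i,j) \in B'^+$, exactly as in Lemma~\ref{lem-not-minimum}. Put $u_1 = (i_2,\mu(i_2,j_2)-1)$ and $u_k = (i_k,\mu(i_k,j_k))$ for $3 \le k \le n-1$. The claim is that
\[
    (u_1,\, u_3,\, u_4,\, \ldots,\, u_{n-1},\, u_1)
\]
is a cycle in $G_{U'_1,U'_2,<'}$, collapsing to the self-loop $(u_1,u_1)$ when $n=3$. (Note $n \ge 3$: $(2,1)$ admits no self-loop because $(2,1)$ cuts only $(1,1)$ but $U_{1,1} \neq U_{2,1}$ in case II rules out $(1,1) \text{ mirrors } (2,1)$.)

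Each edge is realized via the natural intermediate drawn from the old auxiliary set. The edge $u_1 \to u_3$ (or the self-loop $u_1 \to u_1$ when $n=3$) uses the intermediate $(i'_2,\mu(i'_2,j'_2))$, with the cut supplied by Lemma~\ref{lem-mu-minus-1-cut} and the mirror supplied by Lemma~\ref{lem-mu-mirror} when $n \ge 4$ or by Lemma~\ref{lem-mu-minus-1-mirror} when $n=3$ (in which case $(i'_2,j'_2)$ mirrors $(2,1)$ in the old cycle). Edges $u_k \to u_{k+1}$ for $3 \le k \le n-2$ use the intermediate $(i'_k,\mu(i'_k,j'_k))$, with the cut obtained from Lemma~\ref{lem-mu-minus-1-order} applied to both inequalities defining the old cut and the mirror from Lemma~\ref{lem-mu-mirror}. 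The closing edge $u_{n-1} \to u_1$ uses $(i'_{n-1},\mu(i'_{n-1},j'_{n-1}))$: the cut is as before, while the mirror is given by Lemma~\ref{lem-mu-minus-1-mirror}, since $(i'_{n-1},j'_{n-1})$ mirrors $(2,1)$ in the old cycle.

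The main technical obstacle is verifying the hypothesis $(i_2,\mu(i_2,j_2)-1) \lessdot' (i_2,\mu(i_2,j_2))$ required by Lemma~\ref{lem-mu-minus-1-cut}. This is a \emph{successor} adjacency, distinct from the \emph{predecessor} adjacency $(i,j-1) \lessdot' (i,j)$ built into our choice of $<'$, which only pins down the placement of each $B'^+$ boundary relative to what comes before it. Ensuring this successor adjacency may require refining the interleaving, or — if no staggered coherent interleaving satisfies it directly — invoking Lemma~\ref{lem-fallback} once more to pivot to a different coherent transformation whose cut graph still inherits a cycle from the construction above.
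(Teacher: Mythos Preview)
Your overall shape is right---replace the lost boundary $(2,1)$ by the newly inserted $B'^+$ boundary $(i_2,\mu(i_2,j_2)-1)$ and run the rest of the auxiliary set through $\mu$---and this is exactly the cycle the paper builds. But the proof is incomplete in two essential places, and your final paragraph acknowledges the first without resolving it.

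First, your choice of $<'$ is the wrong one. Placing every $B'^+$ boundary immediately after its predecessor gives $(i_2,\mu(i_2,j_2)-2) \lessdot' (i_2,\mu(i_2,j_2)-1)$, not the successor adjacency $(i_2,\mu(i_2,j_2)-1) \lessdot' (i_2,\mu(i_2,j_2))$ that Lemma~\ref{lem-mu-minus-1-cut} demands. The paper fixes this by choosing a \emph{different} interleaving: it places $(i_2,\mu(i_2,j_2)-1)$ just \emph{before} $(i_2,\mu(i_2,j_2))$ (successor adjacency), while all other $B'^+$ boundaries keep predecessor adjacency. Your fallback suggestion of invoking Lemma~\ref{lem-fallback} does not help here: that lemma fires only when some staggered transformation is \emph{incoherent}, whereas your obstruction is structural (a failed cut), not an incoherence.

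Second, once you place $(i_2,\mu(i_2,j_2)-1)$ just before $(i_2,\mu(i_2,j_2))$, you have changed where that boundary sits, and this can break other cuts in the cycle. Specifically, if some $(i'_k,j'_k)=(i_2,j_2)$, then the cut ``$(i_k,\mu(i_k,j_k))$ cuts $(i_2,\mu(i_2,j_2))$'' now needs $(i_2,\mu(i_2,j_2)-1) <' (i_k,\mu(i_k,j_k))$, which your construction does not guarantee. And if some $(i'_k,j'_k)=(2,1)$---which pointedness does \emph{not} exclude, since $S$ is a set and repetitions in the sequence are invisible to it---then $(i'_k,j'_k)\notin B^-$ and $\mu(i'_k,j'_k)$ is not even available. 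The paper handles these by first reducing to a minimum-length cycle (forcing $U_{i'_k,j'_k}=U_{1,1}$ to imply $(i'_k,j'_k)\in\{(1,1),(i_2,j_2)\}$) and then splitting into three cases: $(i'_k,j'_k)\neq(i_2,j_2)$ always; $(i'_k,j'_k)=(2,1)$ for some $k$ (the cycle shortens by one more step); and $(i'_k,j'_k)=(i_2,j_2)$ for some $k$ (a yet different placement of $(i_2,\mu(i_2,j_2)-1)$ is required, using a preliminary normalization ensuring $(i'_2,j'_2)\lesssim(i_k,j_k)$). None of this case analysis appears in your argument.
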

\begin{proof}
    By Lemma~\ref{lem-fallback}, we may assume that every extended Nielsen transformation applied to $(U_1,U_2,<)$ yielding a staggered extended word equation is in fact a coherent extended Nielsen transformation.

    The cycle $C$ is pointed with respect to some auxiliary set
    \[
        S = \left\{(i_1,j_1), (i'_1,j'_1), (i_2,j_2), (i'_2,j'_2), \dots, (i_{n-1},j_{n-1}), (i'_{n-1},j'_{n-1})\right\},
    \]
    where for each $k \in [n-1]$, we have $(i_k,j_k)$ cuts $(i'_k,j'_k)$ and $(i'_k,j'_k)$ mirrors $(i_{k+1},j_{k+1})$ (where $(i_n,j_n) = (i_1,j_1)$). Without loss of generality, assume that $(1,1) > (2,1)$. Then, $(i_1,j_1) = (2,1)$ and $(i'_1,j'_1) = (1,1)$. Since $(i'_1,j'_1)$ mirrors $(i_2,j_2)$, we have $U_{i_2,j_2} = U_{1,1}$. Thus, in any case II extended Nielsen transformation of $(U_1,U_2,<)$, we have $(i_2,\mu(i_2,j_2)-1) \in B'^+$.

    We may assume that $C$ is a cycle of minimum length in $G_{U_1,U_2,<}$. Indeed, by Lemma~\ref{lem-pointed}, $G_{U_1,U_2,<}$ has a cycle of minimum length that is pointed with respect to some auxiliary set; this cycle either satisfies the conditions of Lemma~\ref{lem-not-minimum}, in which case we are done, or it satisfies the conditions of the current lemma. From this assumption, it follows that $(i_k,j_k) \neq (i_\ell,j_\ell)$ and $(i'_k,j'_k) \neq (i'_\ell,j'_\ell)$ for all $k,\ell \in [n-1]$ such that $k \neq \ell$.

    It also follows that if $U_{i'_k,j'_k} = U_{1,1}$ for some $k \in [n-1]$, then $(i'_k,j'_k) \in \{(1,1), (i_2,j_2)\}$. Indeed, if $U_{i'_k,j'_k} = U_{1,1}$ for some $(i'_k,j'_k) \notin \{(1,1), (i_2,j_2)\}$, then we have the shorter cycle
    \[
        ((i_2,j_2), (i_3,j_3), \dots, (i_k,j_k), (i_2,j_2)).
    \]

    We may also assume that if $(i'_k,j'_k) = (i_2,j_2)$ for some $k \in [n-1]$, then $(i'_2,j'_2) \lesssim (i_k, j_k)$. Indeed, if $(i_k, j_k) < (i'_2,j'_2)$, then we may instead consider the cycle
    \[
        \left((i_1,j_1), (i_2,j_2), (i'_{k-1},j'_{k-1}), (i'_{k-2},j'_{k-2}), \dots, (i'_2,j'_2), (i_{k+1},j_{k+1}), \dots, (i_{n-1},j_{n-1}), (i_1,j_1)\right),
    \]
    in which the roles of $(i_k, j_k)$ and $(i'_2,j'_2)$ are reversed.

    At this point in the proof, we split into cases.

    \textbf{Case 1:} $(i'_k,j'_k) \neq (i_2,j_2)$ for all $k \in [n-1]$. Let $(U'_1,U'_2,<')$ be the unique staggered extended word equation resulting from applying a case II extended Nielsen transformation to $(U_1,U_2,<)$ such that $(i_2,\mu(i_2,j_2)-1) \lessdot' (i_2,\mu(i_2,j_2))$ and $(i,j-1) \lessdot' (i,j)$ for each $(i,j) \in B'^+ \setminus \{(i_2,\mu(i_2,j_2)-1)\}$.

    We claim $(i'_k,j'_k) \neq (2,1)$ for all $k \in [n-1]$. Indeed, if $(i'_k,j'_k) = (2,1)$, then $(i_k,j_k) = (1,1)$. It follows that $(i'_{k-1}, j'_{k-1}) \neq (1,1)$ and $U_{i'_{k-1}, j'_{k-1}} = U_{1,1}$. Hence, $(i'_{k-1}, j'_{k-1}) = (i_2,j_2)$, a contradiction. It follows from this claim that $(i_k,\mu(i_k,j_k)), (i'_k,\mu(i'_k,j'_k)) \in B'$ for each $k \in [2,n-1]$.

    We claim that
    \[
        C' = ((i_2,\mu(i_2,j_2)-1), (i_3,\mu(i_3,j_3)), \dots, (i_{n-1},\mu(i_{n-1},j_{n-1})), (i_2,\mu(i_2,j_2)-1))
    \]
    is a cycle in $G_{U'_1,U'_2,<'}$.

    We have $(i_2,\mu(i_2,j_2)-1)$ cuts $(i'_2,\mu(i'_2,j'_2))$ with respect to $(U'_1,U'_2,<')$ by Lemma~\ref{lem-mu-minus-1-cut}. Also, $(i'_{n-1},\mu(i'_{n-1}, j'_{n-1}))$ mirrors $(i_2,\mu(i_2,j_2)-1)$ with respect to $(U'_1,U'_2,<')$ by Lemma~\ref{lem-mu-minus-1-mirror}. For each $k \in [3,n-1]$, we have $(i_k,\mu(i_k,j_k))$ cuts $(i'_k,\mu(i'_k,j'_k))$ with respect to $(U'_1,U'_2,<')$ by Lemma~\ref{lem-mu-minus-1-order}. For each $k \in [2,n-2]$, we have $(i'_k,\mu(i'_k,j'_k))$ mirrors $(i_{k+1},\mu(i_{k+1},j_{k+1}))$ with respect to $(U'_1,U'_2,<')$ by Lemma~\ref{lem-mu-mirror}. Therefore, $C'$ is a cycle in $G_{U'_1,U'_2,<'}$.

    \textbf{Case 2:} $(i'_k,j'_k) = (2,1)$ for some $k \in [n-1]$. Let $(U'_1,U'_2,<')$ be as in Case 1.

    Let $k \in [n-1]$ be such that $(i'_k,j'_k) = (2,1)$. Then, $(i_k,j_k) = (1,1)$, so $(i'_{k-1}, j'_{k-1}) \neq (1,1)$ and $U_{i'_{k-1}, j'_{k-1}} = U_{1,1}$. Hence, $(i'_{k-1}, j'_{k-1}) = (i_2,j_2)$.

    We claim that
    \begin{align*}
        C' = &((i_2,\mu(i_2,j_2)-1), (i_3,\mu(i_3,j_3)), \dots, (i_{k-1},\mu(i_{k-1},j_{k-1})), \\
        &(i_{k+1},\mu(i_{k+1},j_{k+1})), \dots, (i_{n-1},\mu(i_{n-1},j_{n-1})), (i_2,\mu(i_2,j_2)-1))
    \end{align*}
    is a cycle in $G_{U'_1,U'_2,<'}$.

    By Lemma~\ref{lem-mu-minus-1-cut}, we have $(i_2,\mu(i_2,j_2)-1)$ cuts $(i'_2,\mu(i'_2,j'_2))$ and $(i_{k-1},\mu(i_{k-1},j_{k-1}))$ cuts $(i_2,\mu(i_2,j_2)-1)$, both with respect to $(U'_1,U'_2,<')$. By Lemma~\ref{lem-mu-minus-1-mirror}, $(i_2,\mu(i_2,j_2)-1)$ mirrors $(i_{k+1},\mu(i_{k+1},j_{k+1}))$ and $(i'_{n-1},\mu(i'_{n-1}, j'_{n-1}))$ mirrors $(i_2,\mu(i_2,j_2)-1)$, both with respect to $(U'_1,U'_2,<')$. For each $\ell \in [3,k-2] \cup [k+1,n-1]$, we have $(i_\ell,\mu(i_\ell,j_\ell))$ cuts $(i'_\ell,\mu(i'_\ell,j'_\ell))$ with respect to $(U'_1,U'_2,<')$ by Lemma~\ref{lem-mu-minus-1-order}. For each $\ell \in [2,k-2] \cup [k+1,n-2]$, we have $(i'_\ell,\mu(i'_\ell,j'_\ell))$ mirrors $(i_{\ell+1},\mu(i_{\ell+1},j_{\ell+1}))$ with respect to $(U'_1,U'_2,<')$ by Lemma~\ref{lem-mu-mirror}. Therefore, $C'$ is a cycle in $G_{U'_1,U'_2,<'}$.

    \textbf{Case 3:} $(i'_k,j'_k) = (i_2,j_2)$ for some $k \in [n-1]$ and $(i'_k,j'_k) \neq (2,1)$ for all $k \in [n-1]$. We first show that if $(U'_1,U'_2,<')$ is a case II extended Nielsen transformation of $(U_1,U_2,<)$, then $(i_2,\mu(i_2,j_2)-2) <' (i'_2,\mu(i'_2,j'_2))$ and $(i'_2,\mu(i'_2,j'_2) - 1) <' (i_2,\mu(i_2,j_2))$. Since $(i_2,j_2)$ cuts $(i'_2,j'_2)$ with respect to $(U_1,U_2,<)$, we have $(i_2,j_2 - 1) < (i'_2,j'_2)$ and $(i'_2,j'_2 - 1) < (i_2,j_2)$. We have $j_2,j'_2 \neq 1$ (we have $(i'_2,j'_2) \neq (1,1)$, since $C$ is of minimum length), so by Lemma~\ref{lem-mu-order}, $(i_2,\mu(i_2,j_2 - 1)) <' (i'_2,\mu(i'_2,j'_2))$ and $(i'_2,\mu(i'_2,j'_2 - 1)) <' (i_2,\mu(i_2,j_2))$. Since $(i_2,\mu(i_2,j_2)-1) \in B'^+$, we have $\mu(i_2,j_2 - 1) = \mu(i_2,j_2)-2$. We have $U_{i'_2,j'_2} \neq U_{1,1}$, so $\mu(i'_2,j'_2 - 1) = \mu(i'_2,j'_2) - 1$. Therefore, $(i_2,\mu(i_2,j_2)-2) <' (i'_2,\mu(i'_2,j'_2))$ and $(i'_2,\mu(i'_2,j'_2) - 1) <' (i_2,\mu(i_2,j_2))$.

    In light of the previous paragraph, we can let $(U'_1,U'_2,<')$ be the unique staggered extended word equation resulting from applying a case II extended Nielsen transformation to $(U_1,U_2,<)$ such that $(i'_2,\mu(i'_2,j'_2)-1) <' (i_2,\mu(i_2,j_2) - 1) <' (i'_2,\mu(i'_2,j'_2))$ and $(i,j-1) \lessdot' (i,j)$ for each $(i,j) \in B'^+ \setminus \{(i_2,\mu(i_2,j_2)-1)\}$.

    We claim that
    \[
        C' = ((i_2,\mu(i_2,j_2)-1), (i_3,\mu(i_3,j_3)), \dots, (i_{n-1},\mu(i_{n-1},j_{n-1})), (i_2,\mu(i_2,j_2)-1))
    \]
    is a cycle in $G_{U'_1,U'_2,<'}$.

    First, we show that $(i_2,\mu(i_2,j_2)-1)$ cuts $(i'_2,\mu(i'_2,j'_2))$ with respect to $(U'_1,U'_2,<')$. Above, we showed that $(i_2,\mu(i_2,j_2)-2) <' (i'_2,\mu(i'_2,j'_2))$, and we have $(i'_2,\mu(i'_2,j'_2)-1) <' (i_2,\mu(i_2,j_2) - 1)$ by construction. Therefore, $(i_2,\mu(i_2,j_2)-1)$ cuts $(i'_2,\mu(i'_2,j'_2))$ with respect to $(U'_1,U'_2,<')$.

    Let $k \in [n-1]$ be such that $(i'_k,j'_k) = (i_2,j_2)$. We now show that $(i_k,\mu(i_k,j_k))$ cuts $(i'_k,\mu(i'_k,j'_k))$ with respect to $(U'_1,U'_2,<')$. By Lemma~\ref{lem-mu-minus-1-order}, we have $(i_k,\mu(i_k,j_k) - 1) <' (i'_k,\mu(i'_k,j'_k))$. We have $(i'_2,j'_2) \lesssim (i_k, j_k)$, so by Lemma~\ref{lem-mu-order}, $(i'_2,\mu(i'_2,j'_2)) \lesssim' (i_k, \mu(i_k,j_k))$. Then,
    \[
        (i'_k,\mu(i'_k,j'_k)-1) = (i_2,\mu(i_2,j_2) - 1) <' (i'_2,\mu(i'_2,j'_2)) \lesssim' (i_k, \mu(i_k,j_k))
    \]
    by construction. Therefore, $(i_k,\mu(i_k,j_k))$ cuts $(i'_k,\mu(i'_k,j'_k))$ with respect to $(U'_1,U'_2,<')$.
    
    We have $(i'_{n-1},\mu(i'_{n-1}, j'_{n-1}))$ mirrors $(i_2,\mu(i_2,j_2)-1)$ with respect to $(U'_1,U'_2,<')$ by Lemma~\ref{lem-mu-minus-1-mirror}. For each $\ell \in [3,k-1] \cup [k+1,n-1]$, we have $(i_\ell,\mu(i_\ell,j_\ell))$ cuts $(i'_\ell,\mu(i'_\ell,j'_\ell))$ with respect to $(U'_1,U'_2,<')$ by Lemma~\ref{lem-mu-minus-1-order}. For each $\ell \in [2,n-2]$, we have $(i'_\ell,\mu(i'_\ell,j'_\ell))$ mirrors $(i_{\ell+1},\mu(i_{\ell+1},j_{\ell+1}))$ with respect to $(U'_1,U'_2,<')$ by Lemma~\ref{lem-mu-mirror}. Therefore, $C'$ is a cycle in $G_{U'_1,U'_2,<'}$.
\end{proof}

We can now prove Theorem~\ref{thm-necessary}.

\begin{proof}[Proof of Theorem~\ref{thm-necessary}]
    Let $(U_1,U_2,<)$ be a hereditarily staggered extended word equation such that $G_{U_1,U_2,<}$ is cyclic. By Lemma~\ref{lem-pointed}, $G_{U_1,U_2,<}$ contains a cycle that is pointed with respect to some auxiliary set. By Lemmas~\ref{lem-not-minimum} and \ref{lem-minimum}, it is possible to apply a coherent extended Nielsen transformation to $(U_1,U_2,<)$ that yields a staggered extended word equation $(U'_1,U'_2,<')$ such that $G_{U'_1,U'_2,<'}$ is cyclic. Since $(U_1,U_2,<)$ is hereditarily staggered, so is $(U'_1,U'_2,<')$. This can be iterated \emph{ad infinitum}, which proves that $(U_1,U_2,<)$ is non-terminating.
\end{proof}

\section{Miscellaneous propositions} \label{sec-misc}

This section contains miscellaneous propositions about extended word equations.

\subsection{Applying a coherent extended Nielsen transformation}

\begin{proposition} \label{prop-coherent-ext}
    If $(U_1, U_2, <)$ is a nontrivial coherent extended word equation, there is some coherent extended Nielsen transformation that can be applied to it.
\end{proposition}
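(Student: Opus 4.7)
My plan is to split according to which case of the extended Nielsen transformation applies, i.e., whether $(1,1) \approx (2,1)$, $(1,1) > (2,1)$, or $(1,1) < (2,1)$ in $<$. Case III reduces to case II by applying duality. Let $L$ be a length function witnessing coherence of $(U_1, U_2, <)$; in each case I will construct a witness $L'$ to coherence of the resulting extended word equation $(U'_1, U'_2, <')$ for an appropriate $<'$.

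In case I the boundary order $<'$ is uniquely determined by the definition of the transformation. Here I take $L' = L$. Since $(1,1) \approx (2,1)$ and $<$ is coherent, $L(U_{1,1}) = L(U_{2,1})$, so the endomorphism $T$ underlying the case I transformation satisfies $L(T(W)) = L(W)$ for every $W \in \mathcal{X}^*$. Hence the prefix $U'_{i,1} \cdots U'_{i,j}$ of $U'_i = T(U_i)^+$ has $L$-length $L(U_{i,1} \cdots U_{i,j+1}) - L(U_{i,1})$, where $L(U_{1,1}) = L(U_{2,1})$ makes the subtracted constant independent of $i$. The defining property $(i,j) <' (i',j') \Longleftrightarrow (i, j+1) < (i', j'+1)$ then pulls back to a length comparison that agrees with coherence of $<$.

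In case II, coherence of $<$ gives $L(U_{1,1}) > L(U_{2,1})$, so I set $L'(U_{1,1}) := L(U_{1,1}) - L(U_{2,1}) > 0$ and $L'(X) := L(X)$ for all other $X \in \mathcal{X}$. A short computation shows that $L'(T(W)) = L(W)$ for every $W \in \mathcal{X}^*$, since the endomorphism $T$ (with $T(U_{1,1}) = U_{2,1} U_{1,1}$) replaces each $U_{1,1}$ by a word of $L'$-length exactly $L(U_{1,1})$. Combined with Lemma~\ref{lem-mu-preserve}, this yields the prefix-length formula $L'(U'_{i,1} \cdots U'_{i,\mu(i,k)}) = L(U_{i,1} \cdots U_{i,k}) - L(U_{2,1})$ for every $(i,k) \in B^-$, together with a companion identity $L'(U'_{i,1} \cdots U'_{i, \mu(i,k)-1}) = L(U_{i,1} \cdots U_{i,k-1})$ at the new boundaries $(i, \mu(i,k) - 1) \in B'^+$. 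I then define $<'$ by comparing $L'$-lengths of prefixes in $U'_1$ and $U'_2$. Coherence of $(U'_1, U'_2, <')$ holds by construction; the interleaving endpoint condition $(1, |U'_1|) \approx' (2, |U'_2|)$ follows from $L(U_1) = L(U_2)$; and the required agreement of $<'$ with the $\nu$-induced order on $B'^-$ reduces to the prefix-length formula plus coherence of $(U_1, U_2, <)$.

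The main obstacle is the bookkeeping in case II: verifying the two prefix-length formulas and confirming that the $<'$ induced from $L'$ does in fact satisfy the constraint on $B'^-$ required by the definition of an extended Nielsen transformation. Case I is a direct index shift, and case III drops out of case II by duality.
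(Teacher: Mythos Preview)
Your proposal is correct and follows essentially the same route as the paper: fix a coherence witness $L$, handle case~I with $L' = L$ and an index shift, and in case~II set $L'(U_{1,1}) = L(U_{1,1}) - L(U_{2,1})$, define $<'$ via $L'$-prefix lengths, and verify the $\nu$-constraint on $B'^-$ from the prefix-length identity. The only cosmetic difference is that the paper parameterizes the case~II computation by $(i,j) \in B'^-$ and invokes Lemma~\ref{lem-left-inverse} to recover $k = \nu(i,j)$, whereas you parameterize directly by $(i,k) \in B^-$ and cite Lemma~\ref{lem-mu-preserve}; the identity $U'_{i,1}\cdots U'_{i,\mu(i,k)} = T(U_{i,1}\cdots U_{i,k})^+$ you need really comes straight from the definitions of $T$ and $\mu$ rather than from Lemma~\ref{lem-mu-preserve} per se, but this is a matter of citation rather than substance.
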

\begin{proof}
    Let $L : \mathcal{X}^* \to (\mathbb{N}, +)$ be a homomorphism such that $L(X) > 0$ for all $X \in \mathcal{X}$ and
    \[
        (i,j) < (i',j') \Longleftrightarrow L(U_{i,1} U_{i,2} \cdots U_{i,j}) < L(U_{i',1} U_{i',2} \cdots U_{i',j'})
    \]
    for all $(i,j), (i',j') \in B$, where $B$ is the set of boundaries of $(U_1,U_2)$.
    
    If $(1,1) \approx (2,1)$, then let $(U'_1, U'_2, <')$ be the case I extended Nielsen transformation to $(U_1, U_2, <)$. In this case, the same homomorphism $L$ demonstrates that $(U'_1, U'_2, <')$ is coherent.
    
    If $(1,1) \not\approx (2,1)$, we may assume without loss of generality that $(1,1) > (2,1)$, since cases II and III are symmetric. Let $L' : \mathcal{X}^* \to (\mathbb{N}, +)$ be the homomorphism given by $L'(U_{1,1}) = L(U_{1,1}) - L(U_{2,1})$ and $L'(X) = L(X)$ for all $X \in \mathcal{X} \setminus \{U_{1,1}\}$. Then, let $(U'_1, U'_2)$ be the case II Nielsen transformation of $(U_1, U_2)$, and let $<'$ be the boundary order given by
    \[
        (i,j) <' (i',j') \Longleftrightarrow L'(U'_{i,1} U'_{i,2} \cdots U'_{i,j}) < L'(U'_{i',1} U'_{i',2} \cdots U'_{i',j'})
    \]
    for all $(i,j), (i',j') \in B'$, where $B'$ is the set of boundaries of $(U'_1,U'_2)$. Let $\mu$, $\nu$, and $B'^-$ be defined as in Section~\ref{sec-ext-nielsen}. Given $(i,j) \in B'^-$, let $k$ be such that $\mu(i,k) = j$. Then, we have
    \begin{align*}
        L'(U'_{i,1} U'_{i,2} \cdots U'_{i,j}) &= L'(U'_{i,1} U'_{i,2} \cdots U'_{i,\mu(i,k)}) \\
        &= L'(T(U_{i,1} U_{i,2} \cdots U_{i,k})^+) \\
        &= L'(T(U_{i,1} U_{i,2} \cdots U_{i,\nu(i,\mu(i,k))})^+) \quad\text{by Lemma~\ref{lem-left-inverse}} \\
        &= L'(T(U_{i,1} U_{i,2} \cdots U_{i,\nu(i,j)})^+) \\
        &= L'(T(U_{i,1} U_{i,2} \cdots U_{i,\nu(i,j)})) - L'(U_{2,1}) \\
        &= L(U_{i,1} U_{i,2} \cdots U_{i,\nu(i,j)}) - L'(U_{2,1}),
    \end{align*}
    where $T$ is the endomorphism on $\mathcal{X}^*$ given by $T(U_{1,1}) = U_{2,1} U_{1,1}$ and the identity function on other elements of $\mathcal{X}$. Hence, for any $(i,j), (i',j') \in B'^-$, we have
    \begin{align*}
        (i,j) <' (i',j') &\Longleftrightarrow L'(U'_{i,1} U'_{i,2} \cdots U'_{i,j}) < L'(U'_{i',1} U'_{i',2} \cdots U'_{i',j'}) \\
        &\Longleftrightarrow L(U_{i,1} U_{i,2} \cdots U_{i,\nu(i,j)}) < L(U_{i',1} U_{i',2} \cdots U_{i',\nu(i',j')}) \\
        &\Longleftrightarrow (i,\nu(i,j)) < (i',\nu(i',j')).
    \end{align*}
    Therefore, $(U'_1, U'_2, <')$ is a case II coherent extended Nielsen transformation of $(U_1, U_2, <)$.
\end{proof}

\subsection{Preserving acyclicity}

\begin{proposition}
    Suppose $(U'_1,U'_2,<')$ is the result of applying a coherent extended Nielsen transformation to $(U_1,U_2,<)$. If $G_{U_1,U_2,<}$ is acyclic, then so is $G_{U'_1,U'_2,<'}$.
\end{proposition}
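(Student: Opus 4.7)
The proposition is essentially a case analysis on which of the three cases of extended Nielsen transformation was applied, and the heavy lifting is already done by earlier lemmas. My plan is as follows.

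First, I would observe that if $(U'_1,U'_2,<')$ arises from a case I transformation, then Lemma~\ref{lem-walk-case1} gives exactly the conclusion. If it arises from a case II transformation, then Lemma~\ref{lem-walk-case2} gives the conclusion. Thus the only new work concerns case III. By the definition of extended Nielsen transformations, a case III transformation of $(U_1,U_2,<)$ is obtained by dualizing, applying a case II transformation, and dualizing again. So case III reduces to case II modulo the fact that duality preserves acyclicity of the cut graph.

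The key auxiliary fact to prove is therefore: if $(V_1,V_2,\prec)$ is a coherent extended word equation and $(V_2,V_1,\prec^d)$ is its dual, then $G_{V_1,V_2,\prec}$ is acyclic if and only if $G_{V_2,V_1,\prec^d}$ is acyclic. The map $\sigma(i,j) = (3-i,j)$ is a bijection between the two boundary sets. I would check directly from the definitions that $(i,j)$ cuts $(i',j')$ with respect to $\prec$ if and only if $\sigma(i,j)$ cuts $\sigma(i',j')$ with respect to $\prec^d$, and similarly for the mirror relation, since both relations are defined purely in terms of the boundary order, equalities of variables at corresponding positions, and the side-index $i$ only through $\{i,i'\} = \{1,2\}$. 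Consequently $\sigma$ is a graph isomorphism between $G_{V_1,V_2,\prec}$ and $G_{V_2,V_1,\prec^d}$, so acyclicity transfers.

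With this in hand, the case III argument is immediate: let $(W_1,W_2,\prec^d)$ be the dual of $(U_1,U_2,<)$, so its cut graph is acyclic by the duality fact. A case II transformation produces some $(W'_1,W'_2,\prec')$ whose cut graph is acyclic by Lemma~\ref{lem-walk-case2}. Dualizing once more yields $(U'_1,U'_2,<')$, and by duality again the cut graph remains acyclic. Putting the three cases together gives the proposition. I do not expect any real obstacle here; the only slightly delicate step is verifying that cut and mirror are genuinely symmetric under $\sigma$, which is a routine unfolding of the definitions of cut, mirror, and the dual boundary order $(i,j) \prec^d (i',j') \Longleftrightarrow (3-i,j) \prec (3-i',j')$.
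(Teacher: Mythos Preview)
Your proposal is correct and follows essentially the same approach as the paper: the paper's proof simply cites Lemmas~\ref{lem-walk-case1} and \ref{lem-walk-case2} and leaves the case~III reduction to duality implicit (as it does throughout), whereas you spell out the graph-isomorphism argument for duality explicitly. There is no substantive difference in strategy.
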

\begin{proof}
    This is immediate from Lemmas~\ref{lem-walk-case1} and \ref{lem-walk-case2}.
\end{proof}

\subsection{A sufficient condition for coherence}

In this subsection, we prove the following proposition.
\begin{proposition} \label{prop-acyclic-coherent}
    Let $(U_1,U_2,<)$ be an extended word equation. If $G_{U_1,U_2,<}$ is acyclic, then $(U_1,U_2,<)$ is coherent.
\end{proposition}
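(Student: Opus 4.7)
The plan is to prove the contrapositive: assume $(U_1, U_2, <)$ is incoherent and construct a cycle in $G_{U_1, U_2, <}$. The strategy parallels the proof of Lemma~\ref{lem-fallback}, but since we no longer have staggeredness or coherence to appeal to when establishing the mirror step, additional care is needed there.

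By Lemma~\ref{lem-disjoint-incoherent-core}, fix an incoherent core $C = \{(A_\ell, B_\ell) \mid \ell \in [N]\}$ in which no cover impinges on another; the covers are tight, from the minimization in that lemma's proof. By Lemma~\ref{lem-incoherent-char}, fix $y_\ell \in \mathbb{N}$ with $y_\ell > 0$ for all $\ell$, satisfying $\sum_\ell y_\ell U_{B_\ell} \subseteq \sum_\ell y_\ell U_{A_\ell}$ with either strict inclusion or a strict cover carrying positive weight. Let $T = \bigcup_\ell A_\ell$. I claim that every $(i_1, j_1) \in T$ has an outgoing edge in $G_{U_1, U_2, <}$ landing in $T$; since $T$ is finite, iteration then produces a cycle.

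Fix $(i_1, j_1) \in A_\ell$. Every boundary $(i_1, j_1)$ with $j_1 \ge 1$ cuts some opposite-side boundary (choose the least $j'$ with $(3 - i_1, j') \gtrsim (i_1, j_1)$; such $j'$ exists because the final boundaries are $\approx$-equivalent), and by Lemma~\ref{lem-cover-cut} this cut must lie in $B_\ell$; call it $(i'_1, j'_1)$. By the inclusion, some $(i_2, j_2) \in T$ satisfies $U_{i_2, j_2} = U_{i'_1, j'_1}$, and no-impinging forces $(i_2, j_2) \neq (i'_1, j'_1)$. If $(i_2, j_2)$ is on the same side as $(i'_1, j'_1)$ then $(i_2, j_2) \not\approx (i'_1, j'_1)$ automatically (distinct positions on one side are strictly ordered), so they mirror. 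Otherwise they still mirror unless they are \emph{same-instance}, meaning both $(i_2, j_2) \approx (i'_1, j'_1)$ and $(i_2, j_2 - 1) \approx (i'_1, j'_1 - 1)$.

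The main obstacle is to ensure $(i_2, j_2)$ can always be chosen to mirror $(i'_1, j'_1)$. For any given $(i'_1, j'_1)$, there is at most one same-instance counterpart (it would have to be the unique opposite-side boundary at the aligned position with the matching preceding equivalence class), so the obstruction can arise only when $T$ contains exactly one $U_{i'_1, j'_1}$-labeled boundary and that unique candidate is same-instance with $(i'_1, j'_1)$. I plan to rule this out by showing such a configuration would contradict the minimization of $\sum_\ell y_\ell(|A_\ell| + |B_\ell|)$ in the proof of Lemma~\ref{lem-disjoint-incoherent-core}: splitting the cover(s) containing the aligned pair at the aligned boundaries produces sub-covers which still satisfy the cover conditions (since the splitting boundaries are $\approx$-equivalent), inherit the inclusion and strictness (the aligned variable cancels in equal amounts on both sides), and preserve no-impinging, while strictly reducing the weighted total size. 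This is the step I expect to be the hardest to carry out in full detail, particularly in the cross-cover case where the aligned $A$-side and $B$-side boundaries belong to two different covers (which forces the $A$'s to be on the same side and likewise the $B$'s, so an analogous two-cover modification is needed). Once this obstruction is eliminated, the edge $(i_1, j_1) \to (i_2, j_2)$ lies entirely in $T$, and the finite-pigeonhole argument produces the required cycle in $G_{U_1, U_2, <}$.
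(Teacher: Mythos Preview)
Your approach is essentially the paper's own: prove the contrapositive, take a minimal incoherent core with no impinging covers, and walk along cut--mirror edges inside $\bigcup_\ell A_\ell$ until a cycle forms. You have also correctly identified the one genuine obstacle (what you call the ``same-instance'' case), and your proposed fix---split the offending cover at the aligned pair and contradict minimality---is exactly right. The paper packages this step as a separate lemma (Lemma~\ref{lem-twin-free-incoherent-core}), which strengthens Lemma~\ref{lem-disjoint-incoherent-core} to produce an incoherent core in which no cover contains a \emph{twin} (its terminology for your ``same-instance'' pair); with that in hand, the mirror step is immediate.

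One simplification you are missing: your ``cross-cover case'' does not actually arise. If $(i'_1,j'_1)\in B_\ell$ has a twin $(i_2,j_2)$, then tightness of $(A_\ell,B_\ell)$ already forces $(i_2,j_2)\in A_\ell$. Indeed, from $(i_2,j_2)\approx(i'_1,j'_1)$ and the tightness inequality $(i,j_1-1)<(i',j_2)$ (using the paper's notation $A_\ell=\ang{i,[j_1,k_1]}$, $B_\ell=\ang{i',[j_2,k_2]}$) one gets $j_2\ge j_1$, and similarly $j_2\le k_1$ from the other end. So only a single-cover split is ever needed, and the minimality argument goes through exactly as in the paper's Lemma~\ref{lem-twin-free-incoherent-core}. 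Your sketch of the two-cover modification is therefore unnecessary, though not wrong in spirit.
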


\noindent
First, we need a definition and a lemma.

\begin{definition}
    Given an extended word equation $(U_1,U_2,<)$, we say that the boundaries $(i,j)$ and $(i',j')$ are \emph{twins} if $U_{i,j} = U_{i',j'}$, $(i,j) \neq (i',j')$, $(i,j) \approx (i',j')$, and $(i,j-1) \approx (i',j'-1)$. We say that a cover $(A,B)$ is \emph{twin-free} if neither $A$ nor $B$ contains a twin.
\end{definition}

\begin{lemma} \label{lem-twin-free-incoherent-core}
    If $(U_1,U_2,<)$ is incoherent, then there is an incoherent core such that no cover impinges on another and every cover is twin-free.
\end{lemma}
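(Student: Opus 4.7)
The plan is to strengthen the minimization argument used in the proof of Lemma~\ref{lem-disjoint-incoherent-core}. I will choose an incoherent core $C = \{(A_\ell, B_\ell) \mid \ell \in [N]\}$ with strictly positive coefficients $\{y_\ell \in \mathbb{N}\}$ satisfying the conditions of Lemma~\ref{lem-incoherent-char}, minimizing $\sum_\ell y_\ell (|A_\ell| + |B_\ell|)$. The argument from the proof of Lemma~\ref{lem-disjoint-incoherent-core} already establishes, for this minimum, that every cover in $C$ is tight and that no cover impinges on another. All that remains is to show that no cover contains a twin.

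For this, suppose for contradiction that some cover $(A_\ell, B_\ell) = (\ang{i,[\ell_0, k_0]}, \ang{i',[\ell'_0, k'_0]})$ contains a twin. I first argue that both members of the twin pair must lie in this cover. If $(i,m) \in A_\ell$ has twin $(i',m')$, then the cover inequalities combined with $(i,m) \approx (i',m')$ force $m' \le k'$, and $(i,m) > (i,\ell_0-1) \gtrsim (i',\ell'_0-1)$ together with $(i,m) \approx (i',m')$ forces $m' \ge \ell'_0$, so $(i',m') \in B_\ell$. In the dual direction, if $(i',m') \in B_\ell$ has twin $(i,m)$, tightness together with Lemma~\ref{lem-tight-cut} gives some $(i,m_0) \in A_\ell$ that $(i',m')$ cuts; the twin equivalences $(i,m) \approx (i',m')$ and $(i,m-1) \approx (i',m'-1)$ then force $m_0 = m$, so $(i,m) \in A_\ell$. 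I now split the cover at the twin, producing
\[
    (A'_1, B'_1) = (\ang{i,[\ell_0, m-1]}, \ang{i',[\ell'_0, m'-1]}) \quad\text{and}\quad (A'_2, B'_2) = (\ang{i,[m+1, k_0]}, \ang{i',[m'+1, k'_0]}).
\]
The twin equivalences guarantee these are valid covers (they supply the inner cover inequalities $(i,m-1) \lesssim (i',m'-1)$ and $(i',m') \lesssim (i,m)$). Replace $(A_\ell, B_\ell)$ in $C$ by these two covers, each with coefficient $y_\ell$. Since $U_{i,m} = U_{i',m'}$, both sides of the inclusion from Lemma~\ref{lem-incoherent-char} change by exactly $-y_\ell \cdot U_{i,m}$, so the inclusion is preserved; the strictness condition is also preserved, because if $(A_\ell, B_\ell)$ was a strict cover then one of $(i', \ell'_0-1) < (i,\ell_0-1)$ or $(i,k_0) < (i',k'_0)$ holds, and in each case the corresponding $(A'_j, B'_j)$ inherits the strict inequality. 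However, $|A'_1| + |A'_2| = |A_\ell| - 1$ and $|B'_1| + |B'_2| = |B_\ell| - 1$, so the new objective is smaller by $2 y_\ell > 0$, contradicting minimality.

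The main subtlety is the step establishing that both members of any twin pair straddling a cover must actually lie in that cover. The $A$-to-$B$ half follows directly from the cover conditions, but the $B$-to-$A$ half uses tightness in an essential way via Lemma~\ref{lem-tight-cut}; this is why the proof needs to re-run the full minimization rather than simply invoke Lemma~\ref{lem-disjoint-incoherent-core} as a black box. The remaining ingredients — checking that $(A'_1, B'_1)$ and $(A'_2, B'_2)$ are valid covers, that the incoherence inclusion and its strictness survive the split, and that the objective strictly decreases — are straightforward bookkeeping parallel to the calculation in Lemma~\ref{lem-disjoint-incoherent-core}.
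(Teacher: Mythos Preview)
Your proposal is correct and follows essentially the same approach as the paper's proof: both minimize $\sum_\ell y_\ell(|A_\ell|+|B_\ell|)$, invoke the argument of Lemma~\ref{lem-disjoint-incoherent-core} to obtain tightness and non-impingement, and then split any cover containing a twin pair into two shorter covers to contradict minimality. The only minor differences are cosmetic: you replace all $y_\ell$ copies at once (the paper replaces just one), and you justify that the twin partner lies in the cover via Lemma~\ref{lem-tight-cut} whereas the paper appeals to tightness more directly; neither changes the substance of the argument.
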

\begin{proof}
    Let $C = \{(A_\ell, B_\ell) \mid \ell \in [N]\}$ be an incoherent core. By Lemma~\ref{lem-incoherent-char}, for some $\{y_\ell \in \mathbb{N} \mid \ell \in [N]\}$, we have
    \begin{equation} \label{subset-4}
        \sum_{\ell \in [N]} y_\ell \cdot U_{B_\ell} \subseteq \sum_{\ell \in [N]} y_\ell \cdot U_{A_\ell}
    \end{equation}
    such that either the inclusion is strict or $y_\ell > 0$ for some $\ell \in [N]$ such that $(A_\ell, B_\ell)$ is a strict cover. We may choose $C$ and the coefficients $\{y_\ell \in \mathbb{N} \mid \ell \in [N]\}$ so that $y_\ell > 0$ for all $\ell \in [N]$ and
    \begin{equation} \label{minimizer-2}
        \sum_{\ell \in [N]} y_\ell \cdot (|A_\ell| + |B_\ell|)
    \end{equation}
    is minimal. As in Lemma~\ref{lem-disjoint-incoherent-core}, all the covers in $C$ are tight and no cover impinges on another.

    We claim that every cover is twin-free. For the sake of contradiction, suppose otherwise. Without loss of generality, $(A_1,B_1)$ is not twin-free. Since $(A_1,B_1)$ is tight, both $A_1$ and $B_1$ contain a twin. Let $(i,j)$ and $(i',j')$ be twins such that $(i,j) \in A_1$ and $(i',j') \in B_1$. We will contradict our assumption that \eqref{minimizer-2} is minimal.
    
    Write
    \begin{align*}
        A_1 &= \ang{i, [j_1,k_1]} \\
        B_1 &= \ang{i', [j_2,k_2]}.
    \end{align*}
    Let
    \begin{align*}
        A'_1 &= \ang{i, [j_1,j-1]} \\
        B'_1 &= \ang{i', [j_2,j'-1]} \\
        A''_1 &= \ang{i, [j+1,k_1]} \\
        B''_1 &= \ang{i', [j'+1,k_2]}.
    \end{align*}
    Both $(A'_1,B'_1)$ and $(A''_1,B''_1)$ are covers. We have
    \[
        U_{A_1} + U_{B'_1} + U_{B''_1} = U_{A_1} + U_{B_1} - U_{i',j'} = U_{A_1} + U_{B_1} - U_{i,j} = U_{B_1} + U_{A'_1} + U_{A''_1}.
    \]
    Hence,
    \[
        U_{A_1} + U_{B'_1} + U_{B''_1} + \sum_{\ell \in [N]} y_\ell \cdot U_{B_\ell} \subseteq U_{B_1} + U_{A'_1} + U_{A''_1} + \sum_{\ell \in [N]} y_\ell \cdot U_{A_\ell}.
    \]
    Canceling terms, we get
    \begin{equation} \label{subset-5}
        U_{B'_1} + U_{B''_1} + (y_1 - 1) \cdot U_{B_1} + \sum_{\ell \in [2,N]} y_\ell \cdot U_{B_\ell} \subseteq U_{A'_1} + U_{A''_1} + (y_1 - 1) \cdot U_{A_1} + \sum_{\ell \in [2,N]} y_\ell \cdot U_{A_\ell}.
    \end{equation}
    If the inclusion \eqref{subset-4} is strict, then so is \eqref{subset-5}. If $(A_1,B_1)$ is a strict cover, then at least one of $(A'_1,B'_1)$ and $(A''_1,B''_1)$ is a strict cover. Thus, \eqref{subset-5} satisfies the condition from Lemma~\ref{lem-incoherent-char}.

    We have
    \[
        |A'_1| + |B'_1| + |A''_1| + |B''_1| < |A_1| + |B_1|.
    \]
    This inequality together with the inclusion \eqref{subset-5} contradicts our assumption that \eqref{minimizer-2} is minimal.
\end{proof}

We can now prove Proposition~\ref{prop-acyclic-coherent}.
\begin{proof}[Proof of Proposition~\ref{prop-acyclic-coherent}]
    We prove the contrapositive. Suppose that $(U_1,U_2,<)$ is incoherent. By Lemma~\ref{lem-twin-free-incoherent-core}, $(U_1,U_2,<)$ has an incoherent core $C = \{(A_\ell, B_\ell) \mid \ell \in [N]\}$ such that no cover impinges on another and every cover is twin-free.
    
    By Lemma~\ref{lem-incoherent-char}, there are some $\{y_\ell \in \mathbb{N} \mid \ell \in [N]\}$ such that
    \begin{equation} \label{subset-6}
        \sum_{\ell \in [N]} y_\ell \cdot U_{B_\ell} \subseteq \sum_{\ell \in [N]} y_\ell \cdot U_{A_\ell}.
    \end{equation}
    By Lemma~\ref{lem-cover-cut}, for any $\ell \in [N]$, if $(i_1,j_1) \in A_\ell$ cuts some $(i'_1,j'_1)$, then $(i'_1,j'_1) \in B_\ell$. In particular, for any $\ell \in [N]$, every $(i_1,j_1) \in A_\ell$ cuts some $(i'_1,j'_1) \in B_\ell$.

    Next, we claim that for any $\ell \in [N]$, every $(i'_1,j'_1) \in B_\ell$ mirrors some $(i_2,j_2) \in A_{\ell'}$ for some $\ell' \in [N]$. Let $(i'_1,j'_1) \in B_\ell$. By the inclusion \eqref{subset-6}, there is some $(i_2,j_2) \in A_{\ell'}$ with $U_{i'_1,j'_1} = U_{i_2,j_2}$. Since no cover of $C$ impinges on another, $(i'_1,j'_1) \neq (i_2,j_2)$. Since every cover of $C$ is twin-free, $(i'_1,j'_1) \not\approx (i_2,j_2)$ or $(i'_1,j'_1 - 1) \not\approx (i_2,j_2 - 1)$. Hence, $(i'_1,j'_1)$ mirrors $(i_2,j_2)$.

    Therefore, for any $(i_1,j_1) \in A_\ell$, there is some $(i_2,j_2) \in A_{\ell'}$ such that $((i_1,j_1), (i_2,j_2)) \in E$, where $G_{U_1,U_2,<} = (B,E)$. Hence, we have an infinite walk in $G_{U_1,U_2,<}$. It follows that $G_{U_1,U_2,<}$ is cyclic.
\end{proof}

\section{Future directions}

The most obvious future direction is to find a necessary and sufficient condition for a coherent extended word equation to be terminating. This will require a better understanding of non-staggered extended word equations. It is somewhat vague what would count as a satisfactory necessary and sufficient condition, but we at least expect the following to be true.
\begin{conjecture}
    It is decidable whether a coherent extended word equation is terminating.
\end{conjecture}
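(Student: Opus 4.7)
The plan is to give a decision procedure for termination that builds on the paper's characterization ``terminating iff $\mathcal{G}_{U_1,U_2,<}$ is finite and acyclic''. By K\H{o}nig's lemma it suffices to either compute a finite portion of $\mathcal{G}_{U_1,U_2,<}$ that provably represents the entire reachable graph (in the terminating case) or exhibit a witness to non-termination (in the non-terminating case). The first step of the algorithm is immediate: compute $G_{U_1,U_2,<}$, and if it is acyclic, Theorem~\ref{thm-sufficient} returns ``terminating'' together with the explicit bound $2^N$.

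If the cut graph is cyclic, I would try to reduce to the hereditarily staggered case, where Theorem~\ref{thm-necessary} already decides the problem as ``non-terminating''. The remaining obstruction is a coherent extended word equation with cyclic cut graph that is either not staggered, or staggered but not hereditarily so. In either situation, either the equation itself or one of its coherent extended Nielsen successors has a nontrivial boundary equivalence $(1,j) \approx (2,j')$. My idea is to use such an equivalence to split the equation into two strictly smaller extended word equations on $(U_{1,1} \cdots U_{1,j}, U_{2,1} \cdots U_{2,j'})$ and $(U_{1,j+1} \cdots U_{1,m}, U_{2,j'+1} \cdots U_{2,n})$, with the inherited boundary orders, and decide termination recursively on each half. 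The recursion is well-founded because each split strictly reduces the total number of boundaries, and the base case (no nontrivial equivalences, i.e., staggered) is handled by iterating the previous paragraph after one Nielsen step.

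The main obstacle is proving the \emph{splitting lemma}: that termination of a non-staggered coherent extended word equation is equivalent to termination of both of its halves. Extended Nielsen transformations act only at the leftmost boundary, so morally the two halves are processed sequentially---first the left, then the right---and when the left half has been fully consumed what remains is exactly the right half (up to substitution). The complication is that a case II or III transformation substitutes throughout the equation, modifying the right half whenever the replaced variable appears there. A careful inductive argument, parameterized by the total size and the number of variables shared between the two halves, would need to show that any infinite walk on the whole either never exhausts the left half (and therefore projects to an infinite walk on the left half) or exhausts the left half and then projects to an infinite walk on the right half (with the left-induced substitutions applied). If the splitting reduction proves too delicate to push through directly, a fallback is a well-quasi-ordering style pumping argument: if some walk in $\mathcal{G}_{U_1,U_2,<}$ encounters two equations $(U'_1, U'_2, <')$ and $(U''_1, U''_2, <'')$ with $(U'_1, U'_2)$ suitably embedded in $(U''_1, U''_2)$ and their boundary orders compatibly related, one iterates the intervening sequence of transformations to extract an infinite walk, thereby making unbounded growth detectable in finite time.
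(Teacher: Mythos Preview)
This statement is presented in the paper as an open \emph{conjecture}; the paper offers no proof, so there is nothing to compare your attempt against. What you have written is a proof \emph{strategy}, and you correctly locate its central gap yourself: the splitting lemma. As you phrase it---termination of a non-staggered equation is equivalent to termination of its two halves with the \emph{inherited} boundary orders---the lemma is very likely false. A case~II or~III transformation applied while consuming the left half performs a global substitution that rewrites every occurrence of $U_{1,1}$ (or $U_{2,1}$) in the right half as well; different terminating runs on the left half therefore leave behind \emph{different} transformed right halves, none of which need coincide with the original right half, and any of which may be strictly larger than the whole original equation. So neither the claimed equivalence nor the well-foundedness of your recursion (``each split strictly reduces the total number of boundaries'') is justified: the quantity that shrinks at a split can grow again without bound before the next nontrivial equivalence appears.

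There is a second, independent gap. Your case analysis presupposes that you can decide whether a staggered equation is \emph{hereditarily} staggered, so as to invoke Theorem~\ref{thm-necessary} in that branch. But hereditary staggeredness is a coinductive (greatest-fixed-point) property quantifying over the a~priori infinite graph $\mathcal{G}_{U_1,U_2,<}$, and you give no procedure for deciding it; ``iterating the previous paragraph after one Nielsen step'' fails to halt precisely in the hereditarily-staggered, cyclic-cut-graph case you most need to detect. The fallback well-quasi-ordering sketch is too vague to assess: you would have to specify an embedding relation on extended word equations, prove it is a WQO, and prove it is compatible with extended Nielsen transformations in the sense that an embedded pair along a walk can actually be pumped---each of these is a substantial claim, and none is supplied.
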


It would also be interesting to prove results about which coherent extended word equations are weakly terminating. We expect that the methods developed in this paper can help make progress on that front too.

\section*{Acknowledgments}
This work was funded in part by the Stanford Center for Automated Reasoning (Centaur) and by the Stanford CURIS program for undergraduate research. We are grateful to Joel Day, Harun Khan, Ying Sheng, and Yoni Zohar for helpful discussions.

\bibliography{bib}
\bibliographystyle{amsplain}

\end{document}